\theoremstyle{plain}
\newtheorem{theorem}{Theorem}
\newtheorem*{theorem*}{Theorem}
\newtheorem{corollary}{Corollary}
\newtheorem*{claim*}{Claim}
\newtheorem{lemma}{Lemma}
\newtheorem*{lemma*}{Lemma}
\newtheorem{proposition}{Proposition}
\newtheorem*{proposition*}{Proposition}
\theoremstyle{remark}
\newtheorem{definition}{Definition}
\newtheorem{remark}{Remark}
\newtheorem{assumption}{Assumption}
\newtheorem{example}{Example}
\renewcommand\thmcontinues[1]{Continued}
\newcommand{\optionaldesc}[2]{%
  \phantomsection
  #1\protected@edef\@currentlabel{#1}\label{#2}%
}
\newcommand{\bern}{\textnormal{Bern}}
\newcommand{\eff}{\textnormal{eff}}
\newcommand{\expit}{\textnormal{expit}}
\newcommand{\rct}{\textnormal{rct}}
\newcommand{\obs}{\textnormal{obs}}
\newcommand{\tgt}{\textnormal{tgt}}
\newcommand{\cov}{\textnormal{cov}}
\newcommand{\indic}{\bm{1}}
\newcommand{\ba}{\textnormal{ba}}
\newcommand{\ca}{\mathcal{A}}
\newcommand{\cf}{\mathcal{F}}
\newcommand{\ch}{\mathcal{H}}
\newcommand{\ci}{\mathcal{I}}
\newcommand{\cj}{\mathcal{J}}
\newcommand{\cn}{\mathcal{N}}
\newcommand{\cp}{\mathcal{P}}
\newcommand{\crr}{\mathcal{R}}
\newcommand{\cs}{\mathcal{S}}
\newcommand{\ct}{\mathcal{T}}
\newcommand{\cu}{\mathcal{U}}
\newcommand{\cw}{\mathcal{W}}
\newcommand{\cv}{\textnormal{cv}}
\newcommand{\cx}{\mathcal{X}}
\newcommand{\os}{\textnormal{os}}
\renewcommand{\leq}{\leqslant}
\renewcommand{\geq}{\geqslant}
\renewcommand{\Pr}{\textnormal{Pr}}
\newcommand{\dunif}{\mathbb{U}}
\newcommand{\e}{\mathbb{E}}
\newcommand{\real}{\mathbb{R}}
\newcommand{\cm}{\mathcal{M}}
\newcommand{\var}{\mathrm{Var}}
\newcommand{\tr}{\textnormal{tr}}
\DeclareMathOperator*{\argmin}{arg\,min}
\title{Efficient estimation and data fusion under general semiparametric restrictions on outcome mean functions}
\author{Harrison H. Li, Stanford University}
\date{}
\begin{document}
\maketitle
\begin{abstract}
We provide a novel characterization of semiparametric efficiency in a generic supervised learning setting where the outcome mean function --- defined as the conditional expectation of the outcome of interest given the other observed variables ---  is restricted to lie in some known semiparametric function class.
The primary motivation is causal inference where a researcher running a randomized controlled trial often has access to an auxiliary observational dataset that is confounded or otherwise biased for estimating causal effects.
Prior work has imposed various bespoke assumptions on this bias in an attempt to improve precision via data fusion.
We show how many of these assumptions can be formulated as restrictions on the outcome mean function in the concatenation of the experimental and observational datasets.
Then our theory provides a unified framework to maximally leverage such restrictions for precision gain by constructing efficient estimators in all of these settings as well as in a wide range of others that future investigators might be interested in.
For example,
when the observational dataset is subject to outcome-mediated selection bias,
we show our novel efficient estimator dominates an existing control variate approach
both asymptotically and in 
numerical studies.
\end{abstract}

\section{Introduction}
Consider a generic supervised learning setup with independent and identically distributed (i.i.d.) observations $W_i=(Y_i,R_i)_{i=1}^N$,
where the outcome $Y_i \in \real$ is scalar.
Omitting the observation index $i$ for brevity,
let the \emph{outcome mean function}
\[
m(r) = \e[Y \mid R=r]
\]
be the conditional expectation of the outcome given the covariates.
We seek a novel characterization of the \emph{semiparametric efficiency bound} --- the smallest asymptotic variance attainable by any regular and asymptotically linear (RAL) estimator of any smooth (pathwise differentiable) and identified estimand $\tau \in \real^d$ --- 
in a class of semiparametric models $\cp_{\cm}$ that restrict the outcome mean function $m(\cdot)$ to lie in a known collection $\cm$ of square integrable real-valued functions on the covariate space $\crr$.
This facilitates the construction of efficient one-step estimators in the model $\cp_{\cm}$.

A key motivation for this problem comes from data fusion problems in causal inference.
The economic and ethical challenges of running a large randomized controlled trial (RCT) has spurred significant work on combining data from RCT's with much larger and easily obtained observational datasets to improve the precision of causal estimates.
However, due to factors like unobserved confounding or selection,
causal effects are generally not identifiable from observational data where treatment is not randomized.
Then, naive approaches that simply combine observations from an RCT and an observational study into a single \emph{fused dataset} will lead to biased causal estimates,
even (indeed, most saliently) with a massive observational dataset~\citep{bareinboim2016causal}.

On the other hand,
without any assumptions about the biases in the observational dataset,
one typically cannot use the outcomes in the observational data to improve the asymptotic precision of causal estimates.
Intuitively,
this is because without any linkage between the outcome distributions in the two datasets,
the outcomes in the observational dataset cannot possibly provide any statistical information about the outcome distribution in the RCT,
which typically identifies causal parameters of interest.
For example,
the semiparametric efficiency bound for the average treatment effect $\tau_{\rct}$ in the population in which the RCT was conducted can be attained by estimators such as the augmented inverse propensity weighting (AIPW) estimator~\citep{robins_1994_estimation, tsiatis2006semiparametric, rothe_value_2016} that only use observations from the RCT~\citep{li_improving_2023,liao_transfer_2023}.
Even if we wish to estimate a ``transported" average treatment effect on the population defined by the covariate distribution in the \emph{observational} dataset,
the efficiency bound can be obtained by estimators using just the RCT observations and the covariates in the observational dataset.
The outcomes in the observational dataset cannot provide further asymptotic variance reductions without using non-regular estimators~\citep{dahabreh_generalizing_2019,dahabreh_extending_2020,
lee_improving_2023,li_note_2023}.


Many authors have thus chosen another route: make structural assumptions linking the outcome distributions in the two datasets to enable precision gains.
Through various examples in Section~\ref{sec:examples},
we show that such assumptions often correspond precisely to imposing a model $\cp_{\cm}$
on the fused dataset,
for an appropriate choice of $\cm$.
In Section~\ref{sec:semiparametric_efficiency}
we present our main theoretical result,
Theorem~\ref{thm:semiparametric_tangent_space},
which characterizes semiparametric efficiency in a general model of the form $\cp_{\cm}$.
We show how to use this result to construct efficient one-step estimators in Section~\ref{sec:estimation_inference}.
Such estimators depend on first-stage estimation of nuisance parameters including
the outcome mean function $m(\cdot)$.
As well-established in the literature on double/debiased machine learning,
under appropriate regularity conditions and cross-fitting,
the estimators can attain the efficiency bound even if such nuisance functions are estimated at slower than parametric rates.

In Section~\ref{sec:applications},
we provide specific efficiency bounds and estimators in two of the particular data fusion settings from the literature
involving a linear confounding bias (see Example~\ref{ex:parametric_confounding_bias} below)
and outcome-mediated selection bias (see Example~\ref{ex:outcome_selection_bias}).
To our knowledge,
efficient estimators have not previously been derived in either of these settings.
We provide numerical simulations and a real data example based on the Tennessee Student Teacher Achievement Ratio (STAR) study in Sections~\ref{sec:simulations_data_fusion} and~\ref{sec:tennessee_star}, respectively,
to illustrate the finite-sample performance of our novel estimators,
which provably dominate the existing inefficient proposals asymptotically.
We conclude with a brief discussion of related work in Section~\ref{sec:discussion_data_fusion}.

\subsection{Examples}
\label{sec:examples}

We now provide several motivating examples from the literature of the semiparametric model $\cp_{\cm}$ introduced previously.
Recall that each such model restricts the joint distribution of $W$ so that the outcome mean function $m(\cdot)$ belongs to a particular known function collection $\cm$.

\begin{example}[Restricted moment model]
\label{ex:restricted_moment_model}
If $\cm_1$ is a smooth finite-dimensional parametric class,
i.e. $\cm_1=\{r \mapsto \mu(r;\beta), \beta \in \real^q\}$,
the model $\cp_{\cm_1}$ is known as the ``restricted moment model,"
and is a textbook motivating example of semiparametric theory (e.g., Ch. 4 of~\citet{tsiatis2006semiparametric}).
\end{example}

The remaining examples are data fusion settings in causal inference,
where a model of the form $\cp_{\cm}$ applies to the fused observational and experimental dataset.
The fused dataset has $R=(S,Z,X)$ for
$S$ a binary indicator of which dataset the observation is taken from ($S=1$ denotes experimental, $S=0$ denotes observational),
$Z$ a binary indicator of treatment status,
and $X \in \cx$ a vector of observed covariates.
Examples with more than two datasets and/or treatment statuses follow easily but are notationally cumbersome.
For $(s,z) \in \{0,1\}^2$ we write $m_{sz}(x)$ as a shorthand for $m(s,z,x)$.
We also let $\ch_V$ denote the set of square integrable functions of any variable or set of variables $V$,
which contains $\ch_V^0$,
the set of such functions with mean zero under $P^*$,
the true distribution of $W$.
For example,
explicitly we have $\ch_W = \{f: \cw \mapsto \real \mid \e[f^2(W)] < \infty\}$;
unless otherwise noted,
all expectations are to be taken with respect to $P^*$,
i.e. $\e[f(W)] \equiv \int f(w)dP^*(w)$.
We recall that $\ch_W$ is a Hilbert space equipped with the inner product $\langle f,g\rangle = \e[f(W)g(W)]$,
which does not distinguish between functions equal almost-surely under $P^*$.

\begin{example}[Mean-exchangeable controls, \citet{li_improving_2023}]
\label{ex:mean_exchangeable_controls}
Suppose the observational dataset consists only of untreated subjects or ``controls" (i.e., $\Pr(Z=0 \mid S=0)=1$)
and assume that these controls are mean-exchangeable with those in the experimental dataset,
i.e., $m_{10}(x)=m_{00}(x) \forall x \in \cx$.
This corresponds to the model $\cp_{\cm_2}$ where
\begin{equation}
\label{eq:mean_exchangeable_controls}
\cm_2 = \{r \mapsto zm_1(x)+(1-z)m_0(x) \mid m_0,m_1 \in \ch_X\}.
\end{equation}
\end{example}

\begin{example}[Parametric confounding bias and CATE, \cite{yang2024datafusion}]
\label{ex:parametric_confounding_cate}
Assuming treatment is randomized in the experimental dataset,
the difference $\tau(x;m) = m_{11}(x)-m_{10}(x)$ between the conditional average outcomes in the treated and control groups in the experimental dataset given $X=x$
is well known to have a causal interpretation in the Neyman-Rubin potential outcomes framework as the conditional average treatment effect (CATE).
The analogous difference $m_{01}(x)-m_{00}(x)$ in the observational dataset,
on the other hand,
may not have a causal interpretation due to unobserved confounders.~\citet{yang2024datafusion} consider a model where both the CATE $\tau(x)$ and the differences in these differences $\lambda(x;m) = (m_{01}(x)-m_{00}(x)) - \tau(x)$,
called the \emph{confounding function},
lie in smooth finite-dimensional parametric classes.
This corresponds to the model $\cp_{\cm_3}$ for
\begin{equation}
\label{eq:parametric_confounding_cate}
\cm_3 = \{m \in \ch_R \mid \tau(x;m) = \mu(x;\beta), \lambda(x;m) = \phi(x;\theta), \beta \in \real^p, \theta \in \real^q\}.
\end{equation}
\end{example}

\begin{example}[Linear confounding bias, \citet{kallus_removing_2018}]
\label{ex:parametric_confounding_bias}
Continuing the notation of Example~\ref{ex:parametric_confounding_cate},
we place no restrictions on the CATE $\tau(x;m)$ while assuming the confounding function is linear in a $q$-dimensional basis expansion $\psi$.
This is the model $\cp_{\cm_4}$ for
\begin{equation}
\label{eq:parametric_confounding_bias}
\cm_4 = \{m \in \ch_R \mid \lambda(x;m) = \psi(x)^{\top}\theta, \theta \in \real^q\}.
\end{equation}
\end{example}

\begin{example}[Outcome-mediated selection bias, \citet{guo2022multi}]
\label{ex:outcome_selection_bias}
Suppose that $Y \in \{0,1\}$
and that the observational dataset is subject to ``outcome-mediated selection bias."
This means that the observational dataset is subject to selection that depends only on $Y$ (and not on $(Z,X)$).
For example,
``controls" with $Y=0$ are omitted in the observational dataset more often than ``cases" with $Y=1$,
and omission occurs independently of $(Z,X)$ given $Y$.~\citet{guo2022multi} show in their Proposition 4.1 that outcome-mediated selection bias implies
\begin{equation}
\label{eq:odds_ratio}
OR_1(x) = OR_0(x), \qquad OR_s(x) := \frac{m_{s1}(x)/(1-m_{s1}(x))}{m_{s0}(x)/(1-m_{s0}(x))}, \quad s=0,1.
\end{equation}
Thus, outcome-mediated selection bias implies the model $\cp_{\cm_5}$ for
\begin{equation}
\label{eq:outcome_selection_bias}
\cm_5 = \{m \in \ch_R \mid \eqref{eq:odds_ratio} \text{ holds}, \  \epsilon < m(r) < 1-\epsilon \forall r \in \crr\}.
\end{equation}
\end{example}
We will show that our results can recover various semiparametric efficiency bounds and efficient estimators in the models of Examples~\ref{ex:restricted_moment_model},~\ref{ex:mean_exchangeable_controls}, and~\ref{ex:parametric_confounding_cate}
that were previously derived in the cited works.
The model in Example~\ref{ex:parametric_confounding_bias} was studied by~\citet{kallus_removing_2018} but for estimating the CATE rather than finite dimensional estimands.
Turning to Example~\ref{ex:outcome_selection_bias},~\citet{guo2022multi} propose a control variate approach to leverage~\eqref{eq:odds_ratio} for variance reduction.
In general, however, these estimators are inefficient,
and do not come with clear guidance on how to choose control variates to minimize variance.
Thus, we use our results to derive novel efficiency bounds and efficient estimators for various average treatment effects in the models $\cp_{\cm_4}$ and $\cp_{\cm_5}$ in Examples~\ref{ex:parametric_confounding_bias} and~\ref{ex:outcome_selection_bias}.
The estimators are given explicitly in Section~\ref{sec:applications}.

\section{Semiparametric efficiency bounds}
\label{sec:semiparametric_efficiency}

We return to the generic supervised learning setting of the introduction and present our main result characterizing semiparametric efficiency bounds for the models $\cp_{\cm}$.
Formally, the model $\cp_{\cm}$ is a collection of distributions $P$ on $W$ with outcome mean functions in $\cm$ assumed to contain the true data-generating process $P^*$.
The semiparametric efficiency bound for an a pathwise differentiable estimand $\tau=\tau(P)$ at its true value $\tau^* = \tau(P^*)$
is defined as the smallest asymptotic variance attainable by any RAL estimator of $\tau$.
The reader is encouraged to reference texts on semiparametric theory~\citep{bickel1993efficient,vandervaart2000asymptotic,tsiatis2006semiparametric} for further background and technical details;
we present the main concepts necessary for our discussion below.

By definition of asymptotic linearity, any RAL estimator of $\tau \in \real^d$ must satisfy 
the asymptotic expansion
\begin{equation}
\label{eq:AL}
\sqrt{N}(\hat{\tau}-\tau^*) = \frac{1}{\sqrt{N}} \sum_{i=1}^N \varphi(W_i;\tau^*,\eta^*) + o_p(1)
\end{equation}
as $N \rightarrow \infty$
where $\varphi(\cdot;\tau^*,\eta^*) \in (\ch_W^0)^d$ is known as the \emph{influence function} of $\tau$.
The notation $\varphi(\cdot;\tau^*,\eta^*)$ emphasizes that in general,
the influence function $\varphi$ can depend on both the estimand $\tau$ of interest
and nuisance parameters $\eta$ of interest,
which are often infinite dimensional functionals,
like the outcome mean function $m$,
of the distribution $P$ in the model $\cp_{\cm}$. 
Regularity further restricts the space of possible influence functions of RAL estimators of $\tau$ under the model $\cp_{\cm}$ within $\ch_W^0$,
as discussed below.
The \emph{efficient influence function} $\varphi_{\eff}$ is the unique influence function in this space with the smallest variance
(with respect to the Loewner ordering when $d>1$).
Then the semiparametric efficiency bound is $V_{\eff} = \e[\varphi_{\eff}(W)^{\otimes 2}]$,
where $v^{\otimes 2} = vv^{\top}$ for any vector $v$.

Theorem~\ref{thm:influence_functions} below
is a standard result that characterizes the space of all influence functions of RAL estimators of any pathwise differentiable estimand $\tau \in \real^d$ in the model $\cp_{\cm}$
as an affine translation of a space $(\ct_{\cm}^d)^{\perp}$,
the orthogonal complement of the \emph{semiparametric tangent space} $\ct_{\cm}^d$ in $(\ch_W^0)^d$.
The semiparametric tangent space $\ct_{\cm}^d$
is defined as the closure (with respect to $\ch_W^0$) of the set of all $d$-dimensional linear combinations of score functions of all smooth finite-dimensional parametric submodels $P_{\gamma} \subseteq \cp_{\cm}$.
A \emph{parametric submodel} $P_{\gamma}$ of $\cp_{\cm}$ refers to a subset of distributions in $\cp_{\cm}$ indexed smoothly
by a finite-dimensional parameter $\gamma$ in an open neighborhood of 0,
with $P_0=P^*$.
See~\citet{newey1990semiparametric} for the technical details on smoothness requirements for parametric submodels.
The score function of a parametric submodel $P_{\gamma}$ is the derivative of the log likelihood with respect to $\gamma$,
evaluated at the truth $\gamma=0$.

\begin{theorem}[Theorem 4.3,~\citet{tsiatis2006semiparametric}]
\label{thm:influence_functions}
Suppose a RAL estimator of a pathwise differentiable estimand $\tau \in \real^d$ in the model $\cp_{\cm}$ exists with influence function $\varphi_0=\varphi_0(\cdot;\tau^*,\eta^*)$.
Further assume the semiparametric tangent space $\ct_{\cm}^d$ for the model $\cp_{\cm}$ is a closed linear subspace of $(\ch_W^0)^d$.
Then:
\begin{enumerate}
    \item The influence function $\varphi=\varphi(\cdot;\tau^*,
\eta^*)$ of any RAL estimator of $\tau$ must satisfy $\varphi-\varphi_0 \in (\ct_{\cm}^d)^{\perp}$,
where $(\ct_{\cm}^d)^{\perp}$ is the orthogonal complement of $\ct_{\cm}^d$ in $(\ch_W^0)^d$.
\item The efficient influence function is uniquely given by
\begin{equation}
\label{eq:eif_raw}
\varphi_{\eff}=\Pi(\varphi_0;\ct_{\cm}^d)= \varphi_0  - \argmin_{g \in (\ct_{\cm}^d)^{\perp}} \e[\|\varphi_0(W;\tau^*,\eta^*)-g(W)\|^2]
\end{equation}
where $\Pi(f;\cs)$ denotes the projection of $f$ onto the subspace $\cs \subseteq (\ch_W^0)^d$.
\end{enumerate}
\end{theorem}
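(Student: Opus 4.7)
The plan is to reduce both claims to standard properties of influence functions in parametric submodels and then apply Hilbert-space projection. The key preliminary step I would establish is that for any regular parametric submodel $\{P_\gamma\}$ of $\cp$ passing through $P^*$ at $\gamma=0$ with score $S_\gamma \in \ct$, any influence function $\varphi$ of a RAL estimator of $\tau$ must satisfy
\begin{equation*}
\frac{\partial}{\partial \gamma} \tau(P_\gamma)\Big|_{\gamma=0} = \e[\varphi(W) S_\gamma(W)^\top].
\end{equation*}
This follows because RAL in $\cp$ restricts to RAL in any parametric submodel, and in a parametric family this identity is the classical characterization of influence functions of regular estimators, provable via Le Cam's third lemma applied to the expansion~\eqref{eq:AL}.

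For claim (1), subtracting the above identity for $\varphi_0$ from the one for $\varphi$ gives $\e[(\varphi - \varphi_0) S_\gamma^\top]=0$ for every parametric submodel score $S_\gamma$, i.e.\ $\varphi-\varphi_0$ is (componentwise) orthogonal to every element of the dense subset of $\ct^d$ formed by finite linear combinations of such scores. Since $\ct^d$ is closed by hypothesis, the orthogonality extends to all of $\ct^d$, yielding $\varphi-\varphi_0 \in (\ct^d)^\perp$.

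For claim (2), claim (1) parameterizes the set of candidate influence functions as $\varphi = \varphi_0 + g$ with $g \in (\ct^d)^\perp$. I would then decompose $\varphi_0 = \Pi(\varphi_0;\ct^d) + \Pi(\varphi_0;(\ct^d)^\perp)$ and rewrite
\begin{equation*}
\varphi_0 + g = \Pi(\varphi_0;\ct^d) + \bigl(\Pi(\varphi_0;(\ct^d)^\perp) + g\bigr),
\end{equation*}
where the two summands lie in the orthogonal subspaces $\ct^d$ and $(\ct^d)^\perp$ of $(\ch^0)^d$. A direct computation shows that the cross-covariance matrix between these two components vanishes entrywise, so by a Pythagorean identity
\begin{equation*}
\e[(\varphi_0+g)^{\otimes 2}] = \e[\Pi(\varphi_0;\ct^d)^{\otimes 2}] + \e\bigl[\bigl(\Pi(\varphi_0;(\ct^d)^\perp) + g\bigr)^{\otimes 2}\bigr].
\end{equation*}
The second matrix is positive semidefinite and vanishes if and only if $g = -\Pi(\varphi_0;(\ct^d)^\perp)$, so $\varphi_{\eff} = \Pi(\varphi_0;\ct^d)$ is the unique minimizer in the Loewner order, as claimed.

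The main obstacle I anticipate is the rigorous justification of the pathwise derivative identity in the preliminary step, since it requires transferring regularity from $\cp$ to every parametric submodel and verifying that $\gamma \mapsto \tau(P_\gamma)$ is differentiable with the stated inner-product representation. This hinges on contiguity arguments, Le Cam's third lemma, and uniform integrability of likelihood ratios, all of which are standard but delicate; I would invoke the treatment in~\citet{tsiatis2006semiparametric} or~\citet{newey1990semiparametric} rather than redevelop the machinery. The Hilbert-space arguments in the subsequent steps are then essentially mechanical.
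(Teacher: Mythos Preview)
The paper does not give its own proof of this statement: it is quoted verbatim as Theorem~4.3 of \citet{tsiatis2006semiparametric} and used as a black-box tool. Your proposal is the standard argument from that reference---the pathwise-derivative identity $\partial_\gamma \tau(P_\gamma)|_{\gamma=0} = \e[\varphi S_\gamma^\top]$ followed by subtraction for claim~(1) and an orthogonal decomposition with a Loewner-order Pythagorean argument for claim~(2)---so there is nothing to compare against and your outline is correct.

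One small point worth tightening in claim~(2): you parameterize the ``candidate influence functions'' as $\varphi_0 + g$ with $g \in (\ct^d)^\perp$ and then minimize over this set, but claim~(1) only shows that every influence function lies in this affine set, not that every element of the set is the influence function of some RAL estimator. This does not invalidate the argument, since you are really showing that the asymptotic variance of \emph{any} RAL estimator is bounded below (in the Loewner order) by $\e[\Pi(\varphi_0;\ct^d)^{\otimes 2}]$, which is exactly the semiparametric efficiency bound; the projection $\Pi(\varphi_0;\ct^d)$ is then \emph{defined} to be the efficient influence function (equivalently, the canonical gradient), and whether a RAL estimator achieving it exists is a separate construction question. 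Just be careful to phrase the conclusion as a lower bound on all influence functions rather than as a minimum attained over the set of influence functions.
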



Theorem~\ref{thm:influence_functions} does not make use of the fact that $\cp_{\cm}$ is a model that specifically restricts the outcome mean function $m$. 
Our contribution, 
in Theorem~\ref{thm:semiparametric_tangent_space} below,
is to specify the semiparametric tangent space for a wide range of sets $\cm$
in terms of another space we call the \emph{outcome mean function tangent space} $\cs_{\cm}$.
Note it suffices to consider $d=1$ hereafter,
since $\ct_{\cm}^d$ is just the $d$-fold Cartesian product of $\ct_{\cm}^1 \equiv \ct_{\cm}$.

\begin{definition}
\label{def:mean_function_submodel}
An \emph{outcome mean function parametric submodel} for $\cm \subseteq \ch_R$ is any collection of mean functions $\{m(\cdot;\gamma) \mid \gamma \in \cu\}$ contained in $\cm$ and indexed by $\gamma$ in an open neighborhood $U$ of $0 \in \real^s$ such that $m(r;0)=m^*(r)$ and $\partial m(r;\gamma)/\partial \gamma$ exists for $P^*$-almost every $r \in \crr$ and all $\gamma \in U$
with $\partial m(\cdot;\gamma)/\partial \gamma|_{\gamma=0} \in (\ch_R)^s$.
The \emph{outcome mean function tangent space} $\cs_{\cm}$ for $\cm$ is then the closure (in the Hilbert space $\ch_R$) of the function collection 
\[
r \mapsto c^{\top}\frac{\partial m(r;\gamma)}{\partial \gamma}\Big|_{\gamma=0} 
\]
indexed by $c \in \real^s$,
positive integers $s \geq 1$,
and $s$-dimensional parametric submodels $\{m(\cdot;\gamma)\}$.
\end{definition}


\begin{theorem}
\label{thm:semiparametric_tangent_space}
Suppose $\{W_i=(R_i,Y_i)\}_{i=1}^N$ are i.i.d. from some distribution $P^* \in \cp_{\cm}$ with $Y_i \in \real$ where $\cm$ is such that the corresponding outcome mean function tangent space $\cs_{\cm}^1 \equiv \cs_{\cm}$ is a linear subspace of $\ch_R$.
Then $\ct_{\cm}^{\perp}$,
the orthogonal complement in $\ch_W^0$ of the semiparametric tangent space $\ct_{\cm}$ for the model $\cp_{\cm}$,
satisfies
\begin{align}
\ct_{\cm}^{\perp} & = \{w \mapsto h(r)(y-m^*(r)) \mid h \in \cs_{\cm}^{\perp}\} \label{eq:T_M_perp_generalized}
\end{align}
where $\cs_{\cm}^{\perp}$ is the orthogonal complement of $\cs_{\cm}$ in $\ch_R$.
\end{theorem}
\begin{proof}
See Appendix~\ref{app:semiparametric_tangent_space_proof}.
\end{proof}

Putting together Theorem~\ref{thm:semiparametric_tangent_space} with~\eqref{eq:eif_raw} suggests the following three-step outline for computing the EIF of an estimand $\tau$ in the model $\cp_{\cm}$:
\begin{enumerate}
\item Find an influence function $\varphi_0$ of a RAL estimator of $\tau$ in $\cp_{\cm}$
\item Compute the space $\cs_{\cm}$
\item Compute the orthogonal complement $\cs_{\cm}^{\perp}$,
which specifies $\ct_{\cm}^{\perp}$ per Theorem~\ref{thm:semiparametric_tangent_space},
enabling computation of the projection~\eqref{eq:eif_raw}.
\end{enumerate}
We now provide some discussion for each of these steps,
using them to compute EIF's in the outcome-mediated selection bias setting (the model $\cp_{\cm_5}$).
In Appendix~\ref{app:eif_derivations},
we similarly use this framework to derive EIF's in the model $\cp_{\cm_4}$ with linear confounding bias
(the final result is given in Section~\ref{sec:applications}).
We also re-derive known EIF's from the literature in the models $\cp_{\cm_1}$, $\cp_{\cm_2}$, and $\cp_{\cm_3}$  in Appendix~\ref{app:eif_derivations}. 

\subsection{Initial influence function}
\label{sec:varphi_0}
A natural choice for the initial influence function $\varphi_0$ in step 1 above,
so long as the estimand $\tau$ is identified in the nonparametric model $\cp_{\ch_R}$ that does not place any structural restrictions on the data generating process,
is the \emph{canonical gradient} for the estimand $\tau$.
The canoncial gradient is the only possible influence function of a RAL estimator in $\cp_{\ch_R}$
(hence also the EIF in $\cp_{\ch_R}$).
Since $\cp_{\cm} \subseteq \cp_{\ch_R}$ for any $\cm$,
the canonical gradient is also a possible influence function of a RAL estimator in $\cp_{\cm}$.
The restriction that $m \in \cm$,
however, generally restricts the semiparametric tangent space $\ct_{\cm}$,
thereby making the orthogonal complement $\ct_{\cm}^{\perp}$ nontrivial which means that more influence functions are available in the model $\cp_{\cm}$,
among which the minimal variance is attained by~\eqref{eq:eif_raw}.
The canonical gradient is readily available in the literature for many common estimands like average treatment effects and densities.
When it is not, it can often be derived using various techniques like ``point mass contamination" and calculus rules~\citep{kennedy2024semiparametric}.

\begin{table}[!tb]
\centering
\caption{Some notation used throughout the paper for functionals in the data fusion examples.
We use an asterisk superscript to indicate the value of the functional under the true data-generating process $P^*$.}
\label{table:functions}
\begin{tabular}{ccc}
\hline
Notation & Definition & Description \\
\midrule
$p(x)$ & $\Pr(S=1 \mid X=x)$ & RCT selection probability function \\
$e(x)$ & $\Pr(Z=1 \mid S=1, X=x)$ & RCT propensity score \\
$q(x)$ & $\Pr(Z=1 \mid S=0, X=x)$ & Observational propensity score \\
$\rho$ & $\Pr(S=1)$ & RCT sample size proportion \\
$V(s,z,x)$ or $V_{sz}(x)$ & $\var(Y \mid S=s,Z=z,X=x)$ & Outcome variance function \\
\bottomrule
\end{tabular}
\end{table}

In Proposition~\ref{prop:efficiency_bounds_0} we provide the canonical gradients for three estimands $\tau_{\rct}$, $\tau_{\obs}$, and $\tau_{\tgt}$
\begin{equation}
\label{eq:tau_id}
\tau_{\rct} = \e[\tau(x;m) \mid S=1]; \quad \tau_{\obs} = \e[\tau(x;m) \mid S=0]; \quad \tau_{\tgt} = \e[\tau(x;m)]
\end{equation}
in the data fusion setting where $R=(S,Z,X)$.
Recall that $\tau(x;m)=m_{11}(x)-m_{10}(x)$ is the conditional average treatment effect when treatment is randomized (conditional on $X$) in the experiment.
In that case,
$\tau_{\rct}$ identifies the average treatment effect for the population of subjects from which the experimental dataset is sampled.
Under an additional mean-exchangeability assumption on the potential outcomes,~\citet{dahabreh_generalizing_2019,dahabreh_extending_2020} show that $\tau_{\obs}$ and $\tau_{\tgt}$ identify the average treatment effect in the population defined by the observational dataset and the fused dataset,
respectively,
as long as the RCT selection probability $p(x)=\Pr(S=1 \mid X=x)$ is bounded away from 0.
We do not discuss specific identification assumptions further,
as our focus is on improving precision via restrictions on the outcome mean function $m$ for already identified estimands.

\begin{proposition}
\label{prop:efficiency_bounds_0}
The efficient influence functions $\varphi_{0,\rct}$, $\varphi_{0,\obs}$, and $\varphi_{0,\tgt}$ for the estimands $\tau_{\rct}$, $\tau_{\obs}$, and $\tau_{\tgt}$ defined in~\eqref{eq:tau_id} at the true observed data distribution $P^*$ in the nonparametric model $\cp_{\ch_R}$ are given by
\begin{align}
\varphi_{0,\rct}(w;\tau_{\rct}^*, \eta^*) & = \frac{s(\Delta(z,x,y;\eta^*)+m_{11}^*(x)-m_{10}^*(x)-\tau_{\rct}^*)}{\rho^*}\label{eq:tau_exp_bound_0} \\
\varphi_{0,\obs}(w;\tau_{\obs}^*, \eta^*) & = \frac{s(1-p^*(x))\Delta(z,x,y;\eta^*)}{p^*(x)(1-\rho^*)}+\frac{(1-s)(m_{11}^*(x)-m_{10}^*(x)-\tau_{\obs}^*)}{1-\rho^*} \label{eq:tau_obs_bound_0}\\
\varphi_{0,\tgt}(w;\tau_{\tgt}^*, \eta^*) & = \frac{s}{p^*(x)}\Delta(z,x,y;\eta^*)+m_{11}^*(x)-m_{10}^*(x)-\tau_{\tgt}^* \label{eq:tau_tgt_bound_0}
\end{align}
where the components of $\eta^* = \eta^*(\cdot)= (m^*(1,\cdot,\cdot),e^*(\cdot),p^*(\cdot))$ are defined in Table~\ref{table:functions} and $\Delta(z,x,y;\eta^*)$ is the inverse propensity weighted difference
\begin{equation}
\label{eq:Delta}
\Delta(z,x,y;\eta^*) = \frac{z(y-m_{11}^*(x))}{e^*(x)}-\frac{(1-z)(y-m_{10}^*(x))}{1-e^*(x)}.
\end{equation}
\end{proposition}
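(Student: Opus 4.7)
The plan is to exploit the fact that under the nonparametric model $\bar{\cp}$ the semiparametric tangent space equals all of $\ch^0$. By the second part of Theorem~\ref{thm:influence_functions}, this implies that any valid influence function of a RAL estimator of each of $\tau_{\rct}$, $\tau_{\obs}$, and $\tau_{\tgt}$ is automatically the (unique) EIF. Hence the proof reduces to verifying that each proposed $\varphi_0$ lies in $\ch^0$ and satisfies the pathwise derivative identity $\frac{d}{d\gamma}\tau(P_\gamma)\big|_{\gamma=0}=\e[\varphi_0(W)\,s_\gamma(W)]$ along every smooth one-dimensional parametric submodel $\{P_\gamma\}$ through $P^*$ with score $s_\gamma$ at $\gamma=0$. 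Square-integrability is immediate from Assumption~\ref{assump:overlap} together with a standard implicit finite-second-moment condition on $Y$, and the mean-zero property follows by iterated expectation on the appropriate sub-$\sigma$-algebra.

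To carry out the derivative check I would factor every density in $\bar{\cp}$ as $p(y\giv s,z,x)\,p(z\giv s,x)\,p(s\giv x)\,p(x)$, so that the score decomposes additively as $s_\gamma = s_\gamma(y\giv s,z,x) + s_\gamma(z\giv s,x) + s_\gamma(s\giv x) + s_\gamma(x)$, each piece conditionally mean-zero. Writing the estimands as $\tau_{\rct}=(\rho^*)^{-1}\e[S(m_{11}^*(X)-m_{10}^*(X))]$, $\tau_{\tgt}=\e[m_{11}^*(X)-m_{10}^*(X)]$, and $\tau_{\obs}=(1-\rho^*)^{-1}\e[(1-S)(m_{11}^*(X)-m_{10}^*(X))]$, I would differentiate through $\gamma$ using the chain rule. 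Perturbations of $Y\giv S=1,Z=z,X$ enter only through $m_{1z}^*$ and pair against the IPW residual piece $\Delta$ in each $\varphi_0$; perturbations of $P(S\giv X)$ and of the marginal of $X$ pair against the ``outcome regression'' piece $m_{11}^*-m_{10}^*-\tau^*$; and perturbations of $Z\giv S,X$ contribute zero because the three functionals depend on this conditional only through $m_{1z}^*$, which a direct calculation shows is orthogonal to $s_\gamma(z\giv s,x)$. After an iterated expectation conditional on $(S,Z,X)$ or $X$, each pairing collapses into the corresponding piece of the pathwise derivative.

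The most delicate bookkeeping, and the main obstacle, arises for $\tau_{\obs}$, where a perturbation of the conditional $Y\giv S=1,Z=z,X$ affects the estimand through the implicit change of measure from $dP^*(X\giv S=1)$ to $dP^*(X\giv S=0)$. The key algebraic identity is $\frac{1-p^*(x)}{p^*(x)}\,dP^*(X\giv S=1) = \frac{1-\rho^*}{\rho^*}\,dP^*(X\giv S=0)$, an immediate consequence of Bayes' rule. Combined with the $1/e^*(X)$ and $1/(1-e^*(X))$ factors that handle propensity reweighting within the RCT, this yields exactly the $\frac{s(1-p^*(x))}{p^*(x)(1-\rho^*)}$ prefactor of~\eqref{eq:tau_obs_bound_0}. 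Analogous but simpler manipulations handle $\tau_{\rct}$ (reducing to standard AIPW verification on the $S=1$ subpopulation) and $\tau_{\tgt}$ (where the weight $S/p^*(X)$ reweights the RCT outcome contribution to the full marginal of $X$). An equally valid alternative would simply invoke the derivations from the data-fusion literature cited in Section~\ref{sec:literature}, in which these three influence functions already appear as the familiar AIPW and AIPSW forms.
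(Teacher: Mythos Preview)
Your proposal is correct and self-contained, but the paper does not actually carry out any of these calculations: its entire proof of Proposition~\ref{prop:efficiency_bounds_0} consists of two citations, attributing~\eqref{eq:tau_exp_bound_0} to Section~2.2 of \citet{li_improving_2023} and~\eqref{eq:tau_obs_bound_0}--\eqref{eq:tau_tgt_bound_0} to Theorem~2.3 of \citet{li_note_2023}. You in fact anticipate this in your final sentence, noting that one could ``simply invoke the derivations from the data-fusion literature.'' The difference is purely one of presentation: your pathwise-derivative verification (factoring the density, decomposing the score, and pairing each piece against the IPW-residual or regression component of $\varphi_0$) is the standard way to derive these influence functions from scratch and would make the paper self-contained, whereas the paper treats them as known results and moves on. Neither approach has a gap; the paper's citation route is shorter, while yours would spare a reader from chasing references.
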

\begin{proof}
Equation~\eqref{eq:tau_exp_bound_0} is from Section 2.2 of~\citet{li_improving_2023}.
Equations~\eqref{eq:tau_obs_bound_0} and~\eqref{eq:tau_tgt_bound_0} are from Theorem 2.3 of~\citet{li_note_2023}.
\end{proof}

As noted in the introduction,
the AIPW estimator of~\citet{robins_1994_estimation},
among many other popular estimators for $\tau_{\rct}$,
attains the influence function $\varphi_{0,\rct}$ without using any observational data.
Reweighted variants of AIPW,
referred to as augmented inverse propensity of sampling weighting (AIPSW) estimators by~\citet{colnet2024causal},
attain the influence functions $\varphi_{0,\obs}$ and $\varphi_{0,\tgt}$ for $\tau_{\obs}$ and $\tau_{\tgt}$, respectively,
using the experimental data along with the covariates (but not the outcomes) in the observational data.

\subsection{The outcome mean function tangent space}
\label{sec:S_M}

If the set $\cm$ is a closed linear space,
the outcome mean function tangent space $\cs_{\cm}$ is simply $\cm$.

\begin{proposition}
\label{prop:S_M_equals_M}
If $\cm$ is a closed linear subspace of $\ch_R$, then $\cs_{\cm}=\cm$.
\end{proposition}
\begin{proof}
First suppose $f \in \cs_{\cm}$.
Then there exists a sequence of nonrandom vectors $\{c_i\}_{i=1}^N$ and
outcome mean function parametric submodels $\{M_{\gamma}^{(i)}\}_{i=1}^N$ such that the functions 
\[
f_i=c_i^{\top}\frac{\partial M_{\gamma}^{(i)}}{\partial \gamma} \Bigg|_{\gamma=0}
\]
converge in mean square to $f$.
But for each $i$, letting $s_i$ be the dimension of $c_i$ we have
\[
c^{\top} \frac{\partial M_{\gamma}^{(i)}(\cdot)}{\partial \gamma} \Bigg|_{\gamma=0} =  \lim_{h \rightarrow 0} \sum_{k=1}^{s_i} c_k \frac{M_{he_k}(\cdot)-m^*(\cdot)}{h}
\]
is a limit of linear combinations of functions in $\cm$,
hence still in $\cm$ due to $\cm$ being closed and linear.
Above, $e_k$ denotes the $k$-th standard basis vector in $\real^{s_i}$.
We conclude $\cs_{\cm} \subseteq \cm$.

To show the reverse inclusion,
note that for any $f \in \cm$ we can write $f$ as the derivative of the one-dimensional parametric submodel $m(\cdot;\gamma)=m^*(\cdot)+\gamma f(\cdot)$,
which is indeed a submodel as it satisfies $m(\cdot;\gamma) \in \cm$ for all $\gamma$ due to $\cm$ being a linear space.
\end{proof}

It is straightforward to verify that the collections $\cm_2$ and $\cm_4$ of~\eqref{eq:mean_exchangeable_controls} and~\ref{eq:parametric_confounding_bias}, respectively,
are in fact closed linear spaces,
so that $\cs_{\cm_i}=\cm_i$ for $i=2,4$.
A linear restricted moment model $\cm=\{r \mapsto \psi(r)^{\top}\beta \mid \beta \in \real^q\}$,
a special case of $\cm_1$ in Example~\ref{ex:restricted_moment_model},
is also a closed linear space.

In other cases,
computing the space $\cs_{\cm}$ may require some more work.
However, Theorem~\ref{thm:semiparametric_tangent_space} ensures that computing $\cs_{\cm}$
(and its orthogonal complement) can replace the work of directly deriving the space $\ct_{\cm}$.
Such a direct derivation generally requires reasoning about integral restrictions on log likelihoods in parametric submodels.
The proof of Theorem~\ref{thm:semiparametric_tangent_space} essentially proceeds by working out such reasoning for general $\cm$ so that the user only needs to reason about the space $\cs_{\cm}$,
which requires only considering derivatives of outcome mean function submodels.

To illustrate this,
we work through a novel derivation of the outcome mean function tangent space $\cs_{\cm_5}$,
where $\cm_5$ is the outcome mean function collection~\eqref{eq:outcome_selection_bias} in the outcome-mediated selection bias setting of Example~\ref{ex:outcome_selection_bias}.

\begin{example}[continues=ex:outcome_selection_bias]
Let $\ell(x) = \log(x)-\log(1-x)$ be the logistic link function,
so that~\eqref{eq:odds_ratio} can be written as
\[
\ell(m_{11}(x)) = \ell(m_{10}(x)) + \ell(m_{01}(x)) - \ell(m_{00}(x)).
\]
Essentially, the restriction~\eqref{eq:odds_ratio} allows $m_{10}$, $m_{01}$, and $m_{00}$ to freely vary but then $m_{11}$ is a deterministic function of $(m_{10},m_{01},m_{00})$.
Thus,
a generic outcome mean function parametric submodel for $\cm_5$ takes the form
\begin{align}
m_{sz}(x;\gamma) & = szf_{11}(x;\gamma) + s(1-z)f_{10}(x;\gamma_{10}) + (1-s)zf_{01}(x;\gamma_{01}) + (1-s)(1-z)f_{00}(x;\gamma_{00}) \label{eq:outcome_selection_bias_submodel} \\
f_{11}(x;\gamma) & = \ell^{-1}[\ell(f_{10}(x;\gamma_{10}))+\ell(f_{01}(x;\gamma_{01}))-\ell(f_{00}(x;\gamma_{00}))] \label{eq:outcome_selection_bias_submodel_f}
\end{align}
for $\gamma=(\gamma_{10}^{\top},\gamma_{01}^{\top},\gamma_{00}^{\top})^{\top} \in \real^s$.
For any $c=(c_{10}^{\top},c_{01}^{\top},c_{00}^{\top})^{\top} \in \real^r$ partitioned as $\gamma$,
we compute
\begin{align*}
c^{\top} \frac{\partial m(r;\gamma)}{\partial \gamma} \Big|_{\gamma=0} & = [s(1-z)+sz\kappa^*(x)\ell'(m_{10}^*(x))]c_{10}^{\top}f_{10}'(x;0) \\
& \quad + [(1-s)z+sz\kappa^*(x)\ell'(m_{01}^*(x))]c_{01}^{\top}f_{01}'(x;0) \\
& \quad + [(1-s)(1-z)-sz\kappa^*(x)\ell'(m_{00}^*(x))]c_{00}^{\top}f_{00}'(x;0)
\end{align*}
where $f_j'(x;a) \equiv \partial f_j(x;\alpha)/\partial \alpha|_{\alpha=a}$ for $j \in \{10, 01, 00\}$
and 
$\kappa^*(x) = (\ell^{-1})'[\ell(m_{11}^*(x))] = (\ell'(m_{11}^*(x)))^{-1}$.
We note the preceding display is of the form
\[
s(1-z)g_1(x)+(1-s)zg_2(x)+(1-s)(1-z)g_3(x) + sz\kappa^*(x)[\ell'(m_{10}^*(x))g_1(x)+\ell'(m_{01}^*(x))g_2(x)-\ell'(m_{00}^*(x))g_3(x)]
\]
for some $g_1,g_2,g_3\in \ch_X$
and hypothesize that \emph{any} function of this form lies in $\cs_{\cm_5}$.

To show this, we fix arbitrary $g_1,g_2,g_3 \in \ch_X$ and note there exists an outcome mean function submodel $m(r;\gamma)$ of the form in~\eqref{eq:outcome_selection_bias_submodel} and~\eqref{eq:outcome_selection_bias_submodel_f} with 
\begin{align*}
f_{10}(x;\gamma_{10}) & = m_{10}^*(x)+\gamma_{10}g_1(x) \\
f_{01}(x;\gamma_{01}) & = m_{01}^*(x)+\gamma_{01}g_2(x) \\
f_{00}(x;\gamma_{00}) & = m_{00}^*(x)+\gamma_{00}g_3(x)
\end{align*}
where $\gamma=(\gamma_{10},\gamma_{01},\gamma_{00}) \in \real^3$.
Then 
\begin{align*}
& (1,1,1)^{\top} \frac{\partial m_{sz}(x;\gamma)}{\partial \gamma} \Big|_{\gamma=0} = s(1-z)g_1(x)+(1-s)zg_2(x)+(1-s)(1-z)g_3(x)+sz\kappa^*(x)\iota^*(x)
\end{align*}
which shows that indeed,
\begin{align}
\cs_{\cm_5} = & \{r \mapsto s(1-z)g_1(x)+(1-s)zg_2(x)+(1-s)(1-z)g_3(x) \nonumber \\
& \quad + sz\kappa^*(x)[\ell'(m_{10}^*(x))g_1(x)+\ell'(m_{01}^*(x))g_2(x)-\ell'(m_{00}^*(x))g_3(x)] \mid g_1,g_2,g_3 \in \ch_X\}. \label{eq:S_M_outcome_selection_bias}
\end{align}
\end{example}

\subsection{Orthogonal complements and projections}
\label{sec:projections}
The final step in deriving the EIF is to compute the orthogonal complement $\cs_{\cm}^{\perp}$ of the outcome mean function tangent space $\cs_{\cm}$
along with the projection~\eqref{eq:eif_raw}.
A useful way to compute $\cs_{\cm}^{\perp}$ is to characterize it as the set of all functions in $\ch_R$ whose projection onto $\cs_{\cm}$ is zero.
We compute such projections,
as well as the final projection~\eqref{eq:eif_raw},
by using the law of iterated expectations to write the projection objective as a function of $X$ solely,
and then minimize the objective using calculus.
We perform both these steps for the collection $\cm_5$ below.

\begin{example}[continues=ex:outcome_selection_bias]
By definition,
the orthogonal complement $\cs_{\cm_5}^{\perp}$ of the set $\cs_{\cm_5}$ in~\eqref{eq:S_M_outcome_selection_bias}
consists of all functions $h \in \ch_R$ for which the projection $\pi(h;\cs_{\cm_5})$ is zero.
By~\eqref{eq:S_M_outcome_selection_bias},
an arbitrary function in $\cs_{\cm_5}$ takes the form
\begin{align*}
\iota(x;g_1,g_2,g_3) & = s(1-z)g_1(x)+(1-s)zg_2(x)+(1-s)(1-z)g_3(x) + sz \upsilon^*(x;g_1,g_2,g_3) \\
\upsilon^*(x;g_1,g_2,g_3) & = \kappa^*(x)[\ell'(m_{10}^*(x))g_1(x)+\ell'(m_{01}^*(x))g_2(x)-\ell'(m_{00}^*(x))g_3(x)]
\end{align*}
for some $(g_1,g_2,g_3) \in \ch_X^3$.
Then for any $h \in \ch_R$,
we have $\pi(h;\cs_{\cm_5})(x) = \iota(x;h_1,h_2,h_3)$
where $(h_1,h_2,h_3)$ minimize the projection objective 
\begin{align*}
\mathcal{O}(g_1,g_2,g_3) & = \e[(h(R)-\iota(X;g_1,g_2,g_3))^2] \\
& = \e\Big[p^*(X)e^*(X)(h_{11}(X)-\upsilon^*(X;g_1,g_2,g_3))^2\Big] + \e[p^*(X)(1-e^*(X))(h_{10}(X)-g_1(X))^2] \\
& \quad + \e[(1-p^*(X))q^*(X)(h_{01}(X)-g_2(X))^2] + \e[(1-p^*(X))(1-q^*(X))(h_{00}(X)-g_3(X))^2]
\end{align*}
over $(g_1,g_2,g_3) \in \ch_X^3$,
where the second equality follows from writing
\[
h(R) = SZh_{11}(X)+S(1-Z)h_{10}(X)+(1-S)Zh_{01}(X) + (1-S)(1-Z)h_{00}(X)
\]
and conditioning on $X$,
and we have used the notation from Table~\ref{table:functions}.
With the objective now solely an expectation of a function of $X$,
we can compute the optimizers $(h_1, h_2,h_3)$
simply by minimizing the argument of the expectation pointwise in $(g_1,g_2,g_3)$.
This yields the first-order conditions
\begin{align*}
0 & = -p^*(x)e^*(x)(h_{11}(x)-\upsilon^*(x;h_1,h_2,h_3))\kappa^*(x)\ell'(m_{10}^*(x)) - p^*(x)(1-e^*(x))(h_{10}(x)-g_1(x)) \\
0 & = -p^*(x)e^*(x)(h_{11}(x)-\upsilon^*(x;h_1,h_2,h_3))\kappa^*(x)\ell'(m_{01}^*(x)) -(1-p^*(x))q^*(x)(h_{01}(x)-g_2(x)) \\
0 & = p^*(x)e^*(x)(h_{11}(x)-\upsilon^*(x;h_1,h_2,h_3))\kappa^*(x)\ell'(m_{00}^*(x))-(1-p^*(x))(1-q^*(x))(h_{00}(x)-g_3(x))
\end{align*}
for all $x \in \cx$.
We conclude $\cs_{\cm_5}^{\perp}$ is the set of all $h \in \ch_R$
for which $h_1=h_2=h_3=0$ solves this system of equations.
This is precisely the functions $h$ satisfying
\begin{align}
(1-e^*(x))h_{10}(x) & = -e^*(x)h_{11}(x)\frac{\ell'(m_{10}^*(x))}{\ell'(m_{11}^*(x))} \label{eq:T_M_1_perp_1} \\
(1-p^*(x))q^*(x)h_{01}(x) & = -p^*(x)e^*(x)h_{11}(x)\frac{\ell'(m_{01}^*(x))}{\ell'(m_{11}^*(x))} \label{eq:T_M_1_perp_2} \\
(1-p^*(x))(1-q^*(x))h_{00}(x) & = p^*(x)e^*(x)h_{11}(x)\frac{\ell'(m_{00}^*(x))}{\ell'(m_{11}^*(x))} \label{eq:T_M_1_perp_3} 
\end{align}
for all $x \in \cx$.
Noting that~\eqref{eq:T_M_1_perp_1},~\eqref{eq:T_M_1_perp_2}, and~\eqref{eq:T_M_1_perp_3}
determine $h_{10}$, $h_{01}$, and $h_{00}$ in terms of $h_{11}$ and letting $\zeta(x)$ denote $h_{11}(x)/\ell'(m_{11}^*(x))$,
we can write this succinctly as
\[
\cs_{\cm_5}^{\perp} = \{f^*(r)\zeta(x) \mid \zeta \in \ch_X\} 
\]
for $f^*(r) \equiv f(r;\eta^*)$ where
\begin{align}
f(r;\eta) & = sz\ell'(m_{11}(x))-\frac{s(1-z)e(x)\ell'(m_{10}(x))}{1-e(x)} \nonumber \\
& \quad -\frac{(1-s)zp(x)e(x)\ell'(m_{01}(x))}{(1-p(x))q(x)} + \frac{(1-s)(1-z)p(x)e(x)\ell'(m_{00}(x))}{(1-p(x))(1-q(x))} \label{eq:f}
\end{align}
and the nuisance parameters are $\eta(\cdot) = (p(\cdot), e(\cdot), q(\cdot), m(\cdot))$.
We conclude via~\eqref{eq:T_M_perp_generalized} that
\begin{equation}
\label{eq:T_M_perp_outcome_selection_bias}
\ct_{\cm_5}^{\perp} = \{w \mapsto f^*(r)\zeta(x)(y-m^*(r)) \mid \zeta \in \ch_X\}.
\end{equation}

Finally, to derive the EIF we perform the projection~\eqref{eq:eif_raw}.
This amounts to solving the optimization problem
\begin{equation}
\label{eq:outcome_selection_bias_projection}
\min_{\zeta \in \ch_X} \e\left[\left(\varphi_0(W;\tau^*,\eta^*)-f^*(R)\zeta(X)(Y-m^*(R))\right)^2\right]
\end{equation}
Similar to our derivation of the orthogonal complement $\cs_{\cm_5}^{\perp}$,
we perform this minimization by re-writing the objective via conditioning on $X$:
\begin{align*}
& \e\left[\left(\varphi_0(W;\tau^*,\eta^*)-f^*(R)\zeta(X)(Y-m^*(R))\right)^2\right] \\
& \quad = \e\left[\zeta(X)^2\e[f^*(R)^2(Y-m^*(R))^2 \mid X] - 2\zeta(X)\e[\varphi_0(W;\tau^*,\eta^*)f^*(R)(Y-m^*(R)) \mid X]\right] + const.
\end{align*}
Minimizing the argument of the expectation shows the solution to~\eqref{eq:outcome_selection_bias_projection} is $\zeta(\cdot;\eta^*) \equiv \zeta(\cdot;\eta^*,\tau^*,\varphi_0)$ where
\[
\zeta(x;\eta) \equiv \zeta(x;\eta,\tau,\varphi_0) = \frac{\int \varphi_0(w;\tau,\eta)f(r;\eta)(y-m(r))dP_{(S,Z,Y \mid X)}(s,z,y;x)}{\int f(r;\eta)^2(y-m(r))^2 dP_{(S,Z,Y \mid X)}(s,z,y;x)}
\]
and so by~\eqref{eq:eif_raw} the EIF is
\begin{equation}
\label{eq:outcome_selection_bias_eif}
\varphi_{\eff}^{(5)}(w;\tau^*,\eta^*) = \varphi_0(w;\tau^*,\eta^*) - f(r;\eta^*)\zeta(x;\eta^*,\tau^*,\varphi_0)(y-m^*(r)).
\end{equation}
\end{example}

\subsection{Additional insights}
\label{sec:additional_insights}
Commonly,
the propensity score $e^*(x)$ in the experimental dataset is known,
as it is subject to the control of the investigator.~\citet{hahn1998role} famously showed that knowledge of the propensity score does not change the semiparametric efficiency bound for estimating $\tau_{\rct}$ relative to the nonparametric model.
This may not be true in general, however,
for other estimands or for proper semiparametric models like $\cp_{\cm}$.

We can readily extend Theorem~\ref{thm:semiparametric_tangent_space}
to characterize the space of influence functions of RAL estimators in the model $\tilde{\cp}_{\cm} \subseteq \cp_{\cm}$
that imposes the restriction $e(x)=e^*(x)$ for some known function $e^*(\cdot)$
in addition to the outcome mean function restriction that $m \in \cm$.
The main insight is to note that with $e(x) = \Pr(Z=1 \mid S=1,X=x) = \e[Z \mid S=1,X=x]$,
a known experimental propensity score can be viewed as a restriction on the outcome mean function of the observations $(Z,S,X)$,
where now $Z$ is viewed as the ``outcome" and $(S,X)$ are the covariates.
Such a restriction is variationally independent of any outcome mean function restriction on the original dataset (i.e., that $m \in \cm$),
which must pertain to the conditional distribution of $Y$ given $(Z,S,X)$.
This allows us to recursively apply Theorem~\ref{thm:semiparametric_tangent_space}.
Similar ideas could be used to characterize semiparametric efficiency when either instead of or in addition to a known experimental propensity score,
the RCT selection probability $p^*(x)$ is known,
as in a nested trial setup where RCT participants are directly sampled from a larger group by the investigator.
A known selection probability is a restriction on the conditional mean of $S$ given $X$,
which is variationally independent of any restrictions on the conditional distribution of $Z$ given $(S,X)$ (e.g., known experimental propensity score),
or on the conditional distribution of $Y$ given $R$ (e.g., that $m \in \cm$).

\begin{corollary}
\label{cor:known_propensity_score}
Suppose the conditions of Theorem~\ref{thm:semiparametric_tangent_space} are satisfied for some outcome mean function collection $\cm$ in the causal inference setting where $R=(S,Z,X) \in \{0,1\}^2 \times \cx$.
Then $\tilde{\ct}_{\cm}^{\perp}$,
the orthogonal complement in $\ch_W^0$ of the semiparametric tangent space $\tilde{\ct}_{\cm}$ for the model $\tilde{\cp}_{\cm}$ that imposes the restrictions that $m \in \cm$ and $\Pr(Z=1 \mid S=1,X=x) = e^*(x)$ for known $e^* \in \ch_X$,
satisfies
\begin{equation}
\tilde{\ct}_{\cm}^{\perp} = \ct_{\cm}^{\perp} \oplus \left\{w \mapsto sh(x)(z-e^*(x)) \Big| h \in \ch_X \right\} . \label{eq:T_M_perp_tilde}
\end{equation}
\end{corollary}
\begin{proof}
See Appendix~\ref{app:known_propensity_score}.
\end{proof}

Theorem~\ref{thm:semiparametric_tangent_space} can also provide useful insight into when certain modeling assumptions may not be helpful,
in the sense that imposing such restrictions cannot improve (decrease) the semiparametric efficiency bound.
\begin{corollary}
\label{cor:homoskedasticity}
Suppose the experimental propensity score is constant, i.e. $e^*(x) = e^* \in (0,1)$ for all $x \in \cx$,
and the variance functions in the RCT are homoskedastic, i.e. 
\[
V^*(1,z,x)=\var(Y \mid S=1,Z=z,X=x)=\sigma_z^2, \quad z \in \{0,1\}, x \in \cx
\]
for some positive constants $\sigma_0^2$ and $\sigma_1^2$.
Additionally,
suppose $\cm \subseteq \ch_R$ satisfies the conditions of Theorem~\ref{thm:semiparametric_tangent_space}
and the corresponding outcome mean tangent space $\cs_{\cm}$ contains all functions $m$ for which $m_{11}$ and $m_{10}$ are arbitrary constants and $m_{01}$ and $m_{00}$ are identically zero.
Then
$\varphi_{0,\rct}$,
the EIF for $\tau_{\rct}$ in the nonparametric model given in~\eqref{eq:tau_exp_bound_0},
is also the EIF for $\tau_{\rct}$ in both $\cp_{\cm}$ and $\tilde{\cp}_{\cm}$.
\end{corollary}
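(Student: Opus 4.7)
The plan is to invoke Theorem~\ref{thm:influence_functions} part 2: for each estimand $\tau \in \{\tau_{\rct}, \tau_{\obs}, \tau_{\tgt}\}$, the corresponding nonparametric EIF $\varphi_0$ will remain the EIF under $\cp_{\cm}$ if and only if $\varphi_0 \perp \ct_{\cm}^{\perp}$. By Theorem~\ref{thm:semiparametric_tangent_space}, every $g \in \ct_{\cm}^{\perp}$ has the form $g(w) = h(s,z,x)\epsilon^*$ for some $h \in \cs_{\cm}^{\perp}$, so the task reduces to verifying $\e[\varphi_0(W) h(S,Z,X)\epsilon^*] = 0$ for every such $h$.

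I would decompose each $\varphi_0$ as $A(S,Z,X)\epsilon^* + B(S,X)$ with $A(s,z,x)$ vanishing when $s=0$. Reading off~\eqref{eq:tau_exp_bound_0}, for $\tau_{\rct}$ one has $A(s,z,x) = (\rho^*)^{-1}(sz/e^*(x) - s(1-z)/(1-e^*(x)))$ and $B(s,x) = s(m_{11}^*(x)-m_{10}^*(x)-\tau_{\rct}^*)/\rho^*$. Under the additional hypothesis $X \amalg S$ (so $p^*(X) \equiv \rho^*$), inspection of~\eqref{eq:tau_obs_bound_0} and~\eqref{eq:tau_tgt_bound_0} shows that the $A$-pieces of $\varphi_{0,\obs}$ and $\varphi_{0,\tgt}$ collapse to exactly this same expression, while their $B$-pieces remain functions of $(s,x)$. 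Since $\e[\epsilon^* \mid S,Z,X] = 0$, iterated expectations immediately yield $\e[B(S,X) h(S,Z,X)\epsilon^*] = 0$, so only the $A\epsilon^*$ term survives.

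For the surviving term, $\e[A(S,Z,X)\epsilon^* h(S,Z,X)\epsilon^*] = \e[A(S,Z,X) h(S,Z,X) V^*(S,Z,X)]$; because $A$ is supported on $\{S=1\}$, homoskedasticity allows me to substitute $V^*(1,z,x) = \sigma_z^2$, and constancy of $e^*$ then reduces the expectation to
\[
(\rho^*)^{-1}\left(\sigma_1^2 \e[p^*(X) h(1,1,X)] - \sigma_0^2 \e[p^*(X) h(1,0,X)]\right).
\]
The hypothesis that $\cs_{\cm}$ contains all mean functions whose components $m_{11}, m_{10}$ are arbitrary constants and $m_{01}, m_{00}$ are identically zero, combined with linearity of $\cs_{\cm}$, yields in particular that $sz$ and $s(1-z)$ belong to $\cs_{\cm}$. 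Orthogonality of $h$ to each then gives $\e[SZ h(S,Z,X)] = e^* \e[p^*(X) h(1,1,X)] = 0$ and analogously $\e[p^*(X) h(1,0,X)] = 0$, so the display above vanishes.

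Finally, to upgrade from $\cp_{\cm}$ to $\tilde{\cp}_{\cm}$, I would verify that $\varphi_0$ is also orthogonal to the extra summand $\{w \mapsto s h(x)(z - e^*(x)) : h \in \ch_X\}$ from~\eqref{eq:T_M_perp_tilde}. The $A\epsilon^*$ piece is orthogonal by conditioning on $(S,Z,X)$ and using $\e[\epsilon^* \mid S,Z,X]=0$, while the $B(S,X)$ piece is orthogonal because $\e[Z - e^*(X) \mid S=1, X] = 0$ when $e^*(X) = e^*$ is constant. I expect the main obstacle to be purely bookkeeping, namely tracking that $A$ is supported on $\{S=1\}$ (so homoskedasticity applies where needed) and that $X \amalg S$ truly aligns the $A$-pieces across the three estimands; the structural payoff of Theorem~\ref{thm:semiparametric_tangent_space} is that everything else is routine inner-product arithmetic.
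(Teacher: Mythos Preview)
Your proposal is correct and follows essentially the same route as the paper's own proof: both show $\varphi_0 \in \tilde{\ct}_{\cm}$ by checking orthogonality to an arbitrary element $h(s,z,x)\epsilon^* + s\tilde{h}(x)(z-e^*)$ of $\tilde{\ct}_{\cm}^{\perp}$, reduce the $\epsilon^*$-piece via homoskedasticity and constant $e^*$ to an inner product of $h$ with a function of the form $c_1 sz - c_0 s(1-z)$ (with $c_1,c_0$ constants), and then invoke the hypothesis on $\cs_{\cm}$ to conclude this inner product vanishes. The only cosmetic difference is that the paper packages the last step as a single membership $(s,z,x)\mapsto c_1 sz - c_0 s(1-z) \in \cs_{\cm}$, whereas you test orthogonality to $sz$ and $s(1-z)$ separately; these are equivalent since $\cs_{\cm}$ is linear.
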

\begin{proof}
See Appendix~\ref{app:homoskedasticity}.
\end{proof}

To understand the implications of Corollary~\ref{cor:homoskedasticity},
consider the following restrictive set of modeling assumptions: the outcome mean functions $m_{01}$ and $m_{00}$ in the observational dataset are known exactly,
while the outcome mean functions $m_{11}$ and $m_{10}$ in the RCT are linear in some known low-dimensional basis functions $\psi(x)$ containing an intercept.
It is straightforward to show that the set $\cm$ of permissible outcome mean functions under these restrictions satisfies the conditions of Corollary~\ref{cor:homoskedasticity}.
This shows that 
with homoskedastic outcomes and a constant propensity score $e^*$,
even very stringent assumptions on the outcome mean functions do not enable any asymptotic improvement over the AIPW estimator for estimating $\tau_{\rct}$.
The same automatically applies for any set of strictly less stringent assumptions,
such as the model $\cp_{\cm_4}$
(again assuming $\psi(x)$ contains an intercept).
Recall that the AIPW estimator neither uses any information from the observational dataset
nor places any parametric or semiparametric assumptions on the outcome mean functions.
Thus, Corollary~\ref{cor:homoskedasticity} shows it is not useful in large samples to make such assumptions to try and leverage the observational data if we have a constant RCT propensity score and expect homoskedastic outcomes.
Conversely, with heteroskedastic outcomes
it is known that
even without an additional observational dataset,
parametric assumptions on the outcome mean functions can improve the efficiency bound for $\tau_{\rct}$~\citep{tan_comment_2007},
including when the propensity score is constant.

\section{Constructing one-step estimators}
\label{sec:estimation_inference}

Suppose $\varphi=\varphi(\cdot;\tau^*,\eta^*)$ is an influence function for an RAL estimator of $\tau$ in the model $\cp_{\cm}$.
Then techniques like one-step adjustment of an initial estimator $\hat{\tau}$,
solving the estimating equation $N^{-1} \sum_{i=1}^N \varphi(W_i;\tau,\hat{\eta})=0$ for $\tau$
(where $\hat{\eta}$ is an initial estimate of the nuisance parameters $\eta^*$),
and targeted minimum loss estimation~\citep{hines2022demystifying} are all well-studied options to construct an estimator with influence function $\varphi$.

Our focus is on providing precise conditions under which one-step adjustment,
which most directly utilizes the structure of Section~\ref{sec:semiparametric_efficiency},
yields an estimator that in fact achieves the efficiency bound in our models $\cp_{\cm}$
(i.e., has influence function equal to the EIF).
We emphasize that while most attention has been placed on one-step adjustment of non-RAL estimators in nonparametric models to gain $N^{1/2}$ consistency,
here we propose one-step adjustment of RAL estimators in our proper semiparametric models $\cp_{\cm}$,
where there is typically than one possible influence function.

Let $\varphi_0=\varphi_0(\cdot;\tau^*,\eta^*)$ be the influence function of an initial RAL estimator $\hat{\tau}$ in the semiparametric model $\cp_{\cm}$,
as in Section~\ref{sec:varphi_0}.
Expanding the definition of the nuisance parameters $\eta^*$ if necessary,
let $\varphi=\varphi(\cdot;\tau^*,\eta^*)$ be the desired influence function of an RAL estimator of $\tau$
(in our context, generally the EIF).
By Theorem~\ref{thm:influence_functions},
we can write $\varphi=\varphi_0-g$ where $g=g(\cdot;\tau^*,\eta^*) \in \ct_{\cm}^{\perp}$.
Since $\tau^*$ and $\eta^*$ are not known,
we need to plug in initial estimates $\hat{\tau}$ and $\hat{\eta}$.
Ensuring this initial plug-in procedure does not affect the first-order asymptotics of the one-step estimator $\hat{\tau}_{\os} = \hat{\tau} - N^{-1} \sum_{i=1}^N  g(W_i;\hat{\tau},\hat{\eta})$ requires that
\begin{equation}
\label{eq:g_hat_minus_g}
\frac{1}{N} \sum_{i=1}^N g(W_i;\hat{\tau},\hat{\eta})-g(W_i;\tau^*,\eta^*) = o_p(N^{-1/2}).
\end{equation}
One challenge with satisfying~\eqref{eq:g_hat_minus_g} is that the nuisance parameters $\eta^*$ are typically infinite dimensional,
and thus need to be estimated nonparametrically to avoid additional modeling assumptions.
A popular approach to weaken the conditions needed for~\eqref{eq:g_hat_minus_g} to hold with nonparametric nuisance estimates is cross-fitting.
Cross-fitting entails partitioning 
the observation indices $1,\ldots,N$ into $K$ roughly equally-sized folds $\ci_1,\ldots,\ci_K$
for some fixed $K \geq 2$.
Then for each fold $k=1,\ldots,K$,
we find estimates $\hat{\tau}^{(-k)}$ and $\hat{\eta}^{(-k)}$ of $\tau^*$ and $\eta^*$,
respectively,
using only the observations whose indices lie \emph{outside} $\ci_k$.
As we will see,
orthogonality properties of the influence function $\varphi$ enable the following cross-fit variant of~\eqref{eq:g_hat_minus_g}
to hold even when the components of $\hat{\eta}^{(-k)}$ converge at nonparametric rates:
\begin{equation}
\label{eq:g_hat_minus_g_crossfit}
\frac{1}{N} \sum_{k=1}^K \sum_{i \in \ci_k} g\left(W_i;\hat{\tau}^{(-k)},\hat{\eta}^{(-k)}\right)-g(W_i;\tau^*,\eta^*) = o_p(N^{-1/2}).
\end{equation}
Such a result,
with sufficient conditions stated precisely in Lemma~\ref{lemma:dml_master} below,
closely follows the literature on double machine learning~\citep{van_der_laan_cross-validated_2011, chernozhukov2018double}.

We assume cross-fitting into $K \geq 2$ folds for the remainder of this section,
and adopt the notation introduced in the previous paragraph in our formal results.
First,
we formalize the construction of our cross-fit one-step estimator $\hat{\tau}_{\os}$.

\begin{theorem}
\label{thm:cross_fit_os}
Let $\hat{\tau}_0$ be any RAL estimator of pathwise differentiable $\tau \in \real$ in some semiparametric model with influence function $\varphi_0=\varphi_0(\cdot;\tau^*,\eta^*)$.
Suppose~\eqref{eq:g_hat_minus_g_crossfit} holds for some $g=g(\cdot;\tau^*,\eta^*) \in \ch^0$
and cross-fit estimates $\hat{\tau}^{(-k)}$ and $\hat{\eta}^{(-k)}$ of $\tau^*$ and $\eta^*$,
respectively,
for $k=1,\ldots,K$.
Then
\begin{equation}
\label{eq:tau_hat_os}
\hat{\tau}_{\os} = \hat{\tau}_0 - \frac{1}{N} \sum_{k=1}^K \sum_{i \in \ci_k} g\left(W_i;\hat{\tau}^{(-k)},\hat{\eta}^{(-k)}\right)
\end{equation}
satisfies the central limit theorem $\sqrt{N}(\hat{\tau}_{\os}-\tau^*) \stackrel{d}{\rightarrow} \mathcal{N}(0,\e[\varphi(W)^2])$ whenever the model holds,
for $\varphi(\cdot)=\varphi(\cdot;\tau^*,\eta^*) = \varphi_0(\cdot;\tau^*,\eta^*)-g(\cdot;\tau^*,\eta^*)$.
\end{theorem}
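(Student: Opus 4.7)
The plan is to exploit the given hypothesis~\eqref{eq:g_hat_minus_g_crossfit} to reduce the analysis of $\hat{\tau}_{\os}$ to the asymptotic linearity of $\hat{\tau}$ plus a centered empirical sum of $g(\cdot; \tau^*, \eta^*)$. First, I would rearrange the definition~\eqref{eq:tau_hat_os} to write
$$\sqrt{N}\bigl(\hat{\tau}_{\os}-\tau^*\bigr) = \sqrt{N}(\hat{\tau}-\tau^*) - \frac{1}{\sqrt{N}}\sum_{k=1}^K \sum_{i \in \ci_k} g\bigl(W_i;\hat{\tau}^{(-k)},\hat{\eta}^{(-k)}\bigr).$$
Since $\hat{\tau}$ is RAL with influence function $\varphi_0$, the first term on the right equals $N^{-1/2}\sum_{i=1}^N \varphi_0(W_i;\tau^*,\eta^*) + o_p(1)$.

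Next, I would add and subtract $N^{-1/2}\sum_{i=1}^N g(W_i;\tau^*,\eta^*)$ inside the correction term so that
$$\frac{1}{\sqrt{N}}\sum_{k=1}^K \sum_{i \in \ci_k} g\bigl(W_i;\hat{\tau}^{(-k)},\hat{\eta}^{(-k)}\bigr) = \frac{1}{\sqrt{N}}\sum_{i=1}^N g(W_i;\tau^*,\eta^*) + R_N,$$
where $R_N = \sqrt{N}\bigl[N^{-1}\sum_{k=1}^K \sum_{i \in \ci_k}\{g(W_i;\hat{\tau}^{(-k)},\hat{\eta}^{(-k)}) - g(W_i;\tau^*,\eta^*)\}\bigr]$. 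The bracketed quantity is precisely the one assumed to be $o_p(N^{-1/2})$ in~\eqref{eq:g_hat_minus_g_crossfit}, so $R_N = o_p(1)$. Combining the two displays yields
$$\sqrt{N}\bigl(\hat{\tau}_{\os} - \tau^*\bigr) = \frac{1}{\sqrt{N}} \sum_{i=1}^N \bigl[\varphi_0(W_i;\tau^*,\eta^*) - g(W_i;\tau^*,\eta^*)\bigr] + o_p(1) = \frac{1}{\sqrt{N}}\sum_{i=1}^N \varphi(W_i) + o_p(1).$$

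Finally, I would conclude by invoking the classical i.i.d.\ central limit theorem applied to $\varphi(W_i)$. This is justified because both $\varphi_0(\cdot;\tau^*,\eta^*)$ (as an influence function) and $g(\cdot;\tau^*,\eta^*)$ (by the hypothesis of the theorem) lie in $\ch^0$, so their difference $\varphi$ is mean zero and square integrable under $P^*$. This delivers $\sqrt{N}(\hat{\tau}_{\os}-\tau^*) \stackrel{d}{\rightarrow} \mathcal{N}(0,\e[\varphi(W)^2])$ as claimed. There is essentially no substantive obstacle in the proof itself: all of the analytic work has been compressed into the cross-fit remainder bound~\eqref{eq:g_hat_minus_g_crossfit}, which is presumably established separately (and which typically requires orthogonality properties of $g$, the independence afforded by sample splitting, and product-rate conditions on the nuisance estimators). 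The only care needed is to ensure the centering is done under the \emph{true} $(\tau^*,\eta^*)$ rather than the plug-in values, so that the linearization of $\hat{\tau}$ and the remainder~\eqref{eq:g_hat_minus_g_crossfit} can be combined without an extra bias term.
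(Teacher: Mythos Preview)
Your proposal is correct and follows essentially the same approach as the paper's proof: rearrange~\eqref{eq:tau_hat_os}, use the asymptotic linearity of $\hat{\tau}$ for the first term, invoke~\eqref{eq:g_hat_minus_g_crossfit} to replace the plug-in $g$ sum by its oracle counterpart up to $o_p(1)$, and finish with the ordinary CLT. The paper compresses your add-and-subtract step into a single display, but the logic is identical.
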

\begin{proof}
By the definition of $\hat{\tau}_{\os}$ in~\eqref{eq:tau_hat_os} we have
\begin{align*}
\sqrt{N}(\hat{\tau}_{\os}-\tau^*) & = \sqrt{N}(\hat{\tau}_0-\tau^*) - \frac{1}{\sqrt{N}} \sum_{k=1}^K \sum_{i \in \ci_k} g\left(W_i;\hat{\tau}^{(-k)},\hat{\eta}^{(-k)}\right) \\
& = \frac{1}{\sqrt{N}} \sum_{i=1}^N \varphi_0(W_i;\tau^*,\eta^*)-g(W_i;\tau^*,\eta^*) + o_p(1)
\end{align*}
where by~\eqref{eq:g_hat_minus_g_crossfit} and the definition of $\varphi_0$,
the second equality holds whenever the model holds.
The result follows by the ordinary central limit theorem.
\end{proof}

Theorem~\ref{thm:cross_fit_os} does not use any properties about $g$ --- not even that $\varphi=\varphi_0-g$ is a valid influence function of an RAL estimator of $\tau$.
It also does not consider solely models of the form $\cp_{\cm}$.
However, the geometry of the space of influence functions of RAL estimators in the model $\cp_{\cm}$ given by Theorem~\ref{thm:influence_functions} ensures that the condition~\eqref{eq:g_hat_minus_g_crossfit} can be feasibly satisfied when $\varphi$ is in fact a valid influence function of a RAL estimator (i.e., that $g \in \ct_{\cm}^{\perp}$).
Of course, this is necessarily the case when $\varphi=\varphi_{\eff}$, i.e., we are targeting the EIF.

\begin{lemma}
\label{lemma:dml_master}
Suppose the function $g=g(\cdot;\tau^*,\eta^*)$ is any element of the space $\ct_{\cm}^{\perp}$ for
some collection of outcome mean functions $\cm \subseteq \ch_R$ satisfying the conditions of Theorem~\ref{thm:semiparametric_tangent_space},
so that we can write
\[
g(w;\tau^*,\eta^*) = (y-m^*(r))h_1(r;\tau^*,\eta^*)
\]
where $\eta^*$ includes the mean function $m^*$ and
$h_1^*(\cdot) =  h_1(\cdot;\tau^*,\eta^*) \in \cs_{\cm}^{\perp}$.
Then equation~\eqref{eq:g_hat_minus_g_crossfit} holds
under the following conditions:
\begin{itemize}
\item (Boundedness) There exists $C < \infty$ for which $V^*(r) \leq C$ for $P^*$-almost every $r$ and both $\|h_1^*\|_{\infty,P^*} \leq C$ and $\left\|\hat{h}_1^{(-k)}\right\|_{\infty,P^*} \leq C$ where $\|f\|_{\infty,P^*} = \inf_M \Pr(|f(R)| \leq M) = 1$
\item (Approximate tangency) With probability tending to 1,
for each $k=1,\ldots,K$ there exists $\hat{R}^{(-k)}=\hat{R}^{(-k)}(\cdot) \in \ch_R$ depending only on the observations outside fold $k$ such that the function $r \mapsto \hat{m}^{(-k)}(r) - m^*(r)-\hat{R}^{(-k)}(r)$ lies in the outcome mean function tangent space $\cs_{\cm}$,
$\|\hat{R}^{(-k)}\|_{2,P^*} = o_p(1)$,
and the following rate holds:
\begin{equation}
\label{eq:R_hat}
\frac{1}{|\ci_k|} \sum_{i \in \ci_k} \left|\hat{R}^{(-k)}(R_i)\right| = o_p(N^{-1/2})
\end{equation}
\item (Rate conditions) The following rate conditions hold for all $k=1,\ldots,K$:
\begin{itemize}
    \item (First order consistency) $\|\hat{m}^{(-k)}-m^*\|_{2,P^*} + \|\hat{h}_1^{(-k)}-h_1^*\|_{2,P^*} = o_p(1)$ 
    \item (Product rate condition) $\|\hat{m}^{(-k)}-m^*\|_{2,P^*} \times \|\hat{h}_1^{(-k)}-h_1^*\|_{2,P^*} = o_p(N^{-1/2})$
\end{itemize}
where $\|f\|_{2,P^*} = (\int f^2(w) dP^*(w))^{1/2}$.
\end{itemize}
\end{lemma}

In the proof of Lemma~\ref{lemma:dml_master} in Appendix~\ref{app:proof_dml_master},
we generalize Lemma~\ref{lemma:dml_master} to cover models $\tilde{\cp}_{\cm}$ where the experimental propensity score is known (cf. Corollary~\ref{cor:known_propensity_score}).
The main conditions in Lemma~\ref{lemma:dml_master} that need to be verified for a given choice of $\cm$ are approximate tangency and the rate conditions.
One can check that approximate tangency holds with $\hat{R}^{(-k)}=0$ if $m^* \in \cm$ and $\hat{m}^{(-k)} \in \cm$ for all folds $k$ with probability tending to 1,
and $\cm$ is a closed linear space.
Otherwise, approximate tangency can be established (with generally nonzero remainders $\hat{R}^{(-k)}$) via Taylor expansion in appropriately smooth models
such as $\cp_{\cm_5}$,
provided that the outcome mean function estimates $\hat{m}^{(-k)}$ lie in $\cm$.
The rate conditions required in Lemma~\ref{lemma:dml_master} are standard in the double machine learning literature.
Substantial work has established precise conditions under which they hold for particular nonparametric estimators~\citep{chernozhukov2018double,benkeser2016highly,schuler2024highly}.

\section{Some specific novel efficient estimators}
\label{sec:applications}
We are now ready to put everything together to specify novel one-step efficient estimators of the average treatment effects $\tau_{\rct}$, $\tau_{\obs}$, and $\tau_{\tgt}$ in the models $\cp_{\cm_4}$ and $\cp_{\cm_5}$
that assume linear confounding bias and outcome-mediated selection bias,
respectively.

In Appendix~\ref{app:linear_confounding_bias},
we show that the EIF $\varphi_{\eff}^{(4)}$ in the linear confounding bias model $\cp_{\cm_4}$ in terms of any initial influence function $\varphi_0=\varphi_0(\cdot;\tau^*,\eta^*)$ of an RAL estimator of $\tau$ is given by
\begin{equation}
\label{eq:eif_4}
\varphi_{\eff}^{(4)}(w;\tau^*,\eta^*) = \varphi_0(w;\tau^*,\eta^*) - h(r;\eta^*)\nu(x;\eta^*,\tau^*,\varphi_0)(y-m^*(r))
\end{equation}
where
\begin{align}
h(r;\eta) & = \frac{sz}{p(x)e(x)}-\frac{s(1-z)}{p(x)(1-e(x))} -\frac{(1-s)z}{(1-p(x))q(x)} + \frac{(1-s)(1-z)}{(1-p(x))(1-q(x))} \label{eq:f_1} \\
I(x;\eta,\tau,\varphi_0) & = \int \varphi_0(w;\tau,\eta)(y-m(r))h(r;\eta) dP_{(S,Z,Y) \mid X}(s,z,y;x) \label{eq:I_x} \\
\nu(x;\eta,\tau,\varphi_0) & =
p(x)e(x)\frac{I(x;\eta,\tau,\varphi_0) + \lambda(\eta,\tau,\varphi_0)^{\top}\psi(x)}{\Sigma(x;\eta)} \label{eq:zeta_1_star} \\
\Sigma(x;\eta) & = V_{11}(x)+ \frac{V_{10}(x)e(x)}{1-e(x)} + \frac{V_{01}(x)p(x)e(x)}{(1-p(x))q(x)} + \frac{V_{00}(x)p(x)e(x)}{(1-p(x))(1-q(x))}\label{eq:Sigma_star} \\
\lambda(\eta,\tau,\varphi_0) & = 
-\left(\int \frac{p(x)e(x)}{\Sigma(x;\eta)}\psi(x)\psi(x)^{\top}dP_X(x)\right)^{-1} \int \frac{I(x;\eta,\tau,\varphi_0)p(x)e(x)}{\Sigma(x;\eta)}\psi(x) dP_X(x)\label{eq:lambda}
\end{align}
for $\eta = \eta(P)$ including the nuisance functions $(p(\cdot),e(\cdot),m(\cdot),q(\cdot),V(\cdot))$
along with any other functionals that are needed to compute 
\[
I(x;\eta^*,\tau^*,\nu_0) = \e[\varphi_0(W;\tau^*,\eta^*)(Y-m^*(R))h(R;\eta^*) \mid X=x].
\]
Applying the construction~\eqref{eq:tau_hat_os} to the differences 
\[
g(\cdot;\tau,\eta)=\varphi_{\eff}^{(4)}(\cdot;\tau,\eta)-\varphi_0(\cdot;\tau,\eta)=h(r;\eta)\nu(x;\tau,\eta)(y-m(r))
\]
gives the efficient cross-fit one-step estimators
\begin{align}
\label{eq:tau_hat_eff_4}
\hat{\tau}_{\eff}^{(4)} = \hat{\tau}_0 - \frac{1}{N} \sum_{k=1}^K \sum_{i \in \ci_k} h\Big(R_i;\hat{\eta}^{(-k)}\Big)\nu\Big(X_i;\hat{\eta}^{(-k)},\hat{\tau}_0,\varphi_0\Big)\Big(Y_i-\hat{m}^{(-k)}(R_i)\Big).
\end{align}
where $\hat{\tau}_0$ is any initial RAL estimator with influence function $\varphi_0$.
Taking $\varphi_0=\varphi_{0,\rct},\varphi_{0,\obs},\varphi_{0,\tgt}$ gives efficient estimators
for the estimands $\tau_{\rct},\tau_{\obs},\tau_{\tgt}$, respectively.
In that case,
$\hat{\tau}_0$ can be a cross-fit AIP(S)W estimator obtained by solving the estimating equations 
\begin{align}
\label{eq:aipsw_estimating_equations}
N^{-1} \sum_{k=1}^K \sum_{i \in \ci_k} \varphi_0(W_i;\tau;\hat{\eta}^{(-k)}) = 0
\end{align}
for $\varphi_0=\varphi_{0,\rct},\varphi_{0,\obs},\varphi_{0,\tgt}$.
We remind the reader that these AIP(S)W estimators are efficient in the nonparametric model.
For convenience in computing~\eqref{eq:tau_hat_eff_4}
for the estimands $\tau_{\rct},\tau_{\obs},\tau_{\tgt}$,
note
\begin{align*}
I(x;\eta,\tau, \varphi_{0,\rct}) & =  \frac{1}{\rho}\left(\frac{V_{11}(x)}{e(x)} + \frac{V_{10}(x)}{1-e(x)}\right) \\
I(x;\eta,\tau,\varphi_{0,\obs}) & = \frac{\rho(1-p(x))}{p(x)(1-\rho)} I(x;\eta,\tau,\varphi_{0,\rct}),
\quad I(x;\eta,\tau,\varphi_{0,\tgt}) = \frac{\rho}{p(x)}I(x;\eta,\tau,\varphi_{0,\rct}).
\end{align*}

Similarly, by~\eqref{eq:outcome_selection_bias_eif} we can compute efficient cross-fit one-step estimators
\begin{align}
\label{eq:tau_hat_eff_5}
\hat{\tau}_{\eff}^{(5)} = \hat{\tau}_0 - \frac{1}{N} \sum_{k=1}^K \sum_{i \in \ci_k} f\left(R_i;\hat{\eta}^{(-k)}\right)\zeta\left(X_i;\hat{\eta}^{(-k)},\hat{\tau}_0,\varphi_0\right)\left(Y_i-\hat{m}^{(-k)}(R_i)\right)
\end{align}
in the outcome-mediated selection bias model $\cp_{\cm_5}$,
where $\eta(\cdot)=(p(\cdot),e(\cdot),m(\cdot),q(\cdot))$.
For convenience in computing estimators for $\tau_{\rct}$, $\tau_{\obs}$, and $\tau_{\tgt}$, we write
\begin{align*}
\zeta(x;\eta,\tau,\varphi_{0,\rct}) & = \frac{\ell'(m_{11}(x))V_{11}(x)+\ell'(m_{10}(x))\frac{e(x)}{1-e(x)}V_{10}(x)}{\rho e(x) D(x;\eta)} \\
\zeta(x;\eta,\tau,\varphi_{0,\obs}) & = \frac{\rho(1-p(x))}{(1-\rho)p(x)}\zeta(x;\eta,\tau,\varphi_{0,\rct}), \quad 
\zeta(x;\eta,\tau,\varphi_{0,\tgt}) = \frac{\rho}{p(x)}\zeta(x;\eta,\tau,\varphi_{0,\rct})
\end{align*}
where $V_{sz}(x)=m_{sz}(x)(1-m_{sz}(x))$ for $(s,z) \in \{0,1\}^2$ since $Y \in \{0,1\}$ and 
\begin{align*}
D(x;\eta) & = V_{11}(x)(\ell'(m_{11}(x)))^2 + \frac{e(x)}{1-e(x)}V_{10}(x)(\ell'(m_{10}(x)))^2 \\
& \quad + \frac{p(x)e(x)}{(1-p(x))q(x)}(\ell'(m_{01}(x)))^2V_{01}(x) + (\ell'(m_{00}(x))^2V_{00}(x)\frac{p(x)e(x)}{(1-p(x))(1-q(x))}.
\end{align*}

Of course,
the efficiency of $\hat{\tau}_{\eff}^{(4)}$ and $\hat{\tau}_{\eff}^{(5)}$ requires that~\eqref{eq:g_hat_minus_g_crossfit} holds.
In Appendix~\ref{app:efficiency_conditions} we provide specific primitive conditions that are sufficient for the conditions of Lemma~\ref{lemma:dml_master} (and thus~\eqref{eq:g_hat_minus_g_crossfit}) to hold.
The main non-regularity requirement in these conditions is that the cross-fit outcome mean function estimates $\hat{m}^{(-k)}$ lie in $\cm$.

We also argue in Appendix~\ref{app:propensity_score_does_not_help} that for the estimands $\tau_{\rct}$, $\tau_{\obs}$, and $\tau_{\tgt}$,
any efficient estimators in the model $\cp_{\cm}$
(for any $\cm$ satisfying the conditions of Theorem~\ref{thm:semiparametric_tangent_space})
is also efficient in the model $\tilde{\cp}_{\cm}$.
In other words,
knowledge of the experimental propensity score does not enable asymptotic variance reductions for $\tau_{\rct}$, $\tau_{\obs}$, and $\tau_{\tgt}$ under \emph{any} outcome mean function restrictions.

\section{Simulations}
\label{sec:simulations_data_fusion}

We study the finite-sample performance of our efficient estimators $\hat{\tau}_{\eff}^{(5)}$ from~\eqref{eq:tau_hat_eff_5} for the estimands $\tau_{\rct}$, $\tau_{\obs}$, and $\tau_{\tgt}$ under outcome-mediated selection bias in two numerical simulation scenarios.
All code and data to reproduce these results and those of the next section can be found at \url{https://github.com/hli90722/combining_experimental_observational}.
We compare performance with both baseline AIP(S)W estimators and the control variate estimators proposed by~\citet{guo2022multi}.
Asymptotically, we know our efficient estimators must dominate the control variate estimators,
which in turn dominate the AIP(S)W estimators as shown in~\citet{guo2022multi}.

The first simulation scenario 
is identical to that considered in Section 6 of~\citet{guo2022multi}.
We call it the ``discrete" scenario,
as it has two covariates $X_1$ and $X_2$ which are i.i.d. Bern(0.5),
with treatment and binary outcomes generated according to logistic models.
The second simulation setting is a ``continuous" scenario where $X_1$ and $X_2$ are independent uniformly distributed random variables on $[-1,1]$,
and there are interaction terms in the logistic models generating treatment and outcomes.
See Appendix~\ref{app:sim_details} for the exact data generating processes.
In both scenarios,
some of the observations are subject to strong outcome-mediated selection
where cases ($Y=1$) are included with probability 0.9 and controls ($Y=0$) are included with probability 0.1.
These observations form a synthetic biased observational dataset of size $m=3000$.
The remaining observations are not subject to selection
and have a size $n_{\rct} \in \{300, 600, 1000, 3000\}$.
This selection process is identical to that considered in~\citet{guo2022multi}.

In both scenarios,
the RCT propensity score $e$ is considered known;
this does not change the efficiency bound,
as stated at the end of the previous section.
In the discrete scenario,
the nuisance functions $m$, $p$, and $q$ are estimated by the appropriate sample averages with 1 and 2 added to the numerator and denominator, respectively.
In the continuous scenario,
we use multivariate adaptive regression splines  (MARS;~\citet{friedman1991multivariate}) in $X$ with logit link and final terms chosen via generalized cross validation as implemented by the \texttt{earth} package in the R language~\citep{earthpackage} to fit the regressions defining the nuisance functions $m_{10}$, $m_{01}$, $m_{00}$, $p$, and $q$.
All MARS regressions are cross-fit with $K=5$ folds.
For stability, we truncate each MARS prediction to the interval $[1/\sqrt{n},1-1/\sqrt{n}]$ for $n$ the number of observations used to fit the corresponding model.
As a reminder,
to ensure the efficiency of our one-step estimators,
we must enforce~\eqref{eq:odds_ratio}.
Hence we set
\[
\hat{m}_{11}^{(-k)}(x) = \ell^{-1}\left(\ell\left(\hat{m}_{10}^{(-k)}(x)\right)+\ell\left(\hat{m}_{01}^{(-k)}(x)\right)-\ell\left(\hat{m}_{00}^{(-k)}(x)\right)\right), \quad k=1,\ldots,K.
\]

We compute our baseline AIPSW estimators $\hat{\tau}_{\ba}$ by solving~\eqref{eq:aipsw_estimating_equations}.
Our efficient one-step estimators $\hat{\tau}_{\eff}^{(5)}$ are computed using~\eqref{eq:tau_hat_eff_5}
with $\hat{\tau}_{\ba}$ as the initial estimator $\hat{\tau}_0$.
We also use $\hat{\tau}_{\ba}$ as the baseline estimator to construct the control variate estimators $\hat{\tau}_{\cv}$ of~\citet{guo2022multi},
and follow that paper's guidance for choosing control variate(s) and computing the adjustment term (see Appendix~\ref{app:cv_detail} for more details).
Their guidance is not based on asymptotic variance considerations,
and we note that using an efficient estimator eliminates the need to choose specific control variates.
Finally, to evaluate the finite sample impacts of estimating the nuisance functions on estimator performance,
we also consider oracle variants of all estimators that plug in the true nuisance functions $\eta^*$.

\begin{table}[!tb]
\centering
\caption{The estimated relative efficiencies of the feasible (resp. oracle) efficient one step estimators $\hat{\tau}_{\eff}^{(5)}$ and the control variate estimators $\hat{\tau}_{\cv}$ compared with the baseline AIP(S)W estimators $\hat{\tau}_{\ba}$ in the discrete simulation scenario as a function of the simulated RCT size $n_{\rct}$.
The parentheses give 95\% bootstrap confidence intervals for the relative efficiency,
reflecting the uncertainty from 1,000 Monte Carlo simulations.
}
\label{table:odds_ratio_discrete_sim}

\begin{tabular}{|c|c||c|c||c|c|}
\hline
Estimand & $n_{\rct}$ & Feasible $\hat{\tau}_{\cv}$ & Feasible $\hat{\tau}_{\eff}^{(5)}$ & Oracle $\hat{\tau}_{\cv}$ & Oracle $\hat{\tau}_{\eff}^{(5)}$ \\
\hline
\multirow{4}{0.15\linewidth}{\centering $\tau_{\rct}$} & 300 & 2.37 (2.18, 2.54) & \textbf{3.23} (2.86, 3.56) & 2.19 (2.04, 2.33) & \textbf{3.50} (3.11, 3.85) \\
& 600 & 1.94 (1.77, 2.10) & \textbf{2.29} (2.05, 2.51)  & 1.94 (1.76, 2.09) & \textbf{2.39} (2.15, 2.61) \\
& 1000 & 1.62 (1.50, 1.74) & \textbf{1.79} (1.64, 1.93) & 1.64 (1.51, 1.75) & \textbf{1.82} (1.67, 1.96)\\
& 3000 & 1.05 (0.96, 1.13) & \textbf{1.26} (1.18, 1.33) & 1.05 (0.96, 1.12) & \textbf{1.26} (1.19, 1.34) \\
\hline
\multirow{4}{0.15\linewidth}{\centering $\tau_{\obs}$} & 300 & 2.73 (2.49, 2.95) & \textbf{3.89} (3.40, 4.30) & 2.48 (2.29, 2.66) & \textbf{4.34} (3.83, 4.80) \\
& 600 & 2.16 (1.96, 2.33) & \textbf{2.57} (2.29, 2.83)  & 2.15 (1.95, 2.33) & \textbf{2.67} (2.39, 2.93) \\
& 1000 & 1.78 (1.64, 1.91) & \textbf{1.96} (1.79, 2.11) & 1.77 (1.64, 1.89) & \textbf{2.00} (1.83, 2.15)\\
& 3000 & 1.09 (1.02, 1.16) & \textbf{1.25} (1.17, 1.33) & 1.08 (1.00, 1.14) & \textbf{1.25} (1.18, 1.32) \\
\hline
\multirow{4}{0.15\linewidth}{\centering $\tau_{\tgt}$} & 300 & 2.74 (2.50, 2.95) & \textbf{3.91} (3.42, 4.32) & 2.48 (2.29, 2.65) & \textbf{4.37} (3.85, 4.82) \\
& 600 & 2.16 (1.96, 2.34) & \textbf{2.58} (2.30, 2.84)  & 2.15 (1.95, 2.33) & \textbf{2.69} (2.40, 2.95) \\
& 1000 & 1.77 (1.63, 1.90) & \textbf{1.95} (1.78, 2.11) & 1.76 (1.63, 1.89) & \textbf{1.99} (1.82, 2.14)\\
& 3000 & 1.08 (1.00, 1.15) & \textbf{1.27} (1.19, 1.35) & 1.07 (0.98, 1.14) & \textbf{1.27} (1.20, 1.35) \\
\hline
\end{tabular}
\end{table}

\begin{table}[!tb]
\centering
\caption{Same as Table~\ref{table:odds_ratio_discrete_sim} but for the continuous simulation scenario.
}
\label{table:odds_ratio_continuous_sim}

\begin{tabular}{|c|c||c|c||c|c|}
\hline
Estimand & $n_{\rct}$ & Feasible $\hat{\tau}_{\cv}$ & Feasible $\hat{\tau}_{\eff}^{(5)}$ & Oracle $\hat{\tau}_{\cv}$ & Oracle $\hat{\tau}_{\eff}^{(5)}$ \\
\hline
\multirow{4}{0.15\linewidth}{\centering $\tau_{\rct}$} & 300 & 1.49 (1.42, 1.55) & \textbf{2.47} (2.19, 2.74) & 1.50 (1.42, 1.57) & \textbf{3.28} (2.95, 3.58) \\
& 600 & 1.44 (1.38, 1.50) & \textbf{1.99} (1.77, 2.18)  & 1.46 (1.40, 1.53) & \textbf{2.34} (2.12, 2.54) \\
& 1000 & 1.45 (1.38, 1.52) & \textbf{1.82} (1.63, 1.98) & 1.44 (1.37, 1.50) & \textbf{1.91} (1.75, 2.07)\\
& 3000 & 1.12 (1.10, 1.15) & \textbf{1.27} (1.18, 1.35) & 1.12 (1.10, 1.15) & \textbf{1.28} (1.20, 1.35) \\
\hline
\multirow{4}{0.15\linewidth}{\centering $\tau_{\obs}$} & 300 & 1.50 (1.42, 1.57) & \textbf{2.61} (2.28, 2.90) & 1.58 (1.50, 1.66) & \textbf{4.81} (4.25, 5.29) \\
& 600 & 1.46 (1.38, 1.53) & \textbf{2.10} (1.85, 2.32)  & 1.57 (1.48, 1.64) & \textbf{2.73} (2.46, 2.99) \\
& 1000 & 1.45 (1.39, 1.52) & \textbf{1.96} (1.74, 2.15) & 1.50 (1.43, 1.57)) & \textbf{2.15} (1.95, 2.33)\\
& 3000 & 1.10 (1.07, 1.12) & \textbf{1.24} (1.13, 1.33) & 1.10 (1.07, 1.12) & \textbf{1.28} (1.20, 1.35) \\
\hline
\multirow{4}{0.15\linewidth}{\centering $\tau_{\tgt}$} & 300 & 1.51 (1.43, 1.58) & \textbf{2.66} (2.32, 2.96) & 1.58 (1.50, 1.66) & \textbf{4.81} (4.26, 5.30) \\
& 600 & 1.48 (1.41, 1.55) & \textbf{2.12} (1.88, 2.35)  & 1.57 (1.49, 1.65) & \textbf{2.76} (2.48, 3.02) \\
& 1000 & 1.49 (1.42, 1.56) & \textbf{1.99} (1.77, 2.18) & 1.51 (1.44, 1.58) & \textbf{2.17} (1.97, 2.36) \\
& 3000 & 1.12 (1.10, 1.15) & \textbf{1.26} (1.16, 1.36) & 1.12 (1.09, 1.14) & \textbf{1.30} (1.22, 1.38) \\
\hline
\end{tabular}
\end{table}

From Tables~\ref{table:odds_ratio_discrete_sim} and~\ref{table:odds_ratio_continuous_sim},
we see the oracle one step estimator $\hat{\tau}_{\eff}^{(5)}$ has higher relative efficiency (i.e. lower MSE) than the corresponding oracle control variate estimator $\hat{\tau}_{\cv}$
for all three estimands,
all RCT sizes,
and both simulation scenarios (discrete and continuous).
Relative efficiency for $\hat{\tau}_{\cv}$ and $\hat{\tau}_{\eff}^{(5)}$ is defined as the MSE of the baseline $\hat{\tau}_{\ba}$ divided by the MSE of the estimator.
The computation is done separately for oracle variants of the estimators and feasible ones (i.e., those that estimate the nuisance functions as described above).
We find that estimating the nuisance functions leads to a similar degree of performance deterioration between the oracle and feasible versions for all estimators in the discrete simulations.
In the continuous scenario where the nuisance estimates are harder to compute,
the oracle and feasible relative efficiencies of $\hat{\tau}_{\cv}$ are quite similar.
This suggests that the control variate estimator $\hat{\tau}_{\cv}$ suffers a similar amount of performance degradation from nuisance estimation as the baseline estimator $\hat{\tau}_{\ba}$.
By contrast,
the one-step efficient estimator $\hat{\tau}_{\eff}^{(5)}$ exhibits a notable relative efficiency decline between the oracle and feasible variants,
particularly with smaller sample sizes.
This suggests that the one-step efficient estimator $\hat{\tau}_{\eff}^{(5)}$ is more sensitive to poor nuisance estimation in finite samples than the baseline and control variate estimators.
Nevertheless, the feasible efficient estimator $\hat{\tau}_{\eff}$ has much lower MSE (10-40\% reduction) than the feasible control variate estimator $\hat{\tau}_{\cv}$ across all estimands, RCT sizes, and scenarios.
This suggests the asymptotic variance gap between the control variate and efficient estimators is quite large.

\section{Data example}
\label{sec:tennessee_star}

We now consider a real data example
based on estimating $\tau_{\obs}$ in data from the Tennessee Student Teacher Achievement Ratio (STAR) study.
This is inspired by~\citet{kallus_removing_2018},
who use the same dataset and the linear confounding bias assumption~\eqref{eq:parametric_confounding_bias} to study treatment effect hetereogeneity.
The Tennessee STAR study was a large RCT designed to investigate whether small class sizes improve test outcomes.
For simplicity,
we only consider three binary covariates: gender, race (white or non-white), and whether the student received free lunch in grade 1.
We also include birth date as a continuous covariate.

Following~\citet{kallus_removing_2018},
the outcome is the sum of listening, reading, and math scores 
(divided by 100) at the end of first grade.
We randomly subsample 1405 students from rural or inner-city schools to be our RCT dataset,
and generate an observational dataset (with different covariate distribution) by combining the observations from the remaining 1406 students from rural or inner-city schools with 1407 students from urban or suburban schools.
We randomly set aside 30\% (1125) of the observations in the observational dataset as a ``validation set'' to represent the target population.
In the remaining 1688 observations,
we synthetically introduce confounding by reducing all outcomes above the median by the difference between the average outcome above the median and the average outcome below the median.
The nuisance functions $m_{10}$, $m_{01}$, $m_{00}$, $p$, and $q$ are computed using MARS and cross-fitting ($K=5$) in the same way as in the simulations of Section~\ref{sec:simulations_data_fusion},
except that for the outcome mean function estimates,
we now use the identity link and do not truncate.
The three binary predictors are encoded as 0 or 1.
We enforce~\eqref{eq:parametric_confounding_bias} by setting
\[
\hat{m}_{11}^{(-k)}(x) = \hat{m}_{10}^{(-k)}(x)+\hat{m}_{01}^{(-k)}(x) - \hat{m}_{00}^{(-k)}(x) + \psi(x)^{\top}\hat{\theta}, \quad k=1,\ldots,K
\]
where $\psi(x)$ consists of an intercept term and each of the covariates (additively)
and
\[
\hat{\theta} = \argmin_{\theta} \frac{1}{N} \sum_{k=1}^K \sum_{i \in \ci_k} S_i\left(Y_i\left[\frac{Z_i}{e^*}-\frac{1-Z_i}{1-e^*}\right]-\hat{m}_{01}^{(-k)}(X_i)-\hat{m}_{00}^{(-k)}(X_i)-\psi(X_i)^{\top}\theta\right)^2
\]
following~\citet{kallus_removing_2018},
where $e^*$ is the known constant RCT propensity score.
The astute reader might notice that
$\hat{\theta}$ is not cross-fit so this makes $\hat{m}_{11}^{(-k)}$ not cross-fit either.
This does not violate the theoretical guarantees of efficiency of our estimators, however,
since $\theta \mapsto \psi(x)^{\top}\theta$ is a Donsker class.
The variance functions $V_{11}$, $V_{10}$, $V_{01}$, $V_{00}$ are computed using a generalized linear model with logarithmic link predicting the squared residuals $(Y_i-\hat{m}^{(-k(i))}(R_i))^2$ from the covariates.
This is similar to~\citet{yang2024datafusion}.
Finally,
the estimators $\hat{\tau}_{\eff}^{(4)}$ and the baseline AIPSW estimator $\hat{\tau}_{\ba}$ are computed on a set of ``training observations'' consisting of a randomly sampled fraction $\xi \in \{0.2,0.4,0.6,0.8,1\}$ of our RCT dataset along with the entire confounded observational dataset of 1688 subjects.
The one-step estimator $\hat{\tau}_{\eff}^{(4)}$ is derived assuming~\eqref{eq:parametric_confounding_bias} holds with $\hat{\tau}_{\ba}$ as the initial estimator $\hat{\tau}_0$,
with influence function $\varphi_0$.

Taking an AIPSW estimate of $\tau_{\obs}$ computed via~\eqref{eq:aipsw_estimating_equations} using the validation set to be the true estimand $\tau_{\obs}$ for evaluation purposes,
we observe substantial precision gains from using the efficient one-step estimator $\hat{\tau}_{\eff}^{(4)}$ instead of the baseline AIPSW estimator $\hat{\tau}_{\ba}$ for all RCT sample sizes $\xi$.
The relative variance improvement from using the efficient one-step estimator appears to generally increase with $\xi$.
When the full RCT dataset is used ($\xi=1$), 
we observe a 75\% variance reduction in 1000 bootstrap replicates,
though there is substantial uncertainty in that estimate (Table~\ref{table:tennessee_star}).
Since this is a real data example,
we do not know~\eqref{eq:parametric_confounding_bias} to hold,
yet assuming~\eqref{eq:parametric_confounding_bias} (as $\hat{\tau}_{\eff}^{(4)}$ does)
appears to introduce a negligible amount of bias into $\hat{\tau}_{\eff}^{(4)}$ (Fig.~\ref{fig:tennessee_star}).
In other words,
the MSE reductions in Table~\ref{table:tennessee_star} from using $\hat{\tau}_{\eff}$ can be attributed almost entirely to variance reduction,
and are minimally offset by bias.

\begin{figure}[!tb]
\begin{center}
\includegraphics[width=\linewidth]{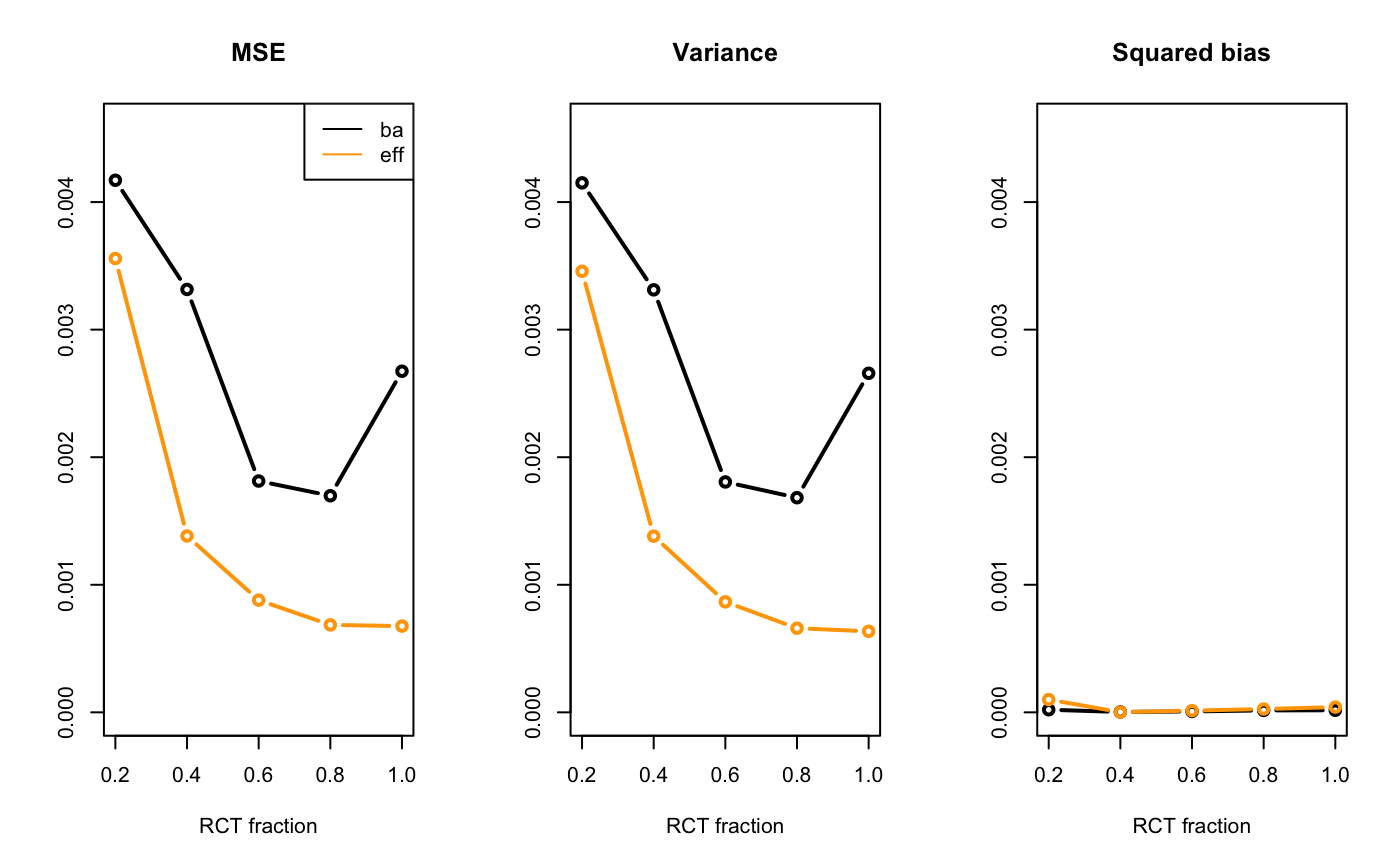}
\end{center}
\caption{The mean squared error (MSE), variance, and squared bias of our efficient one-step estimator $\hat{\tau}_{\eff}^{(4)}$ for $\tau_{\obs}$ and the baseline AIPSW estimator $\hat{\tau}_{\ba}$ in the Tennessee STAR data example described in Section~\ref{sec:tennessee_star},
as a function of the fraction $\xi$ of the entire RCT used.}
\label{fig:tennessee_star}
\end{figure}

\begin{table}[!tb]
\centering
\caption{The relative efficiency of $\hat{\tau}_{\eff}$ compared to $\hat{\tau}_{\ba}$ for estimating $\tau_{\obs}$ in the Tennessee STAR data example of Section~\ref{sec:tennessee_star},
as a function of the fraction $\xi$ of the RCT data used.
Confidence intervals for the relative efficiency are computed using 1000 bootstrap replications.
}
\label{table:tennessee_star}

\begin{tabular}{|c|c|}
\hline
RCT fraction $\xi$ & Relative efficiency (95\% CI) \\
\hline
0.2 & 1.17 (1.01, 1.31) \\
0.4 & 2.40 (1.65, 2.97) \\
0.6 & 2.06 (1.69, 2.36) \\
0.8 & 2.48 (1.72, 3.03) \\
1.0 & 3.95 (1.79, 5.49) \\
\hline
\end{tabular}
\end{table}

\section{Discussion}
\label{sec:discussion_data_fusion}

We have provided a general semiparametric theory for supervised learning
under generic restrictions on the outcome mean function.
Many assumptions used for data fusion in the recent causal inference literature are covered by this theory,
which we hope provides a valuable unifying perspective.
Sometimes these assumptions have not been fully leveraged for maximal variance reduction,
as illustrated by the inefficiency of control variate approaches in the outcome-mediated selection bias setting.
This paper has sought to present a streamlined way to derive fully efficient estimators under a large class of restrictions on outcome mean functions,
while also providing additional mathematical insights into the nature of efficiency bounds under such restrictions.

Our setting does not handle all types of data fusion restrictions that have been of interest.
For instance,
some authors have imposed structure on quantile treatment effects~\citep{athey_2023_semiparametric} or shape restrictions on the density function of a continuous outcome~\citep{li2023efficient},
which cannot be formulated as restrictions on the outcome mean function.
Other recent work~\citep{li2023efficient, li2025data} has also aimed to provide general data fusion frameworks guided by semiparametric efficiency.
These frameworks, however, do not cover many settings of interest that can be cast as restrictions on outcome mean functions.
For instance, consider our data fusion setting with $R=(S,Z,X)$ and suppose $Y \in \{0,1\}$,
so that the outcome mean function in fact specifies the entire conditional distribution of $Y$ given $(S,Z,X)$.
Then for the observational dataset to be used at all in estimating any parameter depending on the conditional distribution of $Y \mid Z,X,S=1$ (i.e. the experimental outcome distribution),~\citet{li2023efficient} require an ``alignment" assumption that amounts to $m(s,z,x)$ being independent of $s$.
The ``weak alignment" assumption of~\citet{li2025data} weakens this to allow 
\[
\ell(m(1,z,x)) = \ell(m(0,z,x)) + \beta^{\top} \psi(z,x)
\]
for some known low-dimensional basis expansion $\psi$.
If $x$ is continuous,
however, this is not flexible enough to cover the outcome-mediated selection bias setting of Example~\ref{ex:outcome_selection_bias},
which is equivalent to $\ell(m(1,z,x))-\ell(m(0,z,x)) = f(x)$, $z=0,1$ for arbitrary $f \in \ch_X$.
Conversely,
if $Y$ is continuous, the weak alignment assumption induces restrictions on higher moments of the conditional distribution of $Y$ given $R$ which cannot be expressed solely as restrictions on the outcome mean function.

Asymptotic efficiency also has some limitations as an objective for precise inference in finite samples.
Based on the simulations in Section~\ref{sec:simulations_data_fusion},
the one-step efficient estimators proposed in this work can have worse relative finite sample performance than asymptotics might suggest,
which we attribute to the need to estimate additional nuisance functions relative to the baselines.
Thus, we believe extensions of ideas from the nonparametric literature to improve nuisance estimation  such as
pooling the datasets~\citep{gagnon-bartsch_precise_2023,huang_leveraging_2023,liao_transfer_2023}
and Riesz regression~\citep{klosin2021automatic,chernozhukov2022automatic, lee2025rieszboost} may be useful to extend to models with outcome mean function restrictions.
Substantial recent work has also exhibited finite sample benefits to non-regular estimators combining experimental and observational datasets~\citep{chen_minimax_2021,cheng_adaptive_2021,yang2023elastic,dang_cross-validated_2023,lin_many_2023, rosenman2023combining}.
While such estimators are, in some sense, more robust in that they don't require assumptions on outcome mean functions,~\citet{oberst_understanding_2023} showed that many of these estimators will end up performing worse than the efficiency bound in the nonparametric model when the bias is larger than some constant multiple of $N^{-1/2}$~\citep{oberst_understanding_2023}.
Further work is needed to more precisely understand when such estimators may be beneficial.

\section*{Acknowledgments}
The author is funded by a Stanford Interdisciplinary Graduate Fellowship (SIGF). The author thanks Art Owen and several anonymous reviewers for careful and insightful comments that greatly improved this manuscript. He is also grateful for extremely helpful feedback on this work from Tim Morrison, Stefan Wager, Shu Yang, and Jann Spiess.

\appendix

\section{Proofs}
Here we collect the proofs of the results stated in the main text.

\subsection{Preliminaries}
\label{app:preliminaries}
First,
we state and prove some useful elementary results in asymptotic statistics. 

\begin{lemma}
\label{lemma:conditional_oh_pee}
Let $X_n$ be a sequence of random vectors and $\{\cf_n, n \geq 1\}$ be a sequence of $\sigma$-algebras such that $\e\bigl[\|X_n\| \mid \cf_n\bigr] = o_p(1)$.
Then $X_n = o_p(1)$.
\end{lemma}
\begin{proof}[Proof of Lemma~\ref{lemma:conditional_oh_pee}]
Fixing $M>0$, we have $M\indic(\|X_n\| > M) \leq \|X_n\|$ for all $n$.
Taking conditional expectations given $\cf_n$ on both sides we have
\begin{equation}
\label{eq:cond_markov}
P\bigl(\|X_n\| > M \mid \cf_n\bigr) \leq M^{-1}\e\bigl[\|X_n\| \mid \cf_n\bigr].
\end{equation}
Thus if $\e\bigl[\|X_n\| \mid \cf_n\bigr] = o_p(1)$ we have $\Pr(\|X_n\|>M \mid \cf_n) = o_p(1)$ as well.
But $\Pr(\|X_n\|>M \mid \cf_n)$ is uniformly bounded so its expectation converges to zero, i.e., $\Pr(\|X_n\|>M)=o(1)$. 
Since $M>0$ was arbitrary we conclude that $X_n=o_p(1)$.
\end{proof}

\begin{lemma}
\label{lemma:multiplication}
Let $\{\hat{f}_n\}_{n=1}^{\infty}$, $\{f_n\}_{n=1}^{\infty}$, $\{\hat{g}_n\}_{n=1}^{\infty}$, and $\{g_n\}_{n=1}^{\infty}$ be sequences of $P$-square integrable functions with $\|\hat{f}_n-f_n\|_{2,P} = o_p(a_n)$ and $\|\hat{g}_n-g_n\|_{2,P} = o_p(b_n)$ for some sequences $a_n \downarrow 0$, $b_n \downarrow 0$.
Further suppose that there exists $C < \infty$ independent of $n$ for which $\|\hat{f}_n\|_{\infty} + \|g_n\|_{\infty} \leq C$ for all $n \geq 1$.
Then $\|\hat{f}_n\hat{g}_n-f_ng_n\|_{2,P} = o_p(a_n) + o_p(b_n)$.
\begin{proof}
By Minkowksi's inequality we have
\begin{align*}
\|\hat{f}_n\hat{g}_n-f_ng_n\|_{2,P} & \leq  \|(\hat{f}_n-f_n)g_n\|_{2,P} + \|\hat{f}_n(\hat{g}_n-g_n)\|_{2,P} \\
& \leq  C\|\hat{f}_n-f_n\|_{2,P} + C\|\hat{g}_n-g_n\|_{2,P} \\
& = o_p(a_n) + o_p(b_n).
\end{align*}
\end{proof}
\end{lemma}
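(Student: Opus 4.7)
The plan is to reduce the product difference $\hat{f}_n \hat{g}_n - f_n g_n$ to pieces that can be bounded by the two given rates $o_p(a_n)$ and $o_p(b_n)$, using the $L^\infty$ bounds to promote $L^2$ rates of individual factors into $L^2$ rates of the products.

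First I would apply the standard add-and-subtract trick, writing
\[
\hat{f}_n \hat{g}_n - f_n g_n = (\hat{f}_n - f_n)\, g_n + \hat{f}_n\, (\hat{g}_n - g_n).
\]
(One could equally well add and subtract $f_n \hat{g}_n$; the choice is dictated by which factor has the uniform bound hypothesized. Since the assumption gives $\|\hat{f}_n\|_\infty \le C$ and $\|g_n\|_\infty \le C$, the above split is the natural one.) Then Minkowski's inequality in $L^2(P)$ yields
\[
\|\hat{f}_n \hat{g}_n - f_n g_n\|_{2,P} \le \|(\hat{f}_n - f_n)\, g_n\|_{2,P} + \|\hat{f}_n\, (\hat{g}_n - g_n)\|_{2,P}.
\]

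Next I would use the trivial inequality $\|ab\|_{2,P} \le \|a\|_{\infty,P}\|b\|_{2,P}$ on each term, together with the uniform bounds, to obtain $\|(\hat{f}_n - f_n) g_n\|_{2,P} \le C\|\hat{f}_n - f_n\|_{2,P}$ and $\|\hat{f}_n (\hat{g}_n - g_n)\|_{2,P} \le C \|\hat{g}_n - g_n\|_{2,P}$. Plugging in the hypothesized rates $\|\hat{f}_n - f_n\|_{2,P} = o_p(a_n)$ and $\|\hat{g}_n - g_n\|_{2,P} = o_p(b_n)$ and absorbing the constant $C$ into the $o_p(\cdot)$ notation gives the conclusion $\|\hat{f}_n \hat{g}_n - f_n g_n\|_{2,P} = o_p(a_n) + o_p(b_n)$.

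There is no real obstacle here; this is a routine ``product rule'' bound of the type used throughout the double machine learning literature. The only point worth flagging is that the uniform bounds in the hypothesis are asymmetric — they apply to $\hat{f}_n$ (the estimator) and to $g_n$ (the target of the second sequence) — so one must be careful to make the decomposition match: the factor that survives in each product must be one of $\hat{f}_n$ or $g_n$, not $f_n$ or $\hat{g}_n$. With the decomposition above, this matching is automatic.
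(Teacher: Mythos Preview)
Your proof is correct and follows essentially the same approach as the paper's own proof: the same add-and-subtract decomposition $\hat{f}_n\hat{g}_n - f_ng_n = (\hat{f}_n - f_n)g_n + \hat{f}_n(\hat{g}_n - g_n)$, followed by Minkowski and the $L^\infty$ bounds on $g_n$ and $\hat{f}_n$. Your explicit remark about why this particular decomposition (rather than subtracting $f_n\hat{g}_n$) is forced by the asymmetric boundedness hypothesis is a nice clarification that the paper leaves implicit.
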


\begin{lemma}
\label{lemma:conditional_oh_pee_fn_f}
Let $X_1,\ldots,X_n$ be i.i.d. random variables taking values in a space $\cx$ with distribution $P$ and let $\|f\|_{2,P} = (\int f^2(x)dP(x))^{1/2}$.
Suppose $\hat{f}_n(\cdot)$ is a sequence of random measurable functions on $\cx$ independent of $X_1,\ldots,X_n$ for which
$\|\hat{f}_n-f\|_{2,P} = o_p(r_n)$ as $n \rightarrow \infty$ for some sequence $\{r_n\}_{n=1}^{\infty}$.
Then 
\[
\sqrt{\frac{1}{n} \sum_{i=1}^n (\hat{f}_n(X_i)-f(X_i))^2} = o_p(r_n).
\]
\end{lemma}
\begin{proof}
Define the quantity
\[
Y_n = \frac{1}{n} \sum_{i=1}^n (\hat{f}_n(X_i)-f(X_i))^2
\]
and note that
\[
\e(Y_n \mid \hat{f}_n) = \|\hat{f}_n-f\|_{2,P}^2 = o_p(r_n^2)
\]
Then $Y_n=o_p(r_n^2)$ by Lemma~\ref{lemma:conditional_oh_pee}, and the result follows.
\end{proof}
\begin{lemma}
\label{lemma:mean_0_times_small}
Suppose $(A_1,B_1),(A_2,B_2),\ldots$ are random variables with $\e[A_n] < \infty$ for all $n \geq 1$. Let $\{\cf_n, n \geq 1\}$ be a sequence of $\sigma$-algebras such that for each $n$,
$(A_1,\ldots,A_n)$ is measurable with respect to $\cf_n$ and $B_1,\ldots,B_n$ are conditionally independent given $\cf_n$. Further assume that $\e[B_i \mid \ca_n] = 0$ and that there exists $C<\infty$ such that $\e[B_i^2 \mid \ca_n] \leq C$ for all $n \geq 1$, $1 \leq i \leq n$.
Finally suppose that $n^{-1} \sum_{i=1}^n A_i^2 = o_p(r_n^2)$ as $n \rightarrow \infty$ for some positive sequence $r_n$.
Then 
\[
\frac{1}{n} \sum_{i=1}^n A_iB_i = o_p(r_n/\sqrt{n}).
\]
\end{lemma}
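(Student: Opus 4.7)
The plan is to control the sum $S_n := \frac{1}{n}\sum_{i=1}^n A_iB_i$ by computing its conditional second moment given $\cf_n$, and then lift this to an unconditional $o_p$ statement via Lemma~\ref{lemma:conditional_oh_pee_conv}.

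First I would exploit the conditional independence and mean-zero structure of the $B_i$'s given $\cf_n$, together with the fact that $(A_1,\ldots,A_n)$ is $\cf_n$-measurable. Expanding $S_n^2$ and conditioning, all cross terms $\e[A_iA_jB_iB_j\mid\cf_n] = A_iA_j\e[B_i\mid\cf_n]\e[B_j\mid\cf_n] = 0$ for $i \neq j$, leaving only the diagonal contribution
\[
\e[S_n^2 \mid \cf_n] = \frac{1}{n^2} \sum_{i=1}^n A_i^2 \e[B_i^2 \mid \cf_n] \leq \frac{C}{n} \cdot \frac{1}{n} \sum_{i=1}^n A_i^2 = o_p(r_n^2/n),
\]
where the last equality uses the assumption $n^{-1}\sum_i A_i^2 = o_p(r_n^2)$.

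Next I would pass from the conditional second moment to the conditional first moment using Jensen's inequality: $\e[|S_n|\mid\cf_n] \leq \sqrt{\e[S_n^2\mid\cf_n]} = o_p(r_n/\sqrt{n})$. Equivalently, $\e\bigl[|S_n|\cdot \sqrt{n}/r_n \,\big|\, \cf_n\bigr] = o_p(1)$. Applying Lemma~\ref{lemma:conditional_oh_pee_conv} to the nonnegative random variable $|S_n|\sqrt{n}/r_n$ then yields $|S_n|\sqrt{n}/r_n = o_p(1)$, which is exactly the desired conclusion $S_n = o_p(r_n/\sqrt{n})$.

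There is no serious obstacle here; the argument is a standard second-moment / conditioning calculation, and the only subtlety is to remember to normalize by $\sqrt{n}/r_n$ before invoking Lemma~\ref{lemma:conditional_oh_pee_conv} so that the hypothesis of that lemma ($\e[\|X_n\|\mid\cf_n]=o_p(1)$) is met. The conditional-independence structure and the uniform bound $\e[B_i^2\mid\cf_n]\leq C$ are used precisely to kill the cross terms and to pull the $B_i$-variability out, leaving the rate entirely determined by the assumed rate on $n^{-1}\sum_i A_i^2$.
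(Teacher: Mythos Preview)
Your proof is correct and essentially identical to the paper's: both compute $\e[S_n^2 \mid \cf_n]$ using the $\cf_n$-measurability of the $A_i$, the conditional independence and mean-zero property of the $B_i$, and the bound $\e[B_i^2 \mid \cf_n]\le C$, then invoke Lemma~\ref{lemma:conditional_oh_pee_conv} after normalizing by $\sqrt{n}/r_n$. The only cosmetic difference is that you pass through Jensen to bound $\e[|S_n|\mid\cf_n]$ first, whereas the paper applies Lemma~\ref{lemma:conditional_oh_pee_conv} directly to the (normalized) squared quantity.
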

\begin{proof}
We have
\[
\e\left[\frac{1}{n} \sum_{i=1}^n A_iB_i \mid \ca_n\right] = \frac{1}{n} \sum_{i=1}^n A_i \e[B_i \mid \ca_n] = 0
\]
and hence
\begin{align*}
\e\left[\left(\frac{1}{n} \sum_{i=1}^n A_iB_i\right)^2 \Bigg| \ca_n \right] & = \frac{1}{n^2} \var\left(\sum_{i=1}^n A_iB_i \Big| \ca_n \right) = \frac{1}{n^2} \sum_{i=1}^n A_i^2\e[B_i^2 \mid \ca_n] \leq \frac{C}{n^2} \sum_{i=1}^n A_i^2 = o_p(n^{-1}r_n^2)
\end{align*}
It follows by Lemma~\ref{lemma:conditional_oh_pee} that
\[
\frac{1}{r_n\sqrt{n}} \sum_{i=1}^n A_iB_i = o_p(1)
\]
as desired.
\end{proof}

\subsection{Proof of Theorem~\ref{thm:semiparametric_tangent_space}}
\label{app:semiparametric_tangent_space_proof}
Our argument extends many ideas from Section 4.5 of~\citet{tsiatis2006semiparametric} for the restricted moment model.
As in that text,
we impose the following mild technical conditions on the true data-generating distribution $P^*$ and the model $\cp_{\cm}$:
\begin{assumption}
\label{assump:MGF}
The moment generating function $\zeta(t) = \e[\exp(tY)]$ exists in a neighborhood of 0.
\end{assumption}
\begin{assumption}
\label{assump:Var}
There exist $A < \infty$ and $\delta > 0$ for which $\e[((Y-m^*(R)))^2\indic(|(Y-m^*(R))| \leq A) \mid R] \geq \delta$ and $V^*(R) \leq A$ with probability 1.
\end{assumption}

We begin by characterizing the space $\ct_{\cm}$.
An arbitrary parametric submodel $P_{\gamma}$ for the model $\cp_{\cm}$ is a distribution with density
\begin{equation}
\label{eq:generic_submodel}
p(w;\gamma) = f(y \mid r;\gamma_1)g(r;\gamma_2)
\end{equation}
with respect to some dominating measure $\mu \times \lambda$ on $\real \times (\{0,1\}^2 \times \cx)$;
here $f$ denotes the conditional density of $Y$ given $R$ and $g$ is the density of $R$.
Since the density~\eqref{eq:generic_submodel} must correspond to a distribution in $\cp_{\cm}$,
the function $r \mapsto m(r;\gamma_1)$ lies in $\cm$ for all $\gamma_1$ in a neighborhood of $0 \in \real^q$,
for some $q \geq 1$.
Since the component $f(y \mid r;\gamma_1)$ is variationally independent of $g(r;\gamma_2)$,
it follows that we can write
\[
\ct_{\cm} = \ct_1 \oplus \ct_2
\]
where $\ct_1$ is the closure of the span of the score functions of the densities $f(y \mid r;\gamma_1)$ over all submodels
and $\ct_2$ is the closure of the span of the score functions of the densities $g(r;\gamma_2)$ over all submodels
(see the discussion preceding Theorem 4.5 of~\citet{tsiatis2006semiparametric} or the proof of the main results of~\citet{li2023efficient} for a similar argument).
Since the model $\cp_{\cm}$ places no restrictions on the distribution of $R$,
it follows that $\ct_2 = \ch_R^0$ by Theorem 4.5 of~\citet{tsiatis2006semiparametric}.

We now claim that 
\begin{equation}
\label{eq:T_1}
\ct_1 = \{h \in \ch^0 \mid r \mapsto \e[Yh(W) \mid R=r] \in \cs_M, \e[h(W) \mid R]=0\} .
\end{equation}
The space on the right-hand side of~\eqref{eq:T_1},
which we call $\ct_1^{\dag}$,
is evidently closed since $\cs_{\cm}$ is a closed subspace of $\ch_R$.
Hence to show that $\ct_1 \subseteq \ct_1^{\dag}$,
it suffices to show that any element in the span of the score function of an arbitrary parametric submodel $f(y \mid r;\gamma_1)$ lies in $\ct_1^{\dag}$.
Utilizing the notation
\[
\frac{\partial h(a)}{\partial \gamma} = \left(\frac{\partial}{\partial \gamma_1}h(\gamma),\ldots, \frac{\partial}{\partial \gamma_r}h(\gamma)\right)^{\top} \Big|_{\gamma=a} \in \real^q.
\]
for an arbitrary smooth function $h:\real^q \rightarrow \real$,
such an element is given by
\[
h(w) = c^{\top} \frac{\partial \log f(y \mid r;0)}{\partial \gamma_1} = c^{\top} \frac{\frac{\partial}{\partial \gamma_1} f(y \mid r;0)}{f^*(y \mid r)}.
\]
By the smoothness conditions for parametric submodels (see~\citet{newey1990semiparametric}),
we know that
\[
\frac{\partial m(r;0)}{\partial \gamma_1}  = \int y \frac{\partial}{\partial \gamma_1} f(y \mid r;0) d\mu(y)
\]
for all $r$ and is square integrable.
Hence $r \mapsto m(r;\gamma_1)$ is an outcome mean function parametric submodel for $\cm$ indexed by $\gamma_1$.
With
\[
\e[Yh(W) \mid R=r] = c^{\top} \int y\frac{\partial}{\partial \gamma_1} f(y \mid r;0) d\mu(y) = c_1^{\top}\frac{\partial}{\partial \gamma_1} m(r;0) 
\]
we conclude that $r \mapsto  \e[Yh(W) \mid R=r] \in \cs_{\cm}$ by the definition of $\cs_{\cm}$.
Furthermore 
\[
\e[h_1(W) \mid R=r] = c_1^{\top} \int  \frac{\partial}{\partial \gamma_1} f(y \mid r;0) d\mu(y) = c_1^{\top} \frac{\partial}{\partial \gamma_1} \int f(y \mid r;\gamma_1) d\mu(y) = 0,
\]
so that $h \in \ct_1^{\dag}$.

Now,
we show the reverse inclusion,
i.e. $\ct_1^{\dag} \subseteq \ct_1$.
To that end we consider arbitrary $h_1 \in \ct_1^{\dag}$  
so that $\e[h_1(W) \mid R]=0$ and $ r \mapsto \int (y-m^*(r))h_1(w) f^*(y \mid r)d\mu(y) \in \cs_{\cm}$.
Considering the latter condition,
by definition of $\cs_{\cm}$
there exists a sequence of $r_n$-dimensional outcome mean function parametric submodels $m^{(n)}(r;\gamma_1^{(n)})$ of $\cm$ and constant vectors $c_n \in \real^{r_n}$ for which 
\[
\tilde{h}_1^{(n)}(r) := c_n^{\top} \frac{\partial m^{(n)}(r;0)}{\partial \gamma_1^{(n)} }
\]
converges to 
\[
\tilde{h}_1(r) := \int (y-m^*(r))h_1(w) f^*(y \mid r)d\mu(y)
\]
in mean square as $n \rightarrow \infty$.
In fact, we can assume that
each $\tilde{h}_1^{(n)}$
is a \emph{bounded} element of $\ch_R$\footnote{
Given the outcome mean function parametric submodels $m^{(n)}(r;\gamma_1^{(n)})$ from the main text,
define
\begin{equation}
\label{eq:omega_star}
\omega^*(r) = \int y^3 f^*(y \mid r) d\mu(y) - m^*(r)\int y^2 f^*(y \mid r) d\mu(y)
\end{equation}
Then consider the truncated submodel defined by
\[
\tilde{m}^{(n)}\left(r;\gamma_{1}^{(n)}\right) := 
\begin{cases}
m^*(r) & \text{ if $\sup_{0<\|a\| \leq b_n} \frac{\left|m^{(n)}\left(r;a\right)-m^*(r)\right|}{\|a\|} > a_n$ or $|\omega^*(r)| > a_n$} \\
m^{(n)}\left(r;\gamma_1^{(n)}\right) & \text{ otherwise}
\end{cases}
\]
where $a_n \uparrow \infty$ and $b_n \downarrow 0$ are chosen such that
\[
\lim_{n \rightarrow \infty} \e\left[\left(\tilde{h}_1^{(n)}(R)\right)^2\indic\left(\sup_{0<\|a\| \leq b_n} \frac{\left|m^{(n)}\left(R;a\right)-m^*(R)\right|}{\|a\|} > a_n \text{ or } |\omega^*(R)| > a_n\right) \right] = 0
\]
Such a sequence exists by dominated convergence since each $\tilde{h}_1^{(n)}(R)$ is square integrable
and the existence of $\partial/\partial \gamma_1^{(n)} m^{(n)}\left(r;\gamma_1^{(n)}\right)$ for almost all $r$ ensures that the indicator
\[
\indic\left(\sup_{0<\|a\| \leq \tilde{b}} \frac{\left|m^{(n)}\left(R;a\right)-m^*(R)\right|}{\|a\|} > \tilde{a} \text{ or } |\omega^*(R)| > \tilde{a}\right)
\]
converges to zero with probability 1 as $\tilde{a} \uparrow \infty$ then $\tilde{b} \downarrow \infty$.
Then $c_n^{\top} \frac{\partial \tilde{m}^{(n)}(r;0)}{\partial \gamma_1^{(n)}}$ is bounded above in absolute value by $a_n\|c_n\|$ for all $r$ yet still converges to $\tilde{h}_1$ in mean square.
Note: the truncation by $\omega^*(r)$ is not necessary here,
but is useful for the next footnote.}.
Now define
\[
\bar{h}_1(w) := h_1(w)-\frac{(y-m^*(r))\tilde{h}_1(r)}{V^*(r)}.
\]
and note that
\[
\int \bar{h}_1(w)f^*(y \mid r)d\mu(y) = \int y\bar{h}_1(w)f^*(y \mid r)d\mu(y) = 0.
\]
Consider 
\[
\bar{h}_1^{(n)}(w) = H^{(n)}(w)-\int H^{(n)}(r,y')f^*(y' \mid r)d\mu(y')
\]
where
\[
H^{(n)}(w) = h_1^{(n)}(w) - \frac{(y-m^*(r))\indic(|(y-m^*(r))| \leq A_n) \int (y'-m^*(r))h_1^{(n)}(r,y')f^*(y' \mid r) d\mu(y')}{\int (y'-m^*(r))^2\indic(|y'-m^*(r)| \leq A_n) f^*(y' \mid r)d\mu(y')} 
\]
for $h_1^{(n)}(w) =  h_1(w)\indic(|h_1(w)| \leq A_n)$,
$A_n = \max(A, n)$.
Then one can verify $\bar{h}_1^{(n)}$ is a sequence of bounded functions converging in mean square to $\bar{h}_1$ satisfying
\begin{equation}
\label{eq:h_1_n_bar}
\int \bar{h}_1^{(n)}(w)f^*(y \mid r)d\mu(y) = \int y\bar{h}_1^{(n)}(w)f^*(y \mid r)d\mu(y) = 0
\end{equation}
for all $n \geq 1$.
Now for each $n \geq 1$,
we can define the exponentially tilted density family
\[
f^{(n)}\left(y \mid r;\gamma_0,\gamma_1^{(n)}\right) = \frac{f^*(y \mid r)\left(1+\gamma_0\bar{h}_1^{(n)}(w)\right)\exp\left(c^{(n)}\left(r;\gamma_0,\gamma_1^{(n)}\right)y\right)}{\int f^*(y' \mid r)\left(1+\gamma_0\bar{h}_1^{(n)}(r,y')\right)\exp\left(c^{(n)}\left(r;\gamma_0,\gamma_1^{(n)}\right)y'\right) d\mu(y') }
\]
where $c^{(n)}\left(r;\gamma_0,\gamma_1^{(n)}\right)$ is chosen such that
\begin{equation}
\label{eq:tilted_mean}
\int yf^{(n)}\left(y \mid r;\gamma_0,\gamma_1^{(n)}\right) d\mu(y) = m^{(n)}\left(r;\gamma_1^{(n)}\right)
\end{equation}
for all $r$ and all $\left(\gamma_0,\gamma_1^{(n)}\right)$ in a neighborhood of zero\footnote{To see that this is possible, define
\[
\iota(c;r,\gamma_0) = \frac{\int y f^*(y \mid r)\left(1+\gamma_0\bar{h}_1^{(n)}(w)\right)\exp(cy) d\mu(y)}{\int f^*(y \mid r)\left(1+\gamma_0\bar{h}_1^{(n)}(w)\right)\exp(cy) d\mu(y) }.
\]
By the previous footnote we can assume there is some $a_n < \infty$ and $b_n > 0$ for which $m^{(n)}\left(r;\gamma_1^{(n)}\right) = m^*(r)$ for all $r$ such that $|\omega^*(r)| > a_n$ or  
$\sup_{0<\|a\| \leq b_n} \frac{\left|m^{(n)}\left(R;a\right)-m^*(R)\right|}{\|a\|} > a_n$.
Since $\iota(0;r,\gamma_0)=m^*(r)$ for any $(r,\gamma_0)$ by~\eqref{eq:h_1_n_bar},
we can take $c^{(n)}\left(r;\gamma_0,\gamma_1^{(n)}\right) = 0$ whenever this condition on $r$ holds.
When such conditions are not satisfied,
we have $|\omega^*(r)| \leq a_n$ and also $\sup_{0<\|a\| \leq b_n} \frac{\left|m^{(n)}\left(R;a\right)-m^*(R)\right|}{\|a\|} \leq a_n$.
Then note that by Assumption~\ref{assump:Var} there exists some open interval $\mathcal{C}$ containing 0 on which 
$\frac{\partial \iota}{\partial c}(c;r,\gamma_0)$ and $\frac{\partial ^2 \iota}{\partial c^2}(c;r,\gamma_0)$ exist.
With $\omega^*(r)$ defined as in~\eqref{eq:omega_star},
we compute
\begin{align*}
\frac{\partial \iota(0;r,\gamma_0)}{\partial c} & = \int y^2 f^*(y \mid r)d\mu(y) - \left(\int yf^*(y \mid r) d\mu(y)\right)^2 = V^*(r) \geq \delta, \quad  \forall r,\gamma_0 \\
\frac{\partial^2 \iota(0;r,\gamma_0)}{\partial c^2} & = \omega^*(r)
\end{align*}
The condition $|\omega^*(r)| \leq a_n$ thus ensures
there exists an interval $(-\tilde{c}_n,\tilde{c}_n) \subseteq \mathcal{C}$ on which $\frac{\partial \iota(c;r,\gamma_0)}{\partial c} \geq \delta/2$ for all $(r,\gamma_0)$.
This implies that for all $(r,\gamma_0)$
and $|\epsilon| < \frac{\tilde{c}_n\delta}{2}$,
there exists $c \in (-\tilde{c}_n,\tilde{c}_n)$ such that $\iota(c;r,\gamma_0) = m^*(r)+\epsilon$.
But the condition $\sup_{0<\|a\| \leq b_n} \frac{\left|m^{(n)}\left(R;a\right)-m^*(R)\right|}{\|a\|} \leq a_n$
ensures that whenever $\|\gamma_1^{(n)}\| \leq \min\left(b_n, \frac{\tilde{c}_n\delta}{2a_n}\right) $
we in fact have $\left| m^{(n)}(r;\gamma_1^{(n)}) - m^*(r) \right|  \leq \frac{\tilde{c}_n\delta}{2}$,
showing the existence of $c^{(n)}\left(r;\gamma_0,\gamma_1^{(n)}\right)$ to satisfy~\eqref{eq:tilted_mean} for all $r$ and $\left(\gamma_0,\gamma_1^{(n)}\right)$ in a neighborhood of 0.}.
Note that $f^{(n)}$ is indeed a valid density due to boundedness of $\bar{h}_1^{(n)}$ ensuring $1+\gamma_0\bar{h}_1^{(n)}(w) > 0$ for all $w$ and all $\gamma_0$ sufficiently close to 0,
while equation~\eqref{eq:tilted_mean} ensures the densities $f^{(n)}$ 
correspond to distributions with mean functions lying in $\cp_{\cm}$.
Now using the quotient rule and noting we can take $c^{(n)}(r;0,0)=0$ for all $r$ by~\eqref{eq:h_1_n_bar},
we compute
\begin{align*}
\frac{\partial f^{(n)}(y \mid r,0,0)}{\partial \gamma_0} & = f^*(y \mid r)\bar{h}_1^{(n)}(w) \\
\frac{\partial f^{(n)}(y \mid r;0,0)}{\partial \gamma_1^{(n)}} & = (y-m^*(r))f^*(y \mid r)\frac{\partial c^{(n)}(r;0,0) }{\partial \gamma_1^{(n)}}
\end{align*}
which are square integrable by Assumption~\ref{assump:MGF}.
So $f^{(n)}(y \mid r; \gamma^{(n)})$ is a parametric submodel for the conditional distribution of $Y$ given $R=r$ in the model $\cp_{\cm_1}$ with score
\begin{align*}
\frac{\partial \log f^{(n)}(w;0)}{\partial \gamma^{(n)}} & = \left(\bar{h}_1^{(n)}(w), (y-m^*(r))\frac{\partial c^{(n)}(r;0,0) }{\partial \left(\gamma_1^{(n)}\right)^{\top}}\right)^{\top}
\end{align*}
for $\gamma^{(n)} = (\gamma_0,\gamma_1^{(n)})$.
By taking the derivative of both sides of~\eqref{eq:tilted_mean} with respect to $\gamma_1^{(n)}$ at $\gamma_0=0$, $\gamma_1^{(n)}=0$,
we see that
\[
\frac{\partial c^{(n)}(r;0;0)}{\partial \gamma_1^{(n)}}V^*(r) = \frac{\partial m^{(n)}(r;0)}{\partial \gamma_1^{(n)}} 
\]
so we can rewrite 
\[
 (y-m^*(r))\frac{\partial c^{(n)}(r;0,0) }{\partial \gamma_1^{(n)}} = 
\frac{(y-m^*(r))\frac{\partial m^{(n)}(r;0)}{\partial \gamma_1^{(n)}}}{V^*(r)}
\]
Hence
\begin{align*}
(1, c_n^{\top})^{\top} \frac{\partial \log f^{(n)}(w;0)}{\partial \gamma^{(n)}} & = \bar{h}_1^{(n)}(w) + (y-m^*(r)) \frac{c_n^{\top}\frac{\partial m^{(n)}(r;0)}{\partial \gamma_1^{(n)}}}{V^*(r)} 
\end{align*}
which we've ensured converge in mean square to
\[
\bar{h}_1(w) + \frac{(y-m^*(r))\tilde{h}_1(r)}{V^*(r)} 
= h_1(w) 
\]
as $n \rightarrow \infty$.
This allows us to conclude $h_1 \in \ct_1$, as desired.

Having shown $\ct_{\cm} = \ct_1 \oplus \ct_2$ 
with $\ct_1$ as in~\eqref{eq:T_1} and $\ct_2=\ch_{R}^0$,
we are now ready to prove~\eqref{eq:T_M_perp_generalized}.
First suppose $g$ is an element of the right-hand side of~\eqref{eq:T_M_perp_generalized},
so that $g(w)=h(r)(y-m^*(r))$ for some $h \in \cs_{\cm}^{\perp}$.
Then for any $g_1 \in \ct_1$ we have
\begin{align*}
\e[g(W)g_1(W)] & = \e[h(R)\e[(Y-m^*(R))g_1(W) \mid R]] \text{ by conditioning on $R$} \\
& = -\e[h(R)m^*(R)\e[g_1(W) \mid R]] \text{ since $\e[Yg_1(W) \mid R] \in \cs_{\cm}$} \\
& = 0 \text{ since $\e[g_1(W) \mid R]=0$}.
\end{align*}
Additionally,
for any $g_2 \in \ct_2$ we have
\[
\e[g(W)g_2(R)] = \e[h(R)g_2(R)\e[Y-m^*(R) \mid R]] = 0
\]
Hence $g \in \ct_1^{\perp} \cap \ct_2^{\perp} = (\ct_1 \oplus \ct_2)^{\perp}$,
completing the inclusion ``$\supseteq$" of~\eqref{eq:T_M_perp_generalized}.

To show the reverse inclusion,
we now fix $g \in (\ct_1 \oplus \ct_2)^{\perp}$.
Since $g \in \ct_2^{\perp}$ but $x \mapsto \e[g(W) \mid R=r] \in \ct_2$,
by conditioning on $R$ we have
\[
0 = \e[g(W)\e[g(W) \mid R]] = \e[(\e[g(W) \mid R])^2]
\]
and so $\e[g(W) \mid R]=0$.
Next, we define
\[
\tilde{g}(w)=h(r)(y-m^*(r)), \quad h(r)=\frac{\int (y'-m^*(r))g(r,y')f^*(y \mid r) d\mu(y)}{V^*(r)}
\]
so that $h(R)V^*(R) = \e[(Y-m^*(R))g(W) \mid R]$.
Notice that $g-\tilde{g} \in \ct_1$,
since 
\[
\e[\tilde{g}(W) \mid R] = h(R)\e[Y-m^*(R) \mid R] = 0 = \e[g(W) \mid R]
\]
while
\begin{align*}
\e[Y(g(W)-\tilde{g}(W)) \mid R] & = \e[Yg(W) \mid R] - \e\left[Yh(R)(Y-m^*(R)) \mid R\right] \\
& = h(R)V^*(R) - \e[(Y-m^*(R))^2h(R) \mid R] \\
& = 0 \in \cs_{\cm}
\end{align*}
where the second equality uses the fact that $\e[h(R)(Y-m^*(R)) \mid R]=\e[h(R)g(W) \mid R]=0$ for all $h \in \ch_R$.
Since $g \in \ct_1^{\perp}$ we must have
\begin{equation}
\label{eq:g_perp}
\e[g(W)(g(W)-\tilde{g}(W))] = 0.
\end{equation}
It follows that
\begin{align*}
\e[(g(W)-\tilde{g}(W))^2] & = \e[\tilde{g}^2(W)]-\e[g(W)\tilde{g}(W)] \\
& = \e\left[\frac{(\e[(Y-m^*(R))g(W) \mid R])^2(Y-m^*(R))^2}{(V^*(R))^2}\right]  \\
& - \e\left[\frac{\e[g(W)(Y-m^*(R)) \mid R]g(W)(Y-m^*(R))}{V^*(R)}\right] \\
& = 0
\end{align*}
Hence $g(w)=\tilde{g}(w)$. 
Finally we note that $h \in \cs_{\cm}^{\perp}$ since for any $h_1 \in \cs_{\cm}$ we have
\begin{align*}
\e[h_1(R)h(R)] & = \e\left[\frac{h_1(R)\e[(Y-m^*(R))g(W) \mid R]}{V^*(R)}\right] \\
& = \e\left[\frac{(Y-m^*(R))g(W)h_1(R)}{V^*(R)}\right] \\
& = 0
\end{align*}
where the last equality follows since $g \in \ct_1^{\perp}$ but
\[
\hat{g}(w) = \frac{(y-m^*(r))h_1(r)}{V^*(r)}
\]
satisfies $\e[\hat{g}(W) \mid R] = 0$ and $\e[Y\hat{g}(W) \mid R] = h_1(R)$
so that $\hat{g} \in \ct_1$ by definition.

\subsection{Proof of Corollary~\ref{cor:known_propensity_score}}
\label{app:known_propensity_score}

We write $\ch_W^0$ as the direct sum of orthogonal subspaces $\ct_1 \oplus \ct_2$ where
\[
\ct_1 = \{f \in \ch_W^0 \mid \e[f(W) \mid R] = 0 \}, \quad \ct_2 = \ch_R^0.
\]
Since the model $\cp_{\cm}$ places no restrictions on the distribution of $R$,
it follows that its semiparametric tangent space $\ct_{\cm} = \bar{\ct}_1 \oplus \ct_2$ for some $\bar{\ct}_1 \subseteq \ct_1$
(this is also easy to see directly from the fact that $\ct_{\cm}^{\perp} \subseteq \ct_1$ by~\eqref{eq:T_M_perp_generalized}).
Conversely, the experimental propensity score being known places no restriction on the conditional distribution of $Y$ given $R$.
Thus,
we can write $\tilde{\ct}_{\cm} = \bar{\ct}_1 \oplus \bar{\ct}_2$
where $\bar{\ct}_2$ is the semiparametric tangent space for the model $\cp_{\cn}$ on the distribution of $R=(Z,S,X)$,
viewing $Z$ as the outcome and $(S,X)$ the covariates
and letting $\cn = \{(s,x) \mapsto se^*(x)+(1-s)h(x) \mid h \in \ch_X\}$ so that the restriction that the mapping $(s,x) \mapsto \e[Z \mid S=s,X=x] \in \cn$
amounts precisely to the experimental propensity score being known and equal to $e^*$.
By~\eqref{eq:T_M_perp_generalized} we have
\begin{equation}
\label{eq:T_2_perp_tilde}
\bar{\ct}_2^{\perp} = \{(z-e^*(x))h(x) \mid h \in \cs_{\cn}^{\perp}\}
\end{equation}
where $\cs_{\cn}^{\perp}$ is the orthogonal complement of $\cs_{\cn}$ in $\ch_{SX}$
and $\bar{\ct}_2^{\perp}$ denotes the orthogonal complement of $\bar{\ct}_2$ in $\ct_2=\ch_R^0$,
(not the orthogonal complement in $\ch_W^0$).
We conclude 
\[
\tilde{\ct}_{\cm}^{\perp} = \ct_{\cm}^{\perp} \oplus \tilde{\ct}_2^{\perp}.
\]
Thus, to prove~\eqref{eq:T_M_perp_tilde} 
it suffices to show that 
\[
\cs_{\cn}^{\perp} = \{(s,x) \mapsto sh(x) \mid h \in \ch_X\}.
\]

To that end, 
we first show that $\cs_{\cn} = \{(s,x) \mapsto (1-s)h(x) \mid h \in \ch_X\}$.
To do so,
note that the RHS is a closed subset of $\ch_{SX}$,
and that an arbitrary outcome mean function parametric submodel of $\cn$ takes the form
\[
m(s,x;\gamma) = se^*(x) + (1-s)h(x;\gamma)
\] 
where $h(x;0) = r^*(x)$.
For any $c$ with the same length as $\gamma$ we have
\[
c^{\top} \frac{\partial m(s,x;0)}{\partial \gamma} = (1-s)c^{\top} \frac{\partial h(x;0)}{\partial \gamma}
\]
which clearly lies in the set $\{(s,x) \mapsto (1-s)h(x) \mid h \in \ch_X\}$.
Hence $\cs_{\cn}$ is contained inside this set.
Conversely, given an arbitrary element $g$ given by $g(s,x) = (1-s)h(x)$ in this set,
we can consider the univariate submodels
\[
m_n(s,x;\gamma) = se^*(x) + (1-s)(q^*(x)+\gamma h_n(x))
\]
for a sequence of bounded functions $h_n \in \ch_X$ converging in mean square to $h$.
Note the validity of each of these submodels for $\gamma$ in a neighborhood of zero uses the fact that $h_n$ is bounded and $\delta < q^*(x) < 1-\delta$ for all $x$ and some $\delta > 0$.
With $\partial m_n(s,x;\gamma)/\partial \gamma\Big|_{\gamma=0}$ converging to $g(s,x)$ in mean square,
we conclude $g \in \cs_{\cn}$,
and indeed $\cs_{\cn}=\{(s,x) \mapsto (1-s)h(x) \mid h \in \ch_X\}$.
Given this,
it is then straightforward to verify that
\[
\cs_{\cn}^{\perp} = \{(s,x) \mapsto sh(x) \mid h \in \ch_X\}
\]
upon recalling that $s(1-s)=0$.

\subsection{Proof of Corollary~\ref{cor:homoskedasticity}}
\label{app:homoskedasticity}

It suffices to show that $\varphi_{0,\rct}$ is the EIF for $\tau_{\rct}$ in the model $\tilde{\cp}_{\cm}$,
since $\tilde{\cp}_{\cm} \subseteq \cp_{\cm} \subseteq \cp_{\ch_R}$ and hence the semiparametric bound for $\cp_{\cm}$ must be sandwiched between the bounds for $\tilde{\cp}_{\cm}$ and the nonparametric model $\cp_{\ch_R}$.
To do so,
by Theorem~\ref{thm:influence_functions} it suffices to show that $\varphi_{0,\rct} \in (\tilde{\ct}_{\cm}^{\perp})^{\perp} = \tilde{\ct}_{\cm}$.

Fix an arbitrary element $g \in \tilde{\ct}_{\cm}^{\perp}$.
By Theorem~\ref{thm:semiparametric_tangent_space},
we can write $g(w) = h(r)(y-m^*(r)) + s\tilde{h}(x)(z-e^*(x))$ for some
$h \in \cs_{\cm}^{\perp}$ and $\tilde{h} \in \ch_X$.
For each $f \in \ch_R$, 
with $\Delta(z,x,y;\eta)$ as in Proposition~\ref{prop:efficiency_bounds_0},
we define
\begin{align*}
\phi(r;f) & = h(r)\e[((\rho^*)^{-1}S\Delta(Z,X,Y;\eta^*)+f(R))(Y-m^*(R)) \mid R=r] \\
& = h(r) \left(\frac{sz}{\rho^*e^*}\e[(Y-m_{11}^*(X))^2 \mid R] - \frac{s(1-z)}{\rho^*(1-e^*)}\e[(Y-m_{10}^*(X))^2 \mid R]\right) \\
& = h(r)(\rho^*)^{-1}(c_1sz-c_0s(1-z))
\end{align*}
where $c_1=\sigma_1^2/e^*$ and $c_0=\sigma_0^2/(1-e^*)$,
and the second equality uses the fact that $f(r)\e[Y-m^*(R) \mid R]=0$.
Note $\phi(r;f)$ is independent of $f \in \ch_R$
and by the assumption of the theorem, $r \mapsto (\rho^*)^{-1}(c_1sz-c_0s(1-z)) \in \cs_{\cm}$.
With $h \in \cs_{\cm}^{\perp}$ we conclude that 
\begin{equation}
\label{eq:phi}
\e[\phi(R;f)]=0, \quad \forall f \in \ch_R.
\end{equation}

Next, we note that for any $\bar{g} \in \ch_X$ we have
\begin{align}
\e\left[S\tilde{h}(X)(Z-e^*)(\Delta(Z,X,Y;\eta^*)+\bar{g}(X))\right] & = \e\left[S\tilde{h}(X)(Z-e^*)\e(\Delta(Z,X,Y;\eta^*) \mid S=1,Z,X)\right] \nonumber \\
& + \e[S\tilde{h}(X)\bar{g}(X)\e(Z-e^* \mid S=1,X)] \nonumber \\
& = 0 \label{eq:h_tilde_iota}
\end{align}

Finally, we put together~\eqref{eq:phi} and~\eqref{eq:h_tilde_iota} to show that $\varphi_{0,\rct}$ must be orthogonal to our arbitrary chosen function $g \in \tilde{\ct}_{\cm}^{\perp}$.
By iterated expectations,
defining $f(r)=s(m_{11}^*(x)-m_{10}^*(x)-\tau_{\rct}^*)/\rho^*$ we compute
\begin{align*}
\e[\varphi_{0,rct}(W;\tau_{rct}^*,\eta^*)g(W)] & = \e[\varphi_{0,\rct}(W;\tau_{\rct}^*;\eta^*)(h(R)(Y-m^*(R))+S\tilde{h}(X)(Z-e^*))] \\
& = \e[h(R)\e(\varphi_{0,\rct}(W;\tau_{\rct}^*;\eta^*)(Y-m^*(R)) \mid R] \\
& \quad + (\rho^*)^{-1} \e[S\tilde{h}(X)(Z-e^*)(\Delta(Z,X,Y;\eta^*)+m_{11}^*(X)-m_{10}^*(X)-\tau_{\rct}^*)]  \nonumber \\
& = \e[\phi(S,Z,X;f)] + 0 \text{ by~\eqref{eq:h_tilde_iota} with $\bar{g}(X)=m_{11}^*(X)-m_{10}^*(X)-\tau^*_{\rct}$} \\
& = 0 \text{ by~\eqref{eq:phi}}.
\end{align*}


\subsection{Proof of Lemma~\ref{lemma:dml_master}}
\label{app:proof_dml_master}

As indicated in the main text,
here we prove a stronger version of Lemma~\ref{lemma:dml_master} that is applicable to models of the form $\tilde{\cp}_{\cm}$ where the experimental propensity score is known,
as considered in Corollary~\ref{cor:known_propensity_score}.

\begin{lemma*}
Suppose the function $g=g(\cdot;\tau^*,\eta^*)$ is any element of the space $\tilde{\ct}_{\cm}^{\perp}$ for
some collection of outcome mean functions $\cm \subseteq \ch_R$ satisfying the conditions of Corollary~\ref{cor:known_propensity_score},
so that we can write
\[
g(w;\tau^*,\eta^*) = (y-m^*(r))h_1(r;\tau^*,\eta^*) + s(z-e^*(x))h_2(x;\tau^*,\eta^*)
\]
where $\eta^*$ includes the mean function $m^*$,
$h_1^*(\cdot) =  h_1(\cdot;\tau^*,\eta^*) \in \cs_{\cm}^{\perp}$,
and $h_2^*(\cdot) = h_2(\cdot;\tau^*,\eta^*) \in \ch_X$.
Then equation~\eqref{eq:g_hat_minus_g_crossfit} holds
under the following conditions:
\begin{itemize}
\item (Boundedness) There exists $C < \infty$ for which $V^*(r) \leq C$ for $P^*$-almost every $r$ and both $\|h_i^*\|_{\infty,P^*} \leq C$ and $\left\|\hat{h}_i^{(-k)}\right\|_{\infty,P^*} \leq C$ for $i=1,2$
\item (Approximate tangency) With probability tending to 1,
for each $k=1,\ldots,K$ there exists $\hat{R}^{(-k)}=\hat{R}^{(-k)}(\cdot) \in \ch_R$ depending only on the observations outside fold $k$ such that the function $r \mapsto \hat{m}^{(-k)}(r) - m^*(r)-\hat{R}^{(-k)}(r)$ lies in the outcome mean function tangent space $\cs_{\cm}$,
$\|\hat{R}^{(-k)}\|_{2,P^*} = o_p(1)$,
and the following rate holds:
\begin{equation}
\label{eq:R_hat_generalized}
\frac{1}{|\ci_k|} \sum_{i \in \ci_k} \left|\hat{R}^{(-k)}(R_i)\right| = o_p(N^{-1/2})
\end{equation}
\item (Rate conditions) The following rate conditions hold for all $k=1,\ldots,K$:
\begin{itemize}
    \item (First order consistency) $\|\hat{m}^{(-k)}-m^*\|_{2,P^*} + \|\hat{h}_1^{(-k)}-h_1^*\|_{2,P^*} + \|\hat{h}_2^{(-k)}-h_2^*\|_{2,P^*} = o_p(1)$ 
    \item (Product rate condition) $\|\hat{m}^{(-k)}-m^*\|_{2,P^*} \times \|\hat{h}_1^{(-k)}-h_1^*\|_{2,P^*} = o_p(N^{-1/2})$.
\end{itemize}
\end{itemize}
\end{lemma*}

Lemma~\ref{lemma:dml_master} follows immediately from the modified Lemma above since $\tilde{\ct}_{\cm}^{\perp} \supseteq \ct_{\cm}^{\perp}$,
and for any $g \in \ct_{\cm}^{\perp}$ we can apply the modified Lemma with $h_2=h_2^{(-k)}=0$.

Writing
\begin{align*}
\frac{1}{N} \sum_{k=1}^K \sum_{i \in \ci_k} g\left(W_i;\hat{\tau}^{(-k)},\hat{\eta}^{(-k)}\right)-g(W_i;\tau^*,\eta^*) = \sum_{k=1}^K \frac{|\ci_k|}{N} \frac{1}{|\ci_k|} \sum_{i \in \ci_k} g\left(W_i;\hat{\tau}^{(-k)},\hat{\eta}^{(-k)}\right)-g(W_i;\tau^*,\eta^*)
\end{align*}
we see it suffices to fix $k \in \{1,\ldots,K\}$ and show
\[
 \frac{1}{|\ci_k|} \sum_{i \in \ci_k} g\left(W_i;\hat{\tau}^{(-k)},\hat{\eta}^{(-k)}\right)-g(W_i;\tau^*,\eta^*) = o_p(N^{-1/2}).
\]
We have
\[
 \frac{1}{|\ci_k|} \sum_{i \in \ci_k} g\left(W_i;\hat{\tau}^{(-k)},\hat{\eta}^{(-k)}\right)-g(W_i;\tau^*,\eta^*) =  a_k+b_k
\]
where
\begin{align*}
a_k & = \frac{1}{|\ci_k|} \sum_{i \in \ci_k} \left(Y_i-\hat{m}^{(-k)}(R_i)\right)\hat{h}_1^{(-k)}(R_i) - (Y_i-m^*(R_i))h_1^*(R_i) \\
b_k & = \frac{1}{|\ci_k|} \sum_{i \in \ci_k} S_i(Z_i-e^*(X_i))\left(\hat{h}_2^{(-k)}(X_i)-h_2^*(X_i)\right).
\end{align*}
We further expand
\[
a_k = a_{1k} + a_{2k} + a_{3k}
\]
where
\begin{align*}
a_{1k} & = \frac{1}{|\ci_k|} \sum_{i \in \ci_k} \left(m^*(R_i)-\hat{m}^{(-k)}(R_i)\right)h_1^*(R_i) \\
a_{2k} & = \frac{1}{|\ci_k|} \sum_{i \in \ci_k} (Y_i-m^*(R_i))\left(\hat{h}_1^{(-k)}(R_i)-h_1^*(R_i)\right) \\
a_{3k} & = \frac{1}{|\ci_k|} \sum_{i \in \ci_k} \left(m^*(R_i)-\hat{m}^{(-k)}(R_i)\right)\left(\hat{h}_1^{(-k)}(R_i)-h_1^*(R_i)\right)
\end{align*}

First, we show $a_{1k} = o_p(N^{-1/2})$.
Note that
\[
\Bigg| \frac{1}{|\ci_k|} \sum_{i \in \ci_k} \hat{R}^{(-k)}(R_i)h_1^*(R_i)\Bigg|  \leq C \frac{1}{|\ci_k|} \sum_{i \in \ci_k} \left|\hat{R}^{(-k)}(R_i) \right| = o_p(N^{-1/2})
\]
by~\eqref{eq:R_hat_generalized} and the boundedness condition $\|h_1^*\|_{\infty,P^*} \leq C$.
Hence
$a_{1k} = \tilde{a}_{1k} + o_p(N^{-1/2})$ where
\[
\tilde{a}_{1k} = \frac{1}{|\ci_k|} \sum_{i \in \ci_k} \left(\hat{R}^{(-k)}(R_i) + m^*(R_i)-\hat{m}^{(-k)}(R_i)\right)h_1^*(R_i).
\]
Furthermore, whenever $\hat{R}^{(-k)}$ exists,
we have that
\[
\e\left[\left(\hat{R}^{(-k)}(R_i) + m^*(R_i)-\hat{m}^{(-k)}(R_i)\right)h_1^*(R_i) \mid \cf_N^{(-k)}\right] = 0
\]
for each $i \in \ci_k$;
here $\cf_N^{(-k)}$ is the $\sigma$-algebra generated by the observations outside fold $k$,
and the preceding display holds since $\hat{R}^{(-k)}+m^*-m^{(-k)} \in \cs_{\cm}$
(recall $\cs_{\cm}$ is assumed to be a linear space, hence it is closed under multiplication by negative 1)
but $h_1^* \in \cs_{\cm}^{\perp}$.
With the summands in $\tilde{a}_{1k}$ i.i.d. across $i \in \ci_k$ conditional on $\cf_N^{(-k)}$,
we compute
\begin{align*}
\e\left[\tilde{a}_{1k}^2 \mid \cf_N^{(-k)}\right] & = \frac{1}{|\ci_k|} \int \left(\hat{R}^{(-k)}(r) + m^*(r)-\hat{m}^{(-k)}(r)\right)^2(h_1^*(r))^2 dP^*(w) \\
& \leq \frac{C^2}{|\ci_k|} \left\|\hat{R}^{(-k)}+m^*-\hat{m}^{(-k)}\right\|_{2,P^*}^2 \text{ by boundedness of $h_1^*$} \\
& = o_p(N^{-1})
\end{align*}
since $|\ci_k|^{-1}=O(N^{-1})$ while
\[
\|\hat{R}^{(-k)}+m^*-\hat{m}^{(-k)}\|_{2,P^*} \leq \|\hat{R}^{(-k)}\|_{2,P^*} + \|\hat{m}^{(-k)}-m^*\|_{2,P^*} = o_p(1).
\]
by Minkowski's inequality and the rate conditions $\|\hat{R}^{(-k)}\|_{2,P^*} = o_p(1)$ and $\|\hat{m}^{(-k)}-m^*\|_{2,P^*} = o_p(1)$.
We conclude by Lemma~\ref{lemma:conditional_oh_pee} that $\tilde{a}_{1k} = o_p(N^{-1/2})$ and hence $a_{1k} = o_p(N^{-1/2})$ as well.

Next, we show $a_{2k} = o_p(N^{-1/2})$.
Note that $\e\left[Y_i-m^*(R_i) \mid \cf_N^{(-k)},\cf_R^{(k)}\right] = 0$ for all $i \in \ci_k$ where $\cf_R^{(k)}$ is the $\sigma$-algebra generated by $(R_i)_{i \in \ci_k}$.
Additionally we have that the collection $\{Y_i-m^*(R_i)\}_{i \in \ci_k}$ is conditionally independent given the $\sigma$-algebra generated by $\cf_N^{(-k)}$ and $\cf_R^{(k)}$,
and that $\hat{h}_1^{(-k)}(R_i)-h_1^*(R_i)$ is measurable with respect to this $\sigma$-algebra.
Finally we have $\e\left[(Y_i-m^*(R_i))^2 \mid \cf_N^{(-k)},\cf_R^{(k)}\right] = V^*(R_i) \leq C$ by the boundedness condition.
Thus with 
\[
\frac{1}{|\ci_k|} \sum_{i \in \ci_k} \left(\hat{h}_1^{(-k)}(R_i)-h_1^*(R_i)\right)^2 = o_p(1)
\]
by the rate conditions for the first component and Lemma~\ref{lemma:conditional_oh_pee_fn_f},
we conclude by Lemma~\ref{lemma:mean_0_times_small} that $a_{2k} = o_p(N^{-1/2})$.

To show that $a_{3k} = o_p(N^{-1/2})$,
simply note that
\begin{align*}
|a_{3k}| & \leq \left(\frac{1}{|\ci_k|} \sum_{i \in \ci_k} \left(m^*(R_i)-\hat{m}^{(-k)}(R_i)\right)^2\right)^{1/2} \times \left(\frac{1}{|\ci_k|} \sum_{i \in \ci_k}\left(\hat{h}_1^{(-k)}(R_i)-h_1^*(R_i)\right)^2\right)^{1/2} \\
& = o_p(N^{-1/2})
\end{align*}
by Cauchy-Schwarz and the product rate condition $\|\hat{m}^{(-k)}-m^*\|_{2,P^*}\|\hat{h}_1^{(-k)}-h_1^*\|_{2,P^*} = o_p(N^{-1/2})$,
in view of Lemma~\ref{lemma:conditional_oh_pee_fn_f}.

Finally, let $\cf_X^{(k)}$ be the $\sigma$-algebra generated by the observations $\{X_i\}_{i \in \ci_k}$,
so that
\[
\e\left[S_i(Z_i-e^*(X_i)) \mid \cf_N^{(-k)},\cf_X^{(k)}\right] = 0, \quad \forall i \in \ci_k.
\]
The collection $\{S_i(Z_i-e^*(X_i))\}_{i \in \ci_k}$ is conditionally independent given the $\sigma$-algebra generated by $\cf_N^{(-k)}$ and $\cf_X^{(k)}$,
and $\hat{h}_2^{(-k)}(X_i)-h_2^*(X_i)$ is measurable with respect to this $\sigma$-algebra.
Since
\[
\e\left[\left(S_i(Z_i-e^*(X_i))\right)^2 \mid \cf_N^{(-k)}, \cf_X^{(k)} \right] \leq 1, \quad \frac{1}{|\ci_k|} \sum_{i \in \ci_k} \left(\hat{h}_2^{(-k)}(X_i)-h_2^*(X_i)\right)^2 = o_p(1)
\]
by the rate conditions for the second component and Lemma~\ref{lemma:conditional_oh_pee_fn_f}, 
we conclude by Lemma~\ref{lemma:mean_0_times_small} that $b_k = o_p(N^{-1/2})$.

\section{Additional derivations of efficient influence functions}
\label{app:eif_derivations}

Here we work through the three-step outline presented following the statement of Theorem~\ref{thm:semiparametric_tangent_space} in the main text to derive EIF's in the models $\cp_{\cm_1}$, $\cp_{\cm_2}$, $\cp_{\cm_3}$, and $\cp_{\cm_4}$ corresponding to examples~\ref{ex:restricted_moment_model},~\ref{ex:mean_exchangeable_controls},~\ref{ex:parametric_confounding_cate}, and~\ref{ex:parametric_confounding_bias}.

\subsection{Restricted moment model}
\label{app:restricted_moment_model}
In the restricted moment model $\cp_{\cm_1}$ of Example~\ref{ex:restricted_moment_model},
under suitable regularity conditions
we have by standard $Z$-estimation theory that the solution to
\[
\frac{1}{N} \sum_{i=1}^N a(R_i)(Y_i-\mu(R_i;\beta)) = 0
\]
is a RAL estimator for $\beta \in \real^q$ for any $a \in \ch_R^q$.
The corresponding influence function is
\[
\varphi_0(w;\beta^*,\eta^*) = \e\left[a(r)D(r;\beta^*)^{\top}\right]^{-1} a(r)(y-\mu(r;\beta))
\]
where $D(r;b) =  \mu_{\beta}(r;b) \equiv \partial \mu(r;\beta) / \partial \beta\Big|_{\beta=b} \in \real^q$
(cf. Chapter 5 of~\citet{vandervaart2000asymptotic}).
In the case that $\mu(r;\beta)=\psi(r)^{\top}\beta$ is a linear model,
taking $a(r) = \psi(r)$ gives the ordinary least squares estimator.
In any case, we can use $\varphi_0$ as above for our initial influence function of an RAL estimator of $\beta$,
which we aim to project via~\eqref{eq:eif_raw} to derive the EIF for $\beta$ in the model $\cp_{\cm_1}$.

To do so, we first compute the semiparametric tangent space $\cs_{\cm_1}$, similar to the example in Section~\ref{sec:S_M}.
An arbitrary outcome mean function parametric submodel for $\cm_1$ evidently takes the form
\[
m(r;\gamma) = \mu(r;\alpha(\gamma))
\]
for some smooth $\alpha:\real^s \rightarrow \real^q$ defined on a neighborhood of $0 \in \real^s$ with $\alpha(0)=\beta^*$.
Then by the chain rule
\[
m_{\gamma}(r;0) = J_{\alpha}(0)^{\top} D(r;\beta^*)
\]
for $J_{\alpha}(0) \in \real^{q \times s}$ a Jacobian matrix.
For any $c_0 \in \real^s$,
$c_0^{\top}m_{\gamma}(r;0)$ is of the form $r \mapsto c^{\top}D(r;\beta^*)$ for some $c \in \real^q$,
which is evidently closed.
Conversely,
any function of this form clearly lies in the span of the derivative of the simple $q$-dimensional submodel $m(r;\gamma)= \mu(r;\beta^* + \gamma)$.
It follows that
\[
\cs_{\cm_1} = \{r \mapsto c^{\top}D(r;\beta^*) \mid c \in \real^q\}.
\]
We note in passing that for the linear model $\mu(r;\beta) = \psi(r)^{\top}\beta$ we can verify that indeed $\cs_{\cm_1}=\cm_1$,
as shown by Proposition~\ref{prop:S_M_equals_M}.

To derive the orthogonal complement $\cs_{\cm_1}^{\perp}$,
as suggested in Section~\ref{sec:projections} we solve for the set of functions $h \in \ch_R$ for which the projection $g=\Pi(h;\cs_{\cm_1})$ is equal to 0.
We have $g(r)=c_h^{\top}D(r;\beta^*)$ where
\[
c_h = \argmin_{c \in \real^q} \e\left[\left(h(R)-c^{\top}D(R;\beta^*)\right)^2\right]
\]
Taking a derivative gives
\[
c_h = \left(\e[D(R;\beta^*)D(R;\beta^*)^{\top}]\right)^{-1}\e[h(R)D(R;\beta^*)]
\]
so $g=0 \iff c_h=0 \iff \e[h(R)D(R;\beta^*)=0]$.
It follows that
\[
\cs_{\cm_1}^{\perp} = \{h \in \ch_R \mid \e[h(R)D(R;\beta^*)]=0\}.
\]
and (by Theorem~\ref{thm:semiparametric_tangent_space}) that
\[
\ct_{\cm_1}^{\perp} = \{w \mapsto h(r)(y-\mu(r;\beta^*)) \mid \e[h(R)D(R;\beta^*)]=0\}.
\]

Finally, we compute the projection $g=\Pi\left(\varphi_0;\left(\ct_{\cm_1}^{\perp}\right)^q\right)$ to get the EIF via~\eqref{eq:eif_raw}.
By definition,
$g(w)=(y-\mu(r;\beta^*))h_0(r)$ where
\begin{align*}
h_0 & = \argmin_{h \in \ch_R^q: \e[h(R)D(R;\beta^*)^{\top}]=0} \e\left[(Y-\mu(R;\beta^*))^2\|A^{-1}a(R)-h(X)\|^2\right] \\
A & = \e[a(R)D(R;\beta^*)^{\top}].
\end{align*}
To compute this projection,
we handle the constraint $\e[h(R)D(R;\beta^*)^{\top}]=0$ by introducing a Lagrange multiplier $\Lambda \in \real^{q \times q}$.
Then we minimize the Lagrangian by conditioning on $R$ and minimizing pointwise:
\begin{align*}
& \argmin_{h \in \ch_R^q} \e\left[(Y-\mu(R;\beta^*))^2\|A^{-1}a(R)-h(R)\|^2\right] - 2\tr\left(\Lambda^{\top}\e[h(R)D(R;\beta^*)^{\top}]\right) \\
& \quad = \argmin_{h \in \ch_R^q} \e\left[V^*(R)\|h(R)\|^2 - 2\left(V^*(R)a(R)^{\top}A^{-1} + D(R;\beta^*)^{\top}\Lambda^{\top}\right) h(R)\right]
\end{align*}
Minimizing the argument to the expectation pointwise gives
\[
h_0(r) = A^{-1}a(r) + \frac{1}{V^*(r)}\Lambda D(r;\beta^*)
\]
We solve for the Lagrange multiplier $\Lambda$ by plugging the preceding display into the matrix equations $\e[h_0(R)D(R;\beta^*)^{\top}]=0$,
giving
\[
\Lambda = -\left(\e\left[\frac{1}{V^*(R)}D(R;\beta^*)D(R;\beta^*)^{\top}\right]\right)^{-1}.
\]
Then the EIF is
\begin{align*}
\varphi_{\eff}(w;\beta^*,\eta^*) & = \varphi_0(w;\beta^*,\eta^*) - h_0(r)(y-\mu(r;\beta^*)) \\
& = \frac{y-\mu(r;\beta^*)}{V^*(r)} \left(\e\left[\frac{1}{V^*(R)}D(R;\beta^*)D(R;\beta^*)^{\top}\right]\right)^{-1}D(r;\beta^*)
\end{align*}
which matches equation (4.55) of~\citet{tsiatis2006semiparametric}.

\subsection{Mean-exchangeable controls}

We derive EIF's in the model $\cp_{\cm_2}$.
We can use $\varphi_{0,\obs}$ and $\varphi_{0,\tgt}$ as the initial influence functions $\varphi_0$.
By Proposition~\ref{prop:S_M_equals_M},
we know $\cs_{\cm_2}=\cm_2$.
It only remains to compute $\cs_{\cm_2}^{\perp}$ and the projection~\eqref{eq:eif_raw}.

As in the derivation of $\cs_{\cm_5}^{\perp}$ in the main text,
we characterize $\cs_{\cm_2}^{\perp}$ as the set of all functions $h \in \ch_R$ for which $g=\Pi(h;\cs_{\cm_2})=0$.
To that end we write $h(r)=zh_{11}(x)+(1-s)h_{01}(x)+s(1-z)h_{10}(x)$ and
\[
g = zg_1(x)+(1-z)g_0(x)
\]
where
\begin{align*}
(g_0,g_1) & = \argmin_{(f_0,f_1) \in \ch_X^2} \e[(h(R)-Zf_1(X)-(1-Z)f_0(X))^2] \\
& = \argmin_{(f_0,f_1) \in \ch_X^2} \e[Z(h_{11}(X)-f_1(X))^2] \\
& \quad + \e[(1-S)(h_{01}(X)-f_0(X))^2+S(1-Z)(h_{10}(X)-f_0(X))^2] \\
& = \argmin_{(f_0,f_1) \in \ch_X^2} \e[p^*(X)e^*(X)(h_{11}(X)-f_1(X))^2 + (1-p^*(X))(h_{01}(X)-f_0(X))^2] \\
& \quad +\e[p^*(X)(1-e^*(X))(h_{10}(X)-f_0(X))^2].
\end{align*}
The final equality follows by conditioning on $X$,
and expresses the objective as the expectation of a function of $X$.
Without any restrictions on $f_0$ and $f_1$ we can perform the minimization by minimizing the argument of the expectation pointwise in $x$.
Writing the first order conditions,
we find they are satisfied by $f_0=f_1=0$ if and only if
\begin{align*}
p^*(x)e^*(x)h_{11}(x) & = 0 \\
p^*(x)(1-e^*(x))h_{10}(x) + (1-p^*(x))h_{00}(x) & = 0
\end{align*}
for all $x \in \cx$. 
Hence $h \in \cs_{\cm_2}^{\perp}$ if and only if $h_{11}(x)=0$ and $h_{10}(x)=-\frac{1-p^*(x)}{p^*(x)(1-e^*(x))} h_{00}(x)$,
with $h_{00} \in \ch_X$ free to vary.
In other words
\[
\cs_{\cm_2}^{\perp} = \left\{r \mapsto \left[(1-s)-s(1-z)\frac{1-p^*(x)}{p^*(x)(1-e^*(x))}\right]\zeta(x) \Big| \zeta \in \ch_X\right\}
\]
and so by Theorem~\ref{thm:semiparametric_tangent_space}
\[
\ct_{\cm_2}^{\perp} = \left\{w \mapsto \left[(1-s)-s(1-z)\frac{1-p^*(x)}{p^*(x)(1-e^*(x))}\right]\zeta(x)(y-m^*(r)) \Big| \zeta \in \ch_X\right\}.
\]

Now we compute the projection $g=\Pi(\varphi_0;\ct_{\cm_2}^{\perp})$ from~\eqref{eq:eif_raw} in terms of $\varphi_0$.
We write
\[
g(w)=\left[(1-s)-s(1-z)\frac{1-p^*(x)}{p^*(x)(1-e^*(x))}\right]\zeta(x)(y-m^*(r))
\]
where
\[
\zeta = \argmin_{h \in \ch_X} \e\left[\left(\varphi_{0,\obs}(W)-\left[(1-S)-S(1-Z)\frac{1-p^*(X)}{p^*(X)(1-e^*(X))}\right]h(X)(Y-m^*(R))\right)^2\right].
\]
By our usual approach of rewriting the objective as the expectation of a function of $X$ by iterated expectations and then minimizing the argument of this expectation pointwise,
we get
\[
\zeta(x) = \frac{\e\left[\varphi_0(W)\left((1-S)-S(1-Z)\frac{1-p^*(X)}{p^*(X)(1-e^*(X))}\right)(Y-m^*(R)) \mid X=x\right]}{(1-p^*(x))V_{00}^*(x)+\frac{(1-p^*(x))^2}{p^*(x)(1-e^*(x))}V_{10}^*(x)}
\]

For $\varphi_0=\varphi_{0,\obs}$ we observe
\[
\e[\varphi_0(W;\tau_{\obs}^*,\eta^*)(1-S)(Y-m^*(R)) \mid X] = 0
\]
while
\[
\e\left[-\varphi_0(W;\tau_{\obs}^*,\eta^*)S(1-Z)\frac{1-p^*(X)}{p^*(X)(1-e^*(X))}(Y-m^*(R)) \mid X\right] = \frac{(1-p^*(X))^2V_{10}^*(X)}{p^*(X)(1-e^*(X))(1-\rho^*)}.
\]
Hence
\[
\zeta(x) = \frac{(1-p^*(x))V_{10}^*(x)}{(1-\rho^*)[V_{00}^*(x)p^*(x)(1-e^*(x))+V_{10}^*(x)(1-p^*(x))]}
\]
and then the EIF for $\tau_{\obs}$ is
\[
\varphi_{\eff}(w;\tau_{\obs}^*,\eta^*) = \varphi_{0,\obs}(w;\tau_{\obs}^*,\eta^*) - \left[(1-s)-s(1-z)\frac{1-p^*(x)}{p^*(x)(1-e^*(x))}\right]\zeta(x)(y-m^*(z,x))
\]
which after algebraic simplification matches the expression given in Proposition 3 of~\citet{li_improving_2023}.

Repeating the calculation in the preceding paragraph with $\varphi_0=\varphi_{0,\tgt}$ gives the efficiency bound for $\tau_{\tgt}$,
also given in Proposition 3 of~\citet{li_improving_2023}.

\subsection{Parametric confounding bias and CATE}
Following our three-step outline,
we derive EIF's in the model $\cp_{\cm_3}$ in terms of an arbitrary influence function $\varphi_0=\varphi_0(\cdot,\tau,\eta)$ for an RAL estimator of some generic pathwise-differentiable estimand $\tau$.
We later specialize to the estimand $\tau_{\obs}$ using $\varphi_0=\varphi_{0,\obs}$,
which verifies Theorem 4 of~\citet{yang2024datafusion}.

We begin by deriving the outcome mean function tangent space $\cs_{\cm_3}$.
The method to do so is similar to that of Appendix~\ref{app:restricted_moment_model}.
In words,
the model $\cp_{\cm_3}$ allows the control level ($z=0$) of the outcome mean function to be an arbitrary function of $(s,x)$.
Then given the control level, the treatment level differs by $\mu(x;\beta)$ from the control level when $s=1$ ,
and by $\mu(x;\beta) + \phi(x;\theta)$ from the control level when $s=0$.
Formalizing this mathematically,
this means that an arbitrary outcome mean function parametric submodel of $\cm_3$ must take the form
\begin{equation}
\label{eq:M_3_submodel}
m(r;\gamma)=f_0(s,x;\gamma_0) + z[\mu(x;\beta(\gamma_1)) + (1-s)\phi(x;\theta(\gamma_2))]
\end{equation}
for some smooth mappings $\beta:\real^{s_1} \rightarrow \real^p$ and $\theta:\real^{s_2} \rightarrow \real^q$,
where $\gamma=(\gamma_0^{\top},\gamma_1^{\top},\gamma_2^{\top})^{\top}$ and $\gamma_i \in \real^{s_i}$, $i=0,1,2$.
Then using the subscript notation for (partial) derivatives from Example~\ref{ex:outcome_selection_bias} in Section~\ref{sec:semiparametric_efficiency} of the main text,
for arbitrary $c=(c_0^{\top},c_1^{\top},c_2^{\top})^{\top}$ partitioned in the same way as $\gamma$, we have
\[
c^{\top}m_{\gamma}(r;0) = c_0^{\top}f_0'(s,x;0) + zc_1^{\top}J_{\beta}(0)^{\top}D_{\mu}(x;\beta^*) + z(1-s)c_2^{\top}J_{\theta}(0)^{\top} D_{\phi}(x;\theta^*)
\]
where $J_{\beta}(0) \in \real^{p \times s_1}$ and $J_{\theta}(0) \in \real^{q \times s_2}$ are the Jacobian matrices for the mappings $\beta$ and $\theta$ at 0,
and
\[
D_{\mu}(x;\beta^*) = \frac{\partial \mu(x;\beta)}{\partial \beta}\Big|_{\beta=\beta^*}, \quad D_{\phi}(x;\theta^*) = \frac{\partial \phi(x;\theta)}{\partial \theta} \Big|_{\theta=\theta^*}.
\]
This is of the form
\begin{equation}
\label{eq:S_M_3_form}
f(s,x) + z[c_1^{\top}D_{\mu}(x;\beta^*) + (1-s)c_2^{\top}D_{\phi}(x;\theta^*)]
\end{equation}
for some $f \in \ch_{SX}$, $c_1 \in \real^p$, $c_2 \in \real^q$ (note we have overloaded the notation for $c_1$ and $c_2$).
Conversely,
we now show that any function of the form~\eqref{eq:S_M_3_form},
for some $f \in \ch_{SX}$, $c_1 \in \real^p$, $c_2 \in \real^q$,
lies in $\cs_{\cm_3}$.
To that end, consider the submodel of the form~\eqref{eq:M_3_submodel} with $s_0=1$, $s_1=p$, $s_2=q$ and $f(s,x)=m^*(s,0,x) + \gamma_0f(s,x)$, $\beta(\gamma_1)=\beta^*+\gamma_1$, and $\theta(\gamma_2)=\theta^*+\gamma_2$.
We observe that our function given by~\eqref{eq:S_M_3_form} is equivalent to $c^{\top}m_{\gamma}(r;0)$ for $c=(1,c_1^{\top},c_2^{\top})$,
and thus lies in $\cs_{\cm_3}$ by definition.
We conclude that indeed,
\begin{equation}
\label{eq:S_M_3}
\cs_{\cm_3} = \{r \mapsto f(s,x) + z[c_1^{\top}D_{\mu}(x;\beta^*) + (1-s)c_2^{\top}D_{\phi}(x;\theta^*)] \mid f \in \ch_{SX}, c_1 \in \real^p, c_2 \in \real^q\}.
\end{equation}

It remains to compute the orthogonal complement $\cs_{\cm_3}^{\perp}$ and then the projection~\eqref{eq:eif_raw}.
As in the other examples,
we compute $\cs_{\cm_3}^{\perp}$ as the set of all functions $h \in \ch_R$ for which $g=\Pi(h;\cs_{\cm_3})=0$.
Given $h$ we write 
\[
g(r) = f_0(s,x) + z[d_1^{\top}D_{\mu}(x;\beta^*) + (1-s)d_2^{\top}D_{\phi}(x;\theta^*)]
\]
where
\[
(f_0,d_1,d_2) = \argmin_{f \in \ch_{SX}, c_1 \in \real^p, c_2 \in \real^q} \e\left[\left(h(R)-f(S,X)-Z[c_1^{\top}D_{\mu}(X;\beta^*) + (1-S)c_2^{\top}D_{\phi}(X;\theta^*)]\right)^2\right].
\]
As we have seen,
we can solve this minimization by rewriting the objective as the expectation of a function of solely $X$ via iterated expectations:
\begin{align*}
& \e\left[\left(h(R)-f(S,X)-Z[c_1^{\top}D_{\mu}(X;\beta^*) + (1-S)c_2^{\top}D_{\phi}(X;\theta^*)]\right)^2\right] \\
& \quad = \e[p^*(X)e^*(X)(h_{11}(X)-f(1,X)-c_1^{\top}D_{\mu}(X;\beta^*))^2] + \e[p^*(X)(1-e^*(X))(h_{10}(X)-f(1,X))^2] \\
& \quad + \e[(1-p^*(X))q^*(X)(h_{01}(X)-f(0,X)-c_1^{\top}D_{\mu}(X;\beta^*)-c_2^{\top}D_{\phi}(X;\theta^*))^2]  \\
& \quad + \e[(1-p^*(X))(1-q^*(X))(h_{00}(X)-f(0,X))^2]
\end{align*}
For fixed $c_1$ and $c_2$,
the minimizer of this objective in $f \in \ch_{SX}$ takes the form $f(s,x)=sf(1,x)+(1-s)f(0,x)$ where $f(1,x)$ and $f(0,x)$ globally minimize the argument to the expectation, i.e. they satisfy the first-order conditions
\begin{align*}
0 & = p^*(x)e^*(x)(h_{11}(x)-f(1,x)-c_1^{\top}D_{\mu}(x;\beta^*)) + p^*(x)(1-e^*(x))(h_{10}(x)-f(1,x))^2 \\
0 & = (1-p^*(x))q^*(x)(h_{01}(x)-f(0,x)-c_1^{\top}D_{\mu}(x;\beta^*)-c_2^{\top}D_{\phi}(X;\theta^*)) \\
& \quad + (1-p^*(x))(1-q^*(x))(h_{00}(x)-f(0,x))
\end{align*}
Conversely for any fixed $f \in \cs_{SX}$,
taking the derivative of the objective shows that the minimizers $c_1 \in \real^p$ and $c_2 \in \real^q$ must satisfy the first order conditions
\begin{align*}
0 & = \e[p^*(X)e^*(X)(h_{11}(X)-f_1(X)-c_1^{\top}D_{\mu}(X;\beta^*))D_{\mu}(X;\beta^*)] \\
& \quad + \e[(1-p^*(X))q^*(X)(h_{01}(X)-f(0,X)-c_1^{\top}D_{\mu}(X;\beta^*)-c_2^{\top}D_{\phi}(X;\theta^*))D_{\mu}(X;\beta^*)] \\
0 & = \e[(1-p^*(X))q^*(X)(h_{01}(X)-f(0,X)-c_1^{\top}D_{\mu}(X;\beta^*)-c_2^{\top}D_{\phi}(X;\theta^*))D_{\phi}(X;\theta^*)]
\end{align*}
The four first-order conditions are satisfied with $f(0,x)=f(1,x)=0$ and $c_1=c_2=0$ if and only if
\begin{align}
\e[[p^*(X)e^*(X)h_{11}(X)+(1-p^*(X))q^*(X)h_{01}(X)]D_{\mu}(X;\beta^*)] & = 0 \label{eq:S_M_3_perp_1} \\
\e[(1-p^*(X))q^*(X)h_{01}(X)D_{\phi}(X;\theta^*)] & = 0 \label{eq:S_M_3_perp_2} \\
e^*(x)h_{11}(x) + (1-e^*(x))h_{10}(x) & = 0 \label{eq:S_M_3_perp_3} \\
q^*(x)h_{01}(x) + (1-q^*(x))h_{00}(x) & = 0. \label{eq:S_M_3_perp_4}
\end{align}
Thus $\cs_{\cm_3}^{\perp}$ is the set of all functions $h$ satisfying~\eqref{eq:S_M_3_perp_1},~\eqref{eq:S_M_3_perp_2},~\eqref{eq:S_M_3_perp_3}, and~\eqref{eq:S_M_3_perp_4}.
Recognizing that~\eqref{eq:S_M_3_perp_3} and~\eqref{eq:S_M_3_perp_4} determine $h_{00}$ and $h_{10}$ in terms of $h_{11}$ and $h_{01}$, respectively,
which are free to vary other than the moment constraints given by~\eqref{eq:S_M_3_perp_1} and~\eqref{eq:S_M_3_perp_2},
we can compactly write $\cs_{\cm_3}^{\perp}$ as the set of all mappings
\[
r \mapsto f(s,x)\left[z-(1-z)\left(\frac{se^*(x)}{1-e^*(x)} + \frac{(1-s)q^*(x)}{1-q^*(x)}\right)\right]
\]
such that $f \in \ch_{SX}$ with
\begin{align*}
\e[(p^*(X)e^*(X)f(1,X)+(1-p^*(X))q^*(X)f(0,X))D_{\mu}(X;\beta^*)] & = 0 \quad \text{and} \\
\e[(1-p^*(X))q^*(X)f(1,X)D_{\phi}(X;\theta^*)] & = 0.
\end{align*}

Finally, we compute the projection~\eqref{eq:eif_raw}.
Given the initial influence function $\varphi_0$,
we have by the form of $\cs_{\cm_3}^{\perp}$ derived in the previous paragraph and Theorem~\ref{thm:semiparametric_tangent_space} that
\[
\pi(\varphi_0;\ct_{\cm_3}^{\perp}) = w \mapsto \left(s\Gamma_1^*(x)+(1-s)\Gamma_0^*(x)\right)\left(z-(1-z)\left[\frac{se^*(x)}{1-e^*(x)} + \frac{(1-s)q^*(x)}{1-q^*(x)}\right]\right)(y-m^*(r))
\]
where $\Gamma_0^*$ and $\Gamma_1^*$ solve the optimization problem
\begin{align*}
& \min_{\Gamma_0 \in \ch_X, \Gamma_1 \in \ch_X} \e\left[\left(\varphi_0(W;\tau^*,\eta^*)-\left[S\Gamma_1(X)+(1-S)\Gamma_0(X)\right]\nu^*(R)\right)^2\right] \\
\text{s.t.} \quad & \e[[p^*(X)e^*(X)\Gamma_1(X) + (1-p^*(X))q^*(X)\Gamma_0(X)]D_{\mu}(X;\beta^*)] = 0, \quad \text{and} \\
& \e[(1-p^*(X))q^*(X)\Gamma_0(X)D_{\phi}(X;\theta^*)] = 0
\end{align*}
with
\[
\nu^*(r) = \nu(r;\eta^*) \text{ for } \nu(r;\eta) = \left[z-(1-z)\left(\frac{se^*(x)}{1-e^*(x)} + \frac{(1-s)q^*(x)}{1-q^*(x)}\right)\right](y-m^*(r)).
\]
Once again, we rewrite the objective by conditioning on $X$:
\begin{align*}
& \e\left[\left(\varphi_0(W;\tau^*,\eta^*)-\left(S\Gamma_1(X)+(1-S)\Gamma_0(X)\right)\nu^*(R)\right)^2\right] \\
& = \e[\varphi_0(W;\tau^*,\eta^*)^2 \mid X] + \e\left[\Gamma_1^2(X)p^*(X)e^*(X)\left(V_{11}^*(X)+\frac{e^*(X)}{1-e^*(X)}V_{10}^*(X)\right)\right] \\
& \quad + \e\left[\Gamma_0^2(X)(1-p^*(X))q^*(X)\left(V_{01}^*(X)+\frac{q^*(X)}{1-q^*(X)}V_{00}^*(X)\right)\right] -2\e[\cj(X;\Gamma_0,\Gamma_1,\eta,\tau,\varphi_0)]
\end{align*}
where
\[
\cj(x;\Gamma_0,\Gamma_1,\eta,\tau,\varphi_0) = \e\left[\left(S\Gamma_1(X)+(1-S)\Gamma_0(X)\right)\nu^*(R)\varphi_0(W;\tau^*,\eta^*) \mid X=x\right].
\]
To deal with the two moment constraints,
we introduce Lagrange multipliers $\Lambda_1 \in \real^p $ and $\Lambda_2 \in \real^q$, similar to Appendix~\ref{app:restricted_moment_model}.
Dropping the constant $\e[\varphi_0(W;\tau^*,\eta^*)^2 \mid X=x]$ term from the objective,
the Lagrangian is $\e[\mathcal{L}(X)]$ where
\begin{align*}
\mathcal{L}(x) & = \Gamma_1^2(x)p^*(x)e^*(x)\left(V_{11}^*(x)+\frac{e^*(x)}{1-e^*(x)}V_{10}^*(x)\right) \\
& \quad + \Gamma_0^2(x)(1-p^*(x))q^*(x)\left(V_{01}^*(x)+\frac{q^*(x)}{1-q^*(x)}V_{00}^*(x)\right) \\
& \quad -2 \cj(x;\Gamma_0,\Gamma_1,\eta^*,\tau^*,\varphi_0) \\
& \quad - 2\Lambda_1^{\top}[p^*(x)e^*(x)\Gamma_1(x)+(1-p^*(x))q^*(x)\Gamma_0(x)]D_{\mu}(x;\beta^*) \\
& \quad - 2\Lambda_2^{\top}[(1-p^*(x))q^*(x)\Gamma_0(x)D_{\phi}(x;\theta^*)].
\end{align*}
We can minimize the Lagrangian over $\Gamma_0 \in \ch_X$ and $\Gamma_1 \in \ch_X$ by minimizing $\mathcal{L}(x)$ pointwise,
giving
\begin{align}
\Gamma_0^*(x) & = (1-q^*(x))\left(\frac{\Lambda_1^{\top}D_{\mu}(x;\beta^*)+\Lambda_2^{\top}D_{\phi}(x;\theta^*)}{(1-q^*(x))V_{01}^*(x)+q^*(x)V_{00}^*(x)}\right) \nonumber \\
& \quad + \frac{(1-q^*(x))\cj_0(x;\eta,\tau,\varphi_0)}{(1-p^*(x))q^*(x)[(1-q^*(x))V_{01}^*(x)+q^*(x)V_{00}^*(x)]} \label{eq:Gamma_0_star} \\
\Gamma_1^*(x) & = (1-e^*(x))\left(\frac{\Lambda_1^{\top}D_{\mu}(x;\beta^*)}{(1-e^*(x))V_{11}^*(x)+e^*(x)V_{10}^*(x)}\right) \nonumber \\
& \quad + \frac{(1-e^*(x)) \cj_1(x;\eta,\tau,\varphi_0)}{p^*(x)e^*(x)[(1-e^*(x))V_{11}^*(x)+e^*(x)V_{10}^*(x)]} \label{eq:Gamma_1_star}
\end{align}
where the Lagrange multipliers can be solved for by plugging back into the moment constraints and
\begin{align*}
\cj_1(x;\eta,\tau,\varphi_0) & = \e\left[S\left(Z-(1-Z)\frac{e^*(X)}{1-e^*(X)}\right)(Y-m^*(R))\varphi_0(W;\tau^*,\eta^*) \mid X=x\right] \\
\cj_0(x;\eta,\tau,\varphi_0) & = \e\left[(1-S)\left(Z-(1-Z)\frac{q^*(X)}{1-q^*(X)}\right)(Y-m^*(R))\varphi_0(W;\tau^*,\eta^*) \mid X=x\right]
\end{align*}
which come from writing
\[
\cj(x;\eta,\tau,\varphi_0) = \cj_1(x;\eta,\tau,\varphi_0)\Gamma_1(x) + \cj_1(x;\eta,\tau,\varphi_0)\Gamma_0(x).
\]
This shows that
$\Lambda_1$ and $\Lambda_2$ solve the linear matrix system
\begin{align*}
A_{\mu\mu}\Lambda_1 + A_{\mu\phi}\Lambda_2 & = -b_{\mu} \\
A_{\mu\phi}^{\top} + A_{\phi\phi}\Lambda_2 & = -b_{\phi}
\end{align*}
where
\begin{align}
A_{\mu\mu} & = \e\left[\frac{p^*(X)e^*(X)(1-e^*(X))}{(1-e^*(X))V_{11}^*(X)+e^*(X)V_{10}^*(X)}D_{\mu}(X;\beta^*)D_{\mu}(X;\beta^*)^{\top}\right] \nonumber \\
& \quad + \e\left[\frac{(1-p^*(X))q^*(X)(1-q^*(X))}{(1-q^*(X))V_{01}^*(X)+q^*(X)V_{00}^*(X)}D_{\mu}(X;\beta^*)D_{\mu}(X;\beta^*)^{\top}\right]  \in \real^{p \times p}  \label{eq:A_mu_mu} \\
A_{\mu\phi} & = \e\left[\frac{(1-p^*(X))q^*(X)(1-q^*(X))}{(1-q^*(X))V_{01}^*(X)+q^*(X)V_{00}^*(X)}D_{\mu}(X;\beta^*)D_{\phi}(X;\theta^*)^{\top}\right]  \in \real^{p \times q} \label{eq:A_mu_phi} \\
A_{\phi\phi} & = \e\left[\frac{(1-p^*(X))q^*(X)(1-q^*(X))}{(1-q^*(X))V_{01}^*(X)+q^*(X)V_{00}^*(X)}D_{\phi}(X;\theta^*)D_{\phi}(X;\theta^*)^{\top}\right] \in \real^{q \times q} \label{eq:A_phi_phi} \\
b_{\mu} & = \e\left[\frac{(1-e^*(X))\cj_1(X;\eta^*,\tau^*,\varphi_0)}{(1-e^*(X))V_{11}^*(X)+e^*(X)V_{10}^*(X)}D_{\mu}(X;\beta^*)\right] \nonumber \\
& \quad + \e\left[\frac{(1-q^*(X))\cj_0(X;\eta^*,\tau^*,\varphi_0)}{(1-q^*(X))V_{01}^*(X)+q^*(X)V_{00}^*(X)}D_{\mu}(X;\beta^*)\right] \in \real^p 
\label{eq:b_mu} \\
b_{\phi} & = \e\left[\frac{(1-q^*(X))\cj_0(X;\eta^*,\tau^*,\varphi_0)}{(1-q^*(X))V_{01}^*(X)+q^*(X)V_{00}^*(X)}D_{\phi}(X;\theta^*)\right] \in \real^q.
\end{align}

Taking $\varphi_0=\varphi_{0,\obs}$ to give a more explicit EIF for $\tau_{\obs}$,
we can easily compute
\begin{align}
\cj_0(x;\eta,\tau,\varphi_{0,\obs}) & = 0, \label{eq:J_0} \\
\cj_1(x;\eta,\tau,\varphi_{0,\obs}) & = \frac{1-p^*(x)}{1-\rho^*}\left(V_{11}^*(x)+\frac{e^*(x)}{1-e^*(x)}V_{10}^*(x)\right). \label{eq:J_1}
\end{align}
Then some of the other expressions simplify:
\begin{align}
b_{\phi} & = 0 \label{eq:b_phi_tau_obs} \\
b_{\mu} & = \e\left[\frac{1-p^*(X)}{1-\rho^*}D_{\mu}(X;\beta^*)\right]=\e[D_{\mu}(X;\beta^*) \mid S=0] \label{eq:b_mu_tau_obs} \\
\Gamma_0^*(x) & = (1-q^*(x))\left(\frac{\Lambda_1^{\top}D_{\mu}(x;\beta^*)+\Lambda_2^{\top}D_{\phi}(x;\theta^*)}{(1-q^*(x))V_{01}^*(x)+q^*(x)V_{00}^*(x)}\right) \label{eq:Gamma_0_star_tau_obs} \\
\Gamma_1^*(x) & = (1-e^*(x))\left(\frac{\Lambda_1^{\top}D_{\mu}(x;\beta^*)}{(1-e^*(x))V_{11}^*(x)+e^*(x)V_{10}^*(x)}\right) + \frac{(1-p^*(x))}{(1-\rho^*)p^*(x)e^*(x)}. \label{eq:Gamma_1_star_tau_obs}
\end{align}
Solving the matrix equations for the Lagrange multipliers then gives
\begin{align}
\Lambda_1 & = -\left(A_{\mu\mu} - A_{\mu\phi}A_{\phi\phi}^{-1}A_{\mu\phi}^{\top}\right)^{-1}b_{\mu}, \label{eq:Lambda_1} \\
\Lambda_2 = -A_{\phi\phi}^{-1}A_{\mu\mu}^{\top}\Lambda_1 & =A_{\phi\phi}^{-1}A_{\mu\mu}^{\top}\left(A_{\mu\mu} - A_{\mu\phi}A_{\phi\phi}^{-1}A_{\mu\phi}^{\top}\right)^{-1}b_{\mu}. \label{eq:Lambda_2}
\end{align}
Putting everything together,
the EIF for $\tau_{\obs}$ in the model $\cp_{\cm_3}$ is given by
\begin{align}
& \varphi_{\eff,\obs}^{(3)}(w;\tau^*,\eta^*) = \varphi_{0,\obs}(w;\tau^*,\eta^*) -  \nonumber \\
& \quad \left(s\Gamma_1^*(x)+(1-s)\Gamma_0^*(x)\right)\left(z-(1-z)\left[\frac{se^*(x)}{1-e^*(x)} + \frac{(1-s)q^*(x)}{1-q^*(x)}\right]\right)(y-m^*(r)) \nonumber \\
& = \frac{1-s}{1-\rho^*}(m_{11}^*(x)-m_{10}^*(x)-\tau_{\obs}^*) + s\Delta(z,x,y;\eta^*)\left(\frac{1-p^*(x)}{p^*(x)(1-\rho^*)}-e^*(x)\Gamma_1^*(x)\right) \nonumber \\
& \quad -(1-s)\Gamma_0^*(x)\left(z-(1-z)\frac{q^*(x)}{1-q^*(x)}\right)(y-m^*(r)) \label{eq:our_eif}
\end{align}
where $\Gamma_0^*$ and $\Gamma_1^*$ are given by~\eqref{eq:Gamma_0_star_tau_obs} and~\eqref{eq:Gamma_1_star_tau_obs},
respectively,
for $\Lambda_1$ and $\Lambda_2$ as in~\eqref{eq:Lambda_1} and~\eqref{eq:Lambda_2}.

We can now verify this matches the result in Theorem 4 of~\citet{yang2024datafusion}.
While that Theorem is stated as giving the efficient \emph{score} of $\tau_{\obs}$ rather than the efficient \emph{influence function} of $\tau_{\obs}$,
this is incorrect according to S. Yang (personal communication, April 19, 2025).
We correct the theorem below;
essentially we must replace the efficient score $s_{\beta}(w;\eta^*)$ for $\beta$ with the efficient influence function $\varphi_{\beta}(w;\eta^*)$ for $\beta$.
\begin{theorem*}[Theorem 4,~\citet{yang2024datafusion}, corrected]
Suppose Assumptions 1-3 hold.
Then the EIF of $\tau_{\obs}$ at the truth $\tau_{\obs}^*$ is 
\begin{equation}
\label{eq:yang_eif}
\tilde{\varphi}_{\eff,\obs}^{(3)}(w;\tau_{\obs}^*,\eta^*) = \frac{1-s}{1-\rho^*}(m_{11}^*(x)-m_{10}^*(x)-\tau_{\obs}^*) + \e\left[D_{\mu}(X;\beta^*)^{\top} \mid S=0\right]\varphi_{\beta}(w;\eta^*)
\end{equation}
where $\varphi_{\beta}(w;\eta^*)$ is the first $p$ components of
\[
\varphi_{\lambda}(w;\eta^*) = \left(\e\left[s_{\lambda}(W;\eta^*)s_{\lambda}(W;\eta^*)^{\top}\right]\right)^{-1} s_{\lambda}(w;\eta^*)
\]
for
\begin{align*}
s_{\lambda}(w;\eta^*) & = (Z-\e[Z\omega^*(R) \mid S,X]\e[\omega^*(R) \mid S,X]^{-1})\omega^*(R)(Y-m^*(R))
\begin{bmatrix}
D_{\mu}(x;\beta^*) \\
D_{\phi}(x;\theta^*)
\end{bmatrix}, \\
\omega^*(r) & = \frac{1}{V^*(r)}
\end{align*}
the efficient score for $\lambda=(\beta,\theta)$,
as given explicitly by Theorem 1 of~\citet{yang2024datafusion}.
\end{theorem*}
Showing that $\varphi_{\eff,\obs}^{(3)} = \tilde{\varphi}_{\eff,\obs}^{(3)}$ requires some algebra which we outline below.
First we compute
\begin{align*}
\e[\omega^*(R) \mid S,X] & = S\left(\frac{e^*(X)}{V_{11}^*(X)}+\frac{1-e^*(X)}{V_{10}^*(X)}\right) + (1-S)\left(\frac{q^*(X)}{V_{01}^*(X)} + \frac{1-q^*(X)}{V_{00}^*(X)}\right), \quad \text { and } \\
\e[Z\omega^*(R) \mid S,X] & = S\left(\frac{e^*(X)}{V_{11}^*(X)}\right) + (1-S)\left(\frac{q^*(X)}{V_{01}^*(X)}\right).
\end{align*}
Then
\begin{align}
\Omega^*(R) & := (Z-\e[Z\omega^*(R) \mid S,X]\e[\omega^*(R) \mid S,X]^{-1})\omega^*(R) \nonumber \\
& \quad = S\frac{Z(1-e^*(X))-(1-Z)e^*(X)}{V_{10}^*(X)e^*(X)+V_{11}^*(X)(1-e^*(X))}+(1-S) \frac{Z(1-q^*(X))-(1-Z)q^*(X)}{V_{00}^*(X)q^*(X)+V_{01}^*(X)(1-q^*(X))}. \label{eq:Omega_star}
\end{align}
This allows us to compute
\begin{align*}
& \e[s_{\lambda}(W;\eta^*)s_{\lambda}(W;\eta^*)^{\top}] \\
& \quad = \e\left(\e[(\Omega^*(R))^2(Y-m^*(R))^2 \mid X]
\begin{bmatrix}
D_{\mu}(X;\beta^*)D_{\mu}(X;\beta^*)^{\top} & D_{\mu}(X;\beta^*)D_{\phi}(X;\theta^*)^{\top} \\
D_{\phi}(X;\theta^*)D_{\mu}(X;\beta^*)^{\top} & D_{\phi}(X;\theta^*)D_{\phi}(X;\theta^*)^{\top}
\end{bmatrix}
\right)
\end{align*}
with
\[
\e[(\Omega^*(R))^2(Y-m^*(R))^2 \mid X] = \frac{p^*(X)e^*(X)(1-e^*(X))}{V_{10}^*(X)e^*(X)+V_{11}^*(X)(1-e^*(X))} + \frac{(1-p^*(X))q^*(X)(1-q^*(X))}{(V_{00}^*(X)q^*(X)+V_{01}^*(X)(1-q^*(X))}.
\]
Hence
\[
\e[s_{\lambda}(W;\eta^*)s_{\lambda}(W;\eta^*)^{\top}] = 
\begin{bmatrix}
A_{\mu\mu} & A_{\mu\phi} \\
A_{\mu\phi}^{\top} & A_{\phi\phi}
\end{bmatrix}
\]
where $A_{\mu\mu}$, $A_{\mu\phi}$, and $A_{\phi\phi}$ are as in~\eqref{eq:A_mu_mu},~\eqref{eq:A_mu_phi}, and~\eqref{eq:A_phi_phi}.
So
\[
\varphi_{\lambda}(w;\eta^*) = 
\begin{bmatrix}
A_{\mu\mu} & A_{\mu\phi} \\
A_{\mu\phi}^{\top} & A_{\phi\phi}
\end{bmatrix}^{-1}s_{\lambda}(w;\eta^*)
\]
and by block matrix inversion formulas
\[
\varphi_{\beta}(w;\eta^*) = 
\begin{bmatrix}
(A_{\mu\mu}-A_{\mu\phi}A_{\phi\phi}^{-1}A_{\mu\phi}^{\top})^{-1} & -(A_{\mu\mu}-A_{\mu\phi}A_{\phi\phi}^{-1}A_{\mu\phi}^{\top})^{-1}A_{\mu\phi}A_{\phi\phi}^{-1}
\end{bmatrix}
s_{\lambda}(w;\eta^*).
\]
Then by~\eqref{eq:yang_eif},~\eqref{eq:Lambda_1}, and~\eqref{eq:Lambda_2} we can write
\begin{align}
\label{eq:yang_eif_final}
\tilde{\varphi}_{\eff,\obs}^{(3)}(w;\tau^*,\eta^*) = \frac{1-s}{1-\rho^*}(m_{11}^*(x)-m_{10}^*(x)-\tau_{\obs}^*) - 
\begin{bmatrix}
\Lambda_1^{\top} & \Lambda_2^{\top}
\end{bmatrix}
s_{\lambda}(w;\eta^*).
\end{align}
Using~\eqref{eq:Omega_star} we compute
\begin{align*}
\begin{bmatrix}
\Lambda_1^{\top} & \Lambda_2^{\top}
\end{bmatrix}
& s_{\lambda}(w;\eta^*) \nonumber \\
& = \Omega^*(r)[y-m^*(r)][\Lambda_1^{\top}D_{\mu}(x;\beta^*) + \Lambda_2^{\top}D_{\phi}(x;\theta^*)] \\
& = s\Delta(z,x,y;\eta^*)\frac{e^*(x)(1-e^*(x))}{V_{10}^*(x)e^*(x)+V_{11}^*(x)(1-e^*(x))}[\Lambda_1^{\top}D_{\mu}(x;\beta^*) + \Lambda_2^{\top}D_{\phi}(x;\theta^*)] \\
& \quad + (1-s)\frac{z(1-q^*(x))-(1-z)q^*(x)}{V_{00}^*(x)q^*(x)+V_{01}^*(x)(1-q^*(x))}(y-m^*(r))[\Lambda_1^{\top}D_{\mu}(x;\beta^*) + \Lambda_2^{\top}D_{\phi}(x;\theta^*)] \\
& = s\Delta(z,x,y;\eta^*)\left[e^*(x)\Gamma_1^*(x)-\frac{1-p^*(x)}{p^*(x)(1-\rho^*)}\right] \\
& \quad + (1-s)\Gamma_0^*(x)\left(z-(1-z)\frac{q^*(x)}{1-q^*(x)}\right)(y-m^*(r))
\end{align*}
where the last equality follows by~\eqref{eq:Gamma_0_star_tau_obs} and~\eqref{eq:Gamma_1_star_tau_obs}.
Plugging into~\eqref{eq:yang_eif_final} verifies that indeed, $\varphi_{\eff,\obs}^{(3)} = \tilde{\varphi}_{\eff,\obs}^{(3)}$.

\subsection{Linear confounding bias}
\label{app:linear_confounding_bias}

Here we derive the EIF for an arbitrary scalar estimand $\tau$, which can be estimated by an RAL estimator with influence function $\varphi_0$,
in the linear confounding bias model $\cp_{\cm_4}$.

As argued in the main text, $\cm_4$ is a linear space so $\cs_{\cm_4}=\cm_4$ by Proposition~\ref{prop:S_M_equals_M}.
The remainder of the EIF derivation closely follows the outcome-mediated selection bias example in Section~\ref{sec:projections}.
As in that example,
we note $\cs_{\cm_4}^{\perp}$ is precisely set of all functions $f \in \ch_R$ for which the projection $g=\Pi(f;\cs_{\cm_4})$ is zero.
Given $f$,
by the form of $\cs_{\cm_4}=\cm_4$ we can write
\[
g(r) = s(1-z)g_1(x)+(1-s)zg_2(x)+(1-s)(1-z)g_3(x)+sz\kappa(x;\beta,g_1,g_2,g_3)
\]
where $g_1,g_2,g_3$ in $\ch_X$ and $\beta \in \real^q$ minimize the objective
\begin{align*}
\mathcal{O}(g_1,g_2,g_3,\beta) & = \e\left[\left(f(R)-S(1-Z)g_1(X)-(1-S)Zg_2(X)- (1-S)(1-Z)g_3(X)-SZ\kappa(X;\beta,g_1,g_2,g_3)\right)^2\right] \\
& = \e[p^*(X)e^*(X)(f_{11}(X)-\kappa(X;\beta,g_1,g_2,g_3))^2 + p^*(X)(1-e^*(X))(f_{10}(X)-g_1(X))^2] \\
& \quad + \e[(1-p^*(X))r^*(X)(f_{01}(X)-g_2(X))^2 + (1-p^*(X))(1-r^*(X))(f_{00}(X)-g_3(X))^2]
\end{align*}
over $\ch_X^3 \times \real^q$,
for
\[
\kappa(x;\beta,g_1,g_2,g_3) = g_1(x)+g_2(x)-g_3(x)+\psi(x)^{\top}\beta.
\]
The second equality follows from writing
\[
f(R) = SZf_{11}(X) + S(1-Z)f_{10}(X) + (1-S)Zf_{01}(X) + (1-S)(1-Z)f_{00}(X)
\]
and conditioning on $X$.
With the objective now an expectation of a function of $X$,
for fixed $\beta$
we can minimize over $g_1$, $g_2$, and $g_3$ pointwise to obtain the first-order conditions
\begin{align*}
0 & = -p^*(x)e^*(x)(f_{11}(x)-\kappa(x;\beta,g_1,g_2,g_3)) - p^*(x)(1-e^*(x))(f_{10}(x)-g_1(x)) \\
0 & = -p^*(x)e^*(x)(f_{11}(x)-\kappa(x;\beta,g_1,g_2,g_3)) -(1-p^*(x))r^*(x)(f_{01}(x)-g_2(x)) \\
0 & = p^*(x)e^*(x)(f_{11}(x)-\kappa(x;\beta,g_1,g_2,g_3))-(1-p^*(x))(1-r^*(x))(f_{00}(x)-g_3(x))
\end{align*}
for all $x \in \cx$.
Further,
setting the partial derivative of the objective with respect to $\beta$ equal to zero yields the additional first-order condition 
\[
\e[p^*(X)e^*(X)(f_{11}(X)-\kappa(X;\beta,g_1,g_2,g_3))\psi(X)]=0.
\]
Rearranging,
we see these first-order conditions are satisfied by $g=0$ (i.e. $g_1=g_2=g_3=0$, $\beta=0$) if and only if 
\begin{align}
(1-e^*(x))f_{10}(x) & = -e^*(x)f_{11}(x) \label{eq:T_M_perp_1} \\
(1-p^*(x))r^*(x)f_{01}(x) & = -p^*(x)e^*(x)f_{11}(x) \label{eq:T_M_perp_2}  \\
(1-p^*(x))(1-r^*(x))f_{00}(x) & = p^*(x)e^*(x)f_{11}(x) 
\label{eq:T_M_perp_3} \\
0 & = \e\left[p^*(X)e^*(X)f_{11}(X)\psi(X)\right] \label{eq:T_M_perp_4} 
\end{align}
for all $x \in \cx$.
So $\cs_{\cm_4}$ is the set of all functions $f \in \ch_R$ satisfying these equations,
and by Theorem~\ref{thm:semiparametric_tangent_space}
\[
\ct_{\cm_4}^{\perp} = \{w \mapsto f(r)(y-m^*(r)) \mid f \in \ch_R \text{ satisfies~\eqref{eq:T_M_perp_1},~\eqref{eq:T_M_perp_2},~\eqref{eq:T_M_perp_3}, and \eqref{eq:T_M_perp_4}}.
\]
Noting that~\eqref{eq:T_M_perp_1},~\eqref{eq:T_M_perp_2}, and~\eqref{eq:T_M_perp_3}
determine $f_{10}$, $f_{01}$, and $f_{00}$ in terms of $f_{11}$ and letting $\nu(x)=p^*(x)e^*(x)f_{11}(x)$,
we can write this succinctly as
\begin{equation}
\label{eq:T_M_1_perp}
\ct_{\cm_4}^{\perp} = \{w \mapsto h(r;\eta^*)\nu(x)(y-m^*(r)) \mid \nu \in \ch_X, \ \e[\nu(X)\psi(X)]=0\}.
\end{equation}
where $h(r;\eta)$ is as in~\eqref{eq:f_1}.

To compute the projection~\eqref{eq:eif_raw},
as in the other examples we condition on $X$ to write the objective as the expectation of a function of $X$ and minimize this function pointwise.
Unlike in the projection for computing $\varphi_{\eff}^{(5)}$ in the main text,
but like the computation in Appendix~\ref{app:restricted_moment_model} for the EIF in the restricted moment model,
we deal with the constraints using Lagrange multipliers. 
By~\eqref{eq:T_M_perp_generalized},
an arbitrary element $g \in \ct_{\cm_4}^{\perp}$ takes the form 
\[
g(w)=h(r;\eta^*)\nu(x)(y-m^*(r))
\]
for some $\nu \in \ch_X$ with $\e[\nu(X)\psi(X)]=0$.
Then the EIF is $\varphi_{\eff}^{(4)}=\varphi_0-g_{\eff}^{(4)}$ where 
\[
g_{\eff}^{(4)}(w)=h(r;\eta^*)\nu^*(x;\varphi_0)(y-m^*(r))
\]
for $\nu^*(\cdot;\varphi_0)$ solving the optimization problem
\begin{align*}
\mbox{min. } & \e\left[\left(\varphi_0(W;\tau^*,\eta^*)-h(R;\eta^*)\nu(X)(Y-m^*(R))\right)^2\right] \mbox{over $\nu \in \ch_X$} \\
\mbox{s.t. } & \e[\nu(X)\psi(X)]=0
\end{align*}
We eliminate the constraint by introducing Lagrange multipliers $\lambda \in \real^q$,
and then rewrite the Lagrangian by conditioning on $X$:
\begin{align*}
\mbox{min. } & \e[\tilde{g}(X;\nu)] \mbox{ over $\nu \in \ch_X$} \mbox{ where } \\
\tilde{g}(X;\nu) & = \e\left[\left(\varphi_0(W;\tau^*,\eta^*)-h(R;\eta^*)\nu(X)(Y-m^*(R))\right)^2 \mid X\right]  \\
& = \nu(X)^2\e[h^2(R;\eta^*)(Y-m^*(R))^2\mid X] -2\nu(X)I(X;\eta^*,\tau^*,\varphi_0) -2\nu(X)\lambda^{\top}\psi(X) \
\end{align*}
Since $\tilde{g}(x;\nu)$ is quadratic in $\nu$,
the minimizer is
\[
\nu^*(x) = \frac{I(x;\eta^*,\tau^*,\varphi_0) + \lambda^{\top}\psi(x)}{\Sigma(x;\eta^*)}
\]
where $I(x;\eta,\tau,\varphi_0)$ is given by~\eqref{eq:I_x}.
This matches the formula for $\nu(x;\eta^*,\tau^*,\varphi_0)$ in~\eqref{eq:zeta_1_star},
upon noting that
\[
p^*(x)e^*(x)\e[h^2(R;\eta^*)(Y-m^*(R))^2 \mid X=x]=\Sigma(x;\eta^*)
\]
for $\Sigma(x;\eta)$ as in~\eqref{eq:Sigma_star}.
We solve for the Lagrange multiplier $\lambda$
using the constraint 
\[
\e[\nu^*(X;\nu_0)\psi(X)]=0
\]
which gives $\lambda = \lambda(\eta^*,\tau^*,\eta_0)$ as in~\eqref{eq:lambda}.
Thus
\[
g_{\eff}^{(4)}(w) = h(r;\eta)\nu(x;\eta^*,\tau^*,\varphi_0)(y-m^*(r))
\]
which shows~\eqref{eq:eif_4}.

\subsection{Known RCT propensity scores}
\label{app:propensity_score_does_not_help}
We characterize the EIF $\tilde{\varphi}_{\eff}$ for any pathwise differentiable estimand $\tau \in \real$ in the model $\tilde{\cp}_{\cm}$ 
in terms of the EIF $\varphi_{\eff}$ for the analogous model $\cp_{\cm}$ where the RCT propensity score $e^*$ is unknown.
Our result will work for any mean function collection $\cm$ satisfying the conditions of Corollary~\ref{cor:known_propensity_score}.

It is easy to see that the space
\[
\tilde{\cu} = \{w \mapsto sh(x)(z-e^*(x)) \mid h \in \ch_X\}
\]
is orthogonal to $\ct_{\cm}^{\perp}$.
Hence
\[
\tilde{\varphi}_{\eff} = \varphi_0 - \Pi(\varphi_0;\tilde{\ct}_{\cm}^{\perp}) = \varphi_0-\Pi(\varphi_0;\ct_{\cm}^{\perp})- \Pi(\varphi_0;\tilde{\cu}) = \varphi_{\eff} - \Pi(\varphi_0;\tilde{\cu}) 
\]
where as in the main text,
$\varphi_0$ is the influence function of any initial RAL estimator.
Thus it suffices to compute $\Pi(\varphi_0;\tilde{\cu})$.
By definition
\[
\Pi(\varphi_0;\tilde{\cu})(w) = s\nu^*(x)(z-e^*(x))
\]
where
\[
\nu^* = \argmin_{h \in \ch_X} \e\left[\left(\varphi_0(W)-Sh(X)(Z-e^*(X))\right)^2\right].
\]
As usual, we expand the right-hand side and condition on $X$:
\begin{align*}
& \argmin_{h \in \ch_X}  \e\left[\left(\varphi_0(W)-Sh(X)(Z-e^*(X))\right)^2\right] \\
& \quad = \argmin_{h \in \ch_X} \e\left[\e[S(Z-e^*(X))^2 \mid X]h(X)^2 -2\e[S(Z-e^*(X))\varphi_0(W) \mid X]h(X)\right]
\end{align*}
We compute
\[
\e[S(Z-e^*(X))^2 \mid X] = p^*(X)e^*(X)(1-e^*(X)) 
\]
and so by minimizing pointwise we conclude
\[
\nu^*(x) = \frac{\e[S(Z-e^*(X))\varphi_0(W) \mid X=x]}{p^*(x)e^*(x)(1-e^*(x))}.
\]

For $\varphi_0 \in \{\varphi_{0,\rct},\varphi_{0,\obs},\varphi_{0,\tgt}\}$ we have
\[
\e[S(Z-e^*(X))\varphi_0(W) \mid X] = 0
\]
and hence $\nu^*(x)=0$.
Thus,
knowing the RCT propensity score does not improve the semiparametric efficiency bound for estimating $\tau_{\rct}$, $\tau_{\obs}$, or $\tau_{\tgt}$ in any semiparametric model restricting the outcome mean function.
In particular,
the EIF's $\varphi_{\eff}^{(k)}, k=1,2,3,4,5$ in the models $\cp_{\cm_k}$ for these estimands are also the EIF's in the corresponding models $\tilde{\cp}_{\cm_k}$.

\section{Conditions for efficiency}
\label{app:efficiency_conditions}
In this section,
we provide specific primitive conditions under which the cross-fit one-step estimators $\hat{\tau}_{\eff}^{(4)}$ and $\hat{\tau}_{\eff}^{(5)}$ defined in Section~\ref{sec:applications} indeed attain the efficiency bound,
by proving that these conditions imply those of the generalized Lemma in Section~\ref{app:proof_dml_master}.
One condition is the following overlap assumption;
a version of overlap is needed to even identify causal estimands in the nonparametric model.
\begin{assumption}[Overlap]
\label{assump:overlap}
There exists $\delta>0$ such that $e^*(x) \in [\delta,1-\delta]$ whenever $p^*(x)>0$,
$r^*(x) \in [\delta,1-\delta]$,
and $p^*(x) \in \{0\} \cup [\delta,1-\delta]$ for all $x \in \cx$.
\end{assumption}
\begin{remark}
The condition on $p^*(x)$ in Assumption~\ref{assump:overlap} presumes that the support of the covariates in the RCT is contained within the support of the covariates of the observational dataset.
Unlike typical analyses,
we allow this inclusion to be strict since assumptions like linear confounding bias or outcome mediated selection permit extrapolation of the CATE beyond the support of the covariates in the RCT. 
\end{remark}

We also need the nuisance estimates to satisfy a similar overlap condition,
as well as $o_p(N^{-1/4})$ root-mean-square convergence rates,
typical in the double machine learning literature.

\begin{assumption}
\label{assump:regularity}
Cross-fit estimates of the nuisance functions $e^*$, $m_{10}^*$, $m_{01}^*$, $m_{00}^*$, $r^*$, and $p^*$
satisfy Assumption~\ref{assump:overlap}
along with the following for each $k=1,\ldots,K$:
\begin{itemize}
\item (Convergence of infinite-dimensional nuisance estimates)
\begin{align}
\label{eq:nuisance_rates}
& \|\hat{e}^{(-k)}-e^*\|_{2,P^*} +  \|\hat{m}_{sz}^{(-k)}-m_{sz}^*\|_{2,P^*} +  \|\hat{r}^{(-k)}-r^*\|_{2,P^*} +  \|\hat{p}^{(-k)}-p^*\|_{2,P^*} \\
& \quad = o_p(N^{-1/4}), \quad (s,z) \neq (1,1)
\end{align}
\item (Overlap in infinite-dimensional nuisance estimates)
With probability tending to 1 as $N \rightarrow \infty$,
the statement of Assumption~\ref{assump:overlap} holds with $e^*$, $r^*$, and $p^*$ replaced by their estimates $\hat{e}^{(-k)}$, $\hat{r}^{(-k)}$, and $\hat{p}^{(-k)}$,
respectively.
\end{itemize}
\end{assumption}

\subsection{Linear confounding bias}
\label{app:linear_confounding_bias_conditions}
\begin{proposition*}
\label{prop:cross_fit_dml}
Suppose Assumption~\ref{assump:overlap} holds and $m^* \in \cm_4$.
Consider a function
\[
g(w;\eta) = h(r;\eta)\zeta(x;\eta)(y-m(r)) + s(z-e^*(x))h_2(x;\eta)
\]
where $\eta$ includes $(p(\cdot),e(\cdot),m(\cdot),q(\cdot),V(\cdot))$,
$h(r;\eta)$ is as in~\eqref{eq:f_1},
$h_2(\cdot;\eta^*) \in \ch_X$,
and
$\zeta^*(x)=\zeta(x;\eta^*)$ satisfies $\e[\zeta^*(X)\psi(X)]=0$
so that $g^*(\cdot) = g(\cdot;\eta^*) \in \tilde{\ct}_{\cm_4}^{\perp}$ by~\eqref{eq:T_M_1_perp} and Corollary~\ref{cor:known_propensity_score}.
Further suppose that there are cross-fit estimates $\hat{\eta}^{(-k)}$ of $\eta^*$ satisfying Assumption~\ref{assump:regularity} with
\begin{equation}
\label{eq:m_hat_respect}
\hat{m}_{11}^{(-k)}(x) = \hat{m}_{10}^{(-k)}(x)+\hat{m}_{01}^{(-k)}(x)-\hat{m}_{00}^{(-k)}(x) + \psi(x)^{\top}\hat{\theta}^{(-k)}, \quad k=1,\ldots,K
\end{equation}
where $\|\hat{\theta}^{(-k)}-\theta^*\|_{2,P^*}=o_p(N^{-1/4})$.
Finally, assume that the rate conditions
$\|\zeta(\cdot;\hat{\eta}^{(-k)})-\zeta(\cdot;\eta^*)\|_{2,P^*} = o_p(N^{-1/4})$ and
$\|h_2\left(\cdot;\hat{\eta}^{(-k)}\right)-h_2\left(\cdot;\eta^*\right)\|_{2,P^*} = o_p(1)$ hold,
and additionally that
$V^*(s,z,x) \leq C$ and
$\max(|\zeta(x;\eta^*)|, |\zeta(x,\hat{\eta}^{(-k)})|) \leq C$ for all $(s,z) \in \{0,1\}^2$, $x \in \cx$, and folds $k=1,\ldots,K$.
Then the Lemma in Section~\ref{app:proof_dml_master} holds with $\cm=\cm_4$,
$g(\cdot;\tau^*,\eta^*) = g(\cdot;\eta^*)$,
and
$g\left(\cdot;\hat{\tau}^{(-k)},\hat{\eta}^{(-k)}\right)=g\left(\cdot;\hat{\eta}^{(-k)}\right)$.
\end{proposition*}

\begin{proof}
We must show the boundedness, approximate tangency, and rate conditions in the Lemma of Appendix~\ref{app:proof_dml_master} with
\begin{align*}
h_1^*(r) & = h(r;\eta^*)\zeta(x;\eta^*), \quad \hat{h}_1^{(-k)}(r) = h\left(r;\hat{\eta}^{(-k)}\right)\zeta\left(x;\hat{\eta}^{(-k)}\right) \\
h_2^*(x) & = h_2(x;\eta^*), \quad \hat{h}_2^{(-k)}(x) = h_2\left(x;\hat{\eta}^{(-k)}\right).
\end{align*}

\noindent
\textbf{Boundedness:} We have $V^*(s,z,x) \leq C$ for $P^*$-almost all $(s,z,x)$ by assumption.
Furthermore by the overlap condition (Assumption~\ref{assump:overlap}) 
it is clear that $h(r;\eta^*)$
is uniformly upper bounded for $P^*$-almost all $r$, 
a qualifier we omit hereafter in all inequalities and equalities in this proof,
in absolute value by $C\delta^{-2}$.
Similarly, by the overlap in infinite-dimensional nuisance estimates condition in Assumption~\ref{assump:regularity},
we have $h\left(r;\hat{\eta}^{(-k)}\right)$ is uniformly upper bounded in absolute value by $C\delta^{-2}$ for all $k=1,\ldots,K$.
With $\max\left(|\zeta(x;\eta^*)|,\Big|\zeta\left(x;\hat{\eta}^{(-k)}\right)\Big|\right) \leq C$ for all $k$,
we conclude that $\max\left(\|h_1^*\|_{\infty,P^*},\left\|\hat{h}_1^{(-k)}\right\|_{\infty,P^*}\right) \leq C^2\delta^{-2}$. 

\noindent
\textbf{Approximate tangency:} 
Approximate tangency holds trivially with $\hat{R}^{(-k)}=0$ in view of the observation that $\hat{m}^{(-k)} \in \cm_4=\cs_{\cm_4}$ for all $k$ by~\eqref{eq:m_hat_respect},
so that $\hat{m}^{(-k)}-m \in \cs_{\cm_4}$ as well since $\cs_{\cm_4}$ is a linear space.

\noindent
\textbf{Rate conditions:}
Fixing $k \in \{1,\ldots,K\}$,
we have $\|\hat{m}^{(-k)}-m^*\|_{2,P^*}=o_p(N^{-1/4})$ by~\eqref{eq:nuisance_rates}
and~\eqref{eq:m_hat_respect} in light of the assumption $\|\hat{\theta}^{(-k)}-\theta^*\|=o_p(N^{-1/4})$.
With $\|\hat{h}_2^{(-k)}-h_2^*\|_{2,P^*} = o_p(1)$ by assumption,
to show the rate conditions it suffices to show that $\|\hat{h}_1^{(-k)}-h_1^*\|_{2,P^*} = o_p(N^{-1/4})$.
To that end,
we write
\[
\hat{h}_1^{(-k)}(r)-h_1^*(r) = h\left(r;\hat{\eta}^{(-k)}\right)\zeta\left(x;\hat{\eta}^{(-k)}\right) - h(r;\eta^*)\zeta(x;\eta^*)
\]
and note that since $|h(r;\eta^*)| \leq C$ as shown above
and we have $\max\left(|\zeta(x;\eta^*)|,\Big|\zeta\left(x;\hat{\eta}^{(-k)}\right)\Big|\right) \leq C$ and $\|\zeta(x;\hat{\eta}^{(-k)})-\zeta(x;\eta^*)\|_{2,P^*} = o_p(N^{-1/4})$ by assumption,
it suffices by Lemma~\ref{lemma:multiplication} to show that
\begin{equation}
\label{eq:h_rate}
\int \left(h\left(r;\hat{\eta}^{(-k)}\right)-h(r;\eta^*)\right)^2 dP^*(w) = o_p(N^{-1/2}).
\end{equation}
We compute
\begin{align*}
& \left(h\left(r;\hat{\eta}^{(-k)}\right)-h(r;\eta^*)\right)^2 \\
& = sz\left(\frac{p^*(x)e^*(x)-\hat{p}^{(-k)}(x)\hat{e}^{(-k)}(x)}{\hat{p}^{(-k)}(x)\hat{e}^{(-k)}(x)p^*(x)e^*(x)}\right)^2 + s(1-z)\left(\frac{p^*(x)(1-e^*(x))-\hat{p}^{(-k)}(x)(1-\hat{e}^{(-k)}(x))}{\hat{p}^{(-k)}(x)(1-\hat{e}^{(-k)}(x))p^*(x)(1-e^*(x))}\right)^2  \nonumber \\
& \quad + (1-s)z\left(\frac{(1-p^*(x))r^*(x)(1-\hat{p}^{(-k)}(x))(\hat{r}^{(-k)}(x))}{(1-\hat{p}^{(-k)}(x))\hat{r}^{(-k)}(x)(1-p^*(x))r^*(x)}\right)^2 \nonumber \\
& \quad + (1-s)(1-z)\left(\frac{(1-p^*(x))(1-r^*(x))-(1-\hat{p}^{(-k)}(x))(1-\hat{r}^{(-k)}(x))}{(1-\hat{p}^{(-k)}(x))(1-\hat{r}^{(-k)}(x))(1-p^*(x))(1-r^*(x))}\right)^2. \nonumber
\end{align*}
In light of~\eqref{eq:nuisance_rates},
we apply Lemma~\ref{lemma:multiplication} four times to show that
\begin{align*}
\int \left(\hat{p}^{(-k)}(x)\hat{e}^{(-k)}(x)-p^*(x)e^*(x)\right)^2 dP^*(w) & = o_p(N^{-1/2}) \\
\int \left(\hat{p}^{(-k)}(x)(1-\hat{e}^{(-k)}(x))-p^*(x)(1-e^*(x))\right)^2 dP^*(w) & = o_p(N^{-1/2}) \\
\int \left((1-\hat{p}^{(-k)}(x))\hat{r}^{(-k)}(x)-(1-p^*(x))r^*(x)\right)^2 dP^*(w) & = o_p(N^{-1/2}) \\
\int \left((1-\hat{p}^{(-k)}(x))(1-\hat{r}^{(-k)}(x))-(1-p^*(x))(1-r^*(x))\right)^2 dP^*(w) & = o_p(N^{-1/2})
\end{align*}
Then by Assumption~\ref{assump:overlap} and overlap in infinite-dimensional nuisance estimates,
we conclude that indeed~\eqref{eq:h_rate} holds.
\end{proof}

\subsection{Outcome-mediated selection bias}

\begin{proposition*}
Suppose Assumption~\ref{assump:overlap} holds and $m^* \in \cm_5$.
Consider a function
\[
g(w;\eta) = f(r;\eta)\zeta(x;\eta)(y-m(r)) + s(z-e^*(x))h_2(x;\eta)
\]
where $\eta$ includes $(p(\cdot),e(\cdot),m(\cdot),q(\cdot))$,
$f(r;\eta)$ is as in~\eqref{eq:f},
$h_2(\cdot;\eta^*) \in \ch_X$,
and
$\zeta^*(\cdot)=\zeta(\cdot;\eta^*) \in \ch_X$
so that $g^*(\cdot) = g(\cdot;\eta^*) \in \tilde{\ct}_{\cm_5}^{\perp}$ by~\eqref{eq:T_M_perp_outcome_selection_bias} and Corollary~\ref{cor:known_propensity_score}.
Further suppose that there are cross-fit estimates $\hat{\eta}^{(-k)}$ of $\eta^*$ satisfying Assumption~\ref{assump:regularity} with
\begin{equation}
\label{eq:m_hat_respect_selection}
\hat{m}_{11}^{(-k)}(x) = \ell^{-1}\left(\ell\left(\hat{m}_{10}^{(-k)}(x)\right)+\ell\left(\hat{m}_{01}^{(-k)}(x)\right)-\ell\left(\hat{m}_{00}^{(-k)}(x)\right)\right), \quad k=1,\ldots,K.
\end{equation}
Finally, assume that the rate conditions
$\|\zeta(\cdot;\hat{\eta}^{(-k)})-\zeta(\cdot;\eta^*)\|_{2,P^*} = o_p(N^{-1/4})$ and
$\|h_2\left(\cdot;\hat{\eta}^{(-k)}\right)-h_2\left(\cdot;\eta^*\right)\|_{2,P^*} = o_p(1)$ hold,
and additionally that
$\max(|\zeta(x;\eta^*)|, |\zeta(x,\hat{\eta}^{(-k)})|) \leq C$ for all $(s,z) \in \{0,1\}^2$, $x \in \cx$, and folds $k=1,\ldots,K$.
Then the Lemma in Section~\ref{app:proof_dml_master} holds with $\cm=\cm_5$,
$g(\cdot;\tau^*,\eta^*) = g(\cdot;\eta^*)$,
and
$g\left(\cdot;\hat{\tau}^{(-k)},\hat{\eta}^{(-k)}\right)=g\left(\cdot;\hat{\eta}^{(-k)}\right)$.
\end{proposition*}
\begin{proof}
The proof is similar in structure to the argument in Appendix~\ref{app:linear_confounding_bias_conditions}.
We must show the boundedness, approximate tangency, and rate conditions in the Lemma of Appendix~\ref{app:proof_dml_master} with
\begin{align*}
h_1^*(r) & = f(r;\eta^*)\zeta(x;\eta^*), \quad \hat{h}_1^{(-k)}(r) = f\left(r;\hat{\eta}^{(-k)}\right)\zeta\left(x;\hat{\eta}^{(-k)}\right) \\
h_2^*(x) & = h_2(x;\eta^*), \quad \hat{h}_2^{(-k)}(x) = h_2\left(x;\hat{\eta}^{(-k)}\right).
\end{align*}

\noindent
\textbf{Boundedness:} We have $V^*(s,z,x) \leq 1$ trivially since $Y \in [0,1]$.
Furthermore by the overlap condition (Assumption~\ref{assump:overlap}) 
it is clear that $f(r;\eta^*)$
is uniformly upper bounded for $P^*$-almost all $r$, 
a qualifier we omit hereafter in all inequalities and equalities in this proof,
in absolute value by $C\delta^{-2}$.
Similarly, by the overlap in infinite-dimensional nuisance estimates condition in Assumption~\ref{assump:regularity},
we have $f\left(r;\hat{\eta}^{(-k)}\right)$ is uniformly upper bounded in absolute value by $C\delta^{-2}$ for all $k=1,\ldots,K$.
With $\max\left(|\zeta(x;\eta^*)|,\Big|\zeta\left(x;\hat{\eta}^{(-k)}\right)\Big|\right) \leq C$ for all $k$,
we conclude that $\max\left(\|h_1^*\|_{\infty,P^*},\left\|\hat{h}_1^{(-k)}\right\|_{\infty,P^*}\right) \leq C^2\delta^{-2}$. 
\end{proof}

\noindent
\textbf{Approximate tangency:}
Fix $k \in \{1,\ldots,K\}$
and consider the one-dimensional outcome mean function parametric submodel
\begin{align*}
\tilde{m}(s,z,x;t) & = m^*(s,z,x) + t\left(\hat{m}^{(-k)}(s,z,x)-m^*(s,z,x)\right), \quad (s,z) \neq (1,1)  \\
\tilde{m}(1,1,x;t) & = \ell^{-1}(\ell(\tilde{m}(1,0,x;t))+\ell(\tilde{m}(0,1,x;t))-\ell(\tilde{m}(0,0,x;t)))
\end{align*}
indexed by $t$ in an open subset of $\real$ containing $[0,1]$,
where we note that $\tilde{m}(s,z,x;1) = \hat{m}^{(-k)}(s,z,x)$ for all $(s,z)$ (including for $(s,z)=(1,1)$ by~\eqref{eq:m_hat_respect_selection}).
The overlap and nuisance in overlap conditions ensure that this is a well-defined submodel in the sense that it does not violate the restriction $\epsilon < m < 1-\epsilon$ in the definition of $\cm_5$ in~\eqref{eq:outcome_selection_bias}.

By Taylor's theorem,
for each $x$
there exists $\tilde{t}=\tilde{t}(x) \in [0,1]$ for which
\begin{align*}
\hat{m}^{(-k)}(1,1,x) -m^*(1,1,x) & = \tilde{m}(1,1,x;1)-\tilde{m}(1,1,x;0) \\
&= \frac{\partial \tilde{m}(1,1,x;t)}{\partial t} \Big|_{t=0} + \frac{1}{2} \frac{\partial^2 \tilde{m}(1,1,x;t)}{\partial t^2} \Big|_{t=\tilde{t}(x)}
\end{align*}
which implies that for all $(s,z,x)$ we have
\[
\hat{m}^{(-k)}(s,z,x)-m^*(s,z,x)-\hat{R}^{(-k)}(s,z,x) = \frac{\partial \tilde{m}(s,z,x;t)}{\partial t} \Big|_{t=0} 
\]
for
\[
\hat{R}^{(-k)}(s,z,x) = \frac{sz}{2} \frac{\partial^2 \tilde{m}(1,1,x;t)}{\partial t^2} \Big|_{t=\tilde{t}(x)}.
\]
since evidently
\[
\frac{\partial \tilde{m}(s,z,x;t)}{\partial t} = m^{(-k)}(s,z,x) - m^*(s,z,x)
\]
for $(s,z) \neq (1,1)$.
Hence $\hat{m}^{(-k)}-m^*-\hat{R}^{(-k)} \in \cs_{\cm_1}$ and
it remains to show the necessary rate conditions on $\hat{R}^{(-k)}$.

By the chain rule
\begin{align*}
\frac{\partial \tilde{m}(1,1,x;t)}{\partial t} & = \frac{ \sum_{(s,z) \neq (1,1)} \ell'(\tilde{m}(s,z,x;t)) \left(\hat{m}^{(-k)}_{sz}(x)-m^*_{sz}(x)\right)}{\ell'(\tilde{m}(1,1,x;t))}.
\end{align*}
and then by the quotient rule
\begin{align*}
\frac{\partial^2 \tilde{m}(1,1,x;t)}{\partial t^2} & = \frac{\sum_{(s,z) \neq (1,1)} \ell''(\tilde{m}(s,z,x;t))\left(\hat{m}^{(-k)}_{sz}(x)-m^*_{sz}(x)\right)^2  - \ell''(\tilde{m}(1,1,x;t))\left(\frac{\partial \tilde{m}(1,1,x;t)}{\partial t}\right)^2}{\ell'(\tilde{m}(1,1,x;t))}.
\end{align*}
We also note that for each $(s,z), (s',z') \in \{0,1\}^2$, we have
\begin{equation}
\label{eq:m_error_product}
\frac{1}{|\ci_k|} \sum_{i \in \ci_k} S_iZ_i  \left|\hat{m}_{sz}^{(-k)}(X_i)-m_{sz}^*(X_i)\right|\left|\hat{m}_{s'z'}^{(-k)}(X_i)-m_{s'z'}^*(X_i)\right| = o_p(N^{-1/2})
\end{equation}
by the Cauchy-Schwarz inequality,
the rate condition $\left\|\hat{m}^{(-k)}-m^*\right\|_{2,P^*} = o_p(N^{-1/4})$,
and Lemma~\ref{lemma:conditional_oh_pee_fn_f}.

Note $\ell'$ and $\ell''$ are uniformly bounded, i.e. there exist $0 < c < C < \infty$ for which
\begin{align}
\label{eq:bounded_derivatives}
c \leq \ell'(\tilde{m}(s,z,x;t)) \leq C, \quad |\ell''(\tilde{m}(s,z,x;t))| \leq C
\end{align}
for all $(s,z,x)$, $t \in [0,1]$.
Hence
\begin{align*}
& \frac{1}{|\ci_k|} \sum_{i \in \ci_k} \Big|\hat{R}^{(-k)}(R_i) \Big| \\
& \quad \leq \frac{C}{2c|\ci_k|} \sum_{i \in \ci_k} S_iZ_i  \left[\sup_{t \in [0,1]} \left(\frac{\partial \tilde{m}(1,1,X_i;t)}{\partial t}\right)^2 + \sum_{(s,z) \neq (1,1)} \left(\hat{m}_{sz}^{(-k)}(X_i)-m^*_{sz}(X_i)\right)^2\right].
\end{align*}
But
\begin{align*}
& \frac{1}{|\ci_k|} \sum_{i \in \ci_k} S_iZ_i \sup_{t \in [0,1]} \left(\frac{\partial \tilde{m}(1,1,X_i;t)}{\partial t}\right)^2 \\
& \quad \leq \frac{1}{c^2|\ci_k|} \sum_{i \in \ci_k} S_iZ_i \sup_{t \in [0,1]} \left(\sum_{(s,z) \neq (1,1)} \ell'(\tilde{m}(s,z,X_i;t))\left(\hat{m}_{sz}^{(-k)}(X_i)-m_{sz}^*(X_i)\right) \right)^2 \\
& = \frac{1}{c^2|\ci_k|} \sum_{i \in \ci_k} S_iZ_i  \sup_{t \in [0,1]} \sum_{(s,z) \neq (1,1)} \sum_{(s',z') \neq (1,1)} \ell'(\tilde{m}(s,z,X_i;t))\ell'(\tilde{m}(s',z',X_i;t))\left(\hat{m}_{sz}^{(-k)}(X_i)-m_{sz}^*(X_i)\right) \times \\
& \qquad \left(\hat{m}_{s'z'}^{(-k)}(X_i)-m_{s'z'}^*(X_i)\right) \\
& \leq \frac{C^2}{c^2|\ci_k|} \sum_{i \in \ci_k} \sum_{(s,z) \neq (1,1)} \sum_{(s',z') \neq (1,1)}  S_iZ_i|\hat{m}_{sz}^{(-k)}(X_i)-m_{sz}^*(X_i)||\hat{m}_{s'z'}^{(-k)}(X_i)-m_{s'z'}^*(X_i)| \\
& = o_p(N^{-1/2}) \text{ by~\eqref{eq:m_error_product}}
\end{align*}
while
\[
\frac{1}{|\ci_k|} \sum_{i \in \ci_k} \sum_{(s,z) \neq (1,1)}  S_iZ_i \left(\hat{m}_{sz}^{(-k)}(X_i)-m_{sz}^*(X_i)\right)^2 = o_p(N^{-1/2}) \text{ by~\eqref{eq:m_error_product}.}
\]
This completes the proof of the required rate conditions on $\hat{R}^{(-k)}$. \\

\noindent
\textbf{Rate conditions:}
Fixing $k \in \{1,\ldots,K\}$,
we have $\|\hat{m}^{(-k)}-m^*\|_{2,P^*}=o_p(N^{-1/4})$ by~\eqref{eq:nuisance_rates}
and~\eqref{eq:m_hat_respect_selection}.
With $\|\hat{h}_2^{(-k)}-h_2^*\|_{2,P^*} = o_p(1)$ by assumption,
to show the rate conditions it suffices to show that $\|\hat{h}_1^{(-k)}-h_1^*\|_{2,P^*} = o_p(N^{-1/4})$.
To that end,
we write
\[
\hat{h}_1^{(-k)}(r)-h_1^*(r) = f\left(r;\hat{\eta}^{(-k)}\right)\zeta\left(x;\hat{\eta}^{(-k)}\right) - f(r;\eta^*)\zeta(x;\eta^*)
\]
and note that since $|f(r;\eta^*)| \leq C$ as shown above
and we have $\max\left(|\zeta(x;\eta^*)|,\Big|\zeta\left(x;\hat{\eta}^{(-k)}\right)\Big|\right) \leq C$ and $\|\zeta(x;\hat{\eta}^{(-k)})-\zeta(x;\eta^*)\|_{2,P^*} = o_p(N^{-1/4})$ by assumption,
it suffices by Lemma~\ref{lemma:multiplication} to show that
\begin{equation}
\label{eq:f_rate}
\int \left(f\left(r;\hat{\eta}^{(-k)}\right)-f(r;\eta^*)\right)^2 dP^*(w) = o_p(N^{-1/2}).
\end{equation}
We compute
\begin{align*}
& \left(f\left(s,z,x;\hat{\eta}^{(-k)}\right)-f(s,z,x;\eta^*)\right)^2 \\
& = sz\left(\ell\left(\hat{m}_{11}^{(-k)}(x)\right)-\ell(m_{11}^*(x))\right)^2 \\
& \quad + s(1-z)\left(\frac{\hat{e}^{(-k)}(x)\ell'\left(\hat{m}_{10}^{(-k)}(x)\right)}{1-\hat{e}^{(-k)}(x)}-\frac{e^*(x)\ell'(m_{10}^*(x))}{1-e^*(x)}\right)^2  \nonumber \\
& \quad + (1-s)z\left(\frac{\hat{p}^{(-k)}(x)\hat{e}^{(-k)}(x)\ell'\left(\hat{m}_{01}^{(-k)}(x)\right)}{(1-\hat{p}^{(-k)}(x))\hat{q}^{(-k)}(x)}-\frac{p^*(x)e^*(x)\ell'(m_{01}^*(x))}{(1-p^*(x))q^*(x)}\right)^2 \nonumber \\
& \quad + (1-s)(1-z)\left(\frac{\hat{p}^{(-k)}(x)\hat{e}^{(-k)}(x)\ell'\left(\hat{m}_{00}^{(-k)}(x)\right)}{(1-\hat{p}^{(-k)}(x))(1-\hat{q}^{(-k)}(x))}-\frac{p^*(x)e^*(x)\ell'(m_{00}^*(x))}{(1-p^*(x))(1-q^*(x))}\right)^2. \nonumber
\end{align*}

By~\eqref{eq:nuisance_rates} and~\eqref{eq:bounded_derivatives}
\begin{equation}
\label{eq:l_prime_rates}
\|\ell' \circ \hat{m}_{sz}^{(-k)} - \ell' \circ m^*\|_{2,P^*} = o_p(N^{-1/4}), \quad \forall (s,z) \in \{0,1\}^2
\end{equation}
Taking $(s,z)=(1,1)$ above shows
\[
\int  sz\left(\ell\left(\hat{m}_{11}^{(-k)}(x)\right)-\ell(m_{11}^*(x))\right)^2 dP^*(w) = o_p(N^{-1/2}).
\]
For the remaining three terms,
by Assumption~\ref{assump:overlap},
overlap in infinite-dimensional nuisance estimates,
and~\eqref{eq:nuisance_rates},
we know by Lemma~\ref{lemma:multiplication} that
\[
\int \left(\frac{\hat{e}^{(-k)}(x)}{1-\hat{e}^{(-k)}(x)}-\frac{e^*(x)}{1-e^*(x)}\right)^2 dP^*(w) = \int \left(\frac{\hat{e}^{(-k)}(x)-e^*(x)}{(1-e^*(x))(1-\hat{e}^{(-k)}(x))}\right)^2 = o_p(N^{-1/2}).
\]
Similarly
\[
\int \left(\frac{\hat{p}^{(-k)}(x)}{1-\hat{p}^{(-k)}(x)}-\frac{p^*(x)}{1-p^*(x)}\right)^2 dP^*(w) = \int \left(\frac{\hat{p}^{(-k)}(x)-p^*(x)}{(1-p^*(x))(1-\hat{p}^{(-k)}(x))}\right)^2 dP^*(w) = o_p(N^{-1/2}).
\]
and 
\[
\int \left(\frac{1}{1-\hat{r}^{(-k)}(x)}-\frac{1}{1-r^*(x)}\right)^2 dP^*(w) + \int \left(\frac{1}{\hat{r}^{(-k)}(x)}-\frac{1}{r^*(x)}\right)^2 dP^*(w)= o_p(N^{-1/2}).
\]
Then applying Lemma~\ref{lemma:multiplication} recursively we conclude
\begin{align*}
\int s(1-z)\left(\frac{\hat{e}^{(-k)}(x)\ell'\left(\hat{m}_{10}^{(-k)}(x)\right)}{1-\hat{e}^{(-k)}(x)}-\frac{e^*(x)\ell'(m_{10}^*(x))}{1-e^*(x)}\right)^2 dP^*(w) & = o_p(N^{-1/2}) \\
\int (1-s)z\left(\frac{\hat{p}^{(-k)}(x)\hat{e}^{(-k)}(x)\ell'\left(\hat{m}_{01}^{(-k)}(x)\right)}{(1-\hat{p}^{(-k)}(x))\hat{r}^{(-k)}(x)}-\frac{p^*(x)e^*(x)\ell'(m_{01}^*(x))}{(1-p^*(x))r^*(x)}\right)^2 dP^*(w) & = o_p(N^{-1/2}), \quad \text{ and} \\
\int (1-s)(1-z)\left(\frac{\hat{p}^{(-k)}(x)\hat{e}^{(-k)}(x)\ell'\left(\hat{m}_{00}^{(-k)}(x)\right)}{(1-\hat{p}^{(-k)}(x))(1-\hat{r}^{(-k)}(x))}-\frac{p^*(x)e^*(x)\ell'(m_{00}^*(x))}{(1-p^*(x))(1-r^*(x))}\right)^2 dP^*(w) & = o_p(N^{-1/2})
\end{align*}
as desired.

\section{Details of numerical simulations in Section~\ref{sec:simulations_data_fusion}}
\label{app:sim_details}
Here we specify the full data generating processes for the numerical simulations in Section~\ref{sec:simulations_data_fusion}.


For the simulations in Section~\ref{sec:simulations_data_fusion},
in the discrete scenario,
we emulate the data generating process in Section 5 of~\citet{guo2022multi}:
\begin{align*}
X=(X_1,X_2) & \sim \bern(0.5) \times \bern(0.5) \\
Z \mid X & \sim \bern(X_1-X_2) \\
Y \mid X,Z & \sim \bern\left(\expit\left(-0.5+Z+(1-2Z)(X_1-X_2)\right)\right).
\end{align*}
For the continuous scenario, we have
\begin{align*}
X=(X_1,X_2) & \sim \mathcal{\dunif}(-1,1) \times \mathcal{\dunif}(-1,1) \\
Z \mid X & \sim \bern(X_1-X_2) \\
Y \mid X,Z & \sim \bern\left(\expit\left(-0.5+Z+(1-2Z)(X_1-X_2)+(1.5Z-1)X_1X_2\right)\right).
\end{align*}
To generate the simulated observational dataset,
we draw observations from these DGP's with the selection process described in Section~\ref{sec:simulations_data_fusion} until 3000 selected observations are obtained.
Then additional observations from the above DGP's are generated without selection to form the RCT.

\section{Implementation detail for control variate estimator}
\label{app:cv_detail}
As indicated in the main text,
our implementation of the control variate estimator $\hat{\tau}_{\cv}$
follows the guidance of~\citet{guo2022multi}.
In the discrete scenario simulation,
the differences $\{OR_1(x)-OR_0(x) \mid x \in \{0,1\}^2\}$ are used as four control variates.
In the continuous scenario simulation and the Spambase data example,
we use the average of $\log(OR_1(x))-\log(OR_0(x))$ over 50 randomly chosen distinct values of the covariates $x$ in the combined RCT and observational datasets as a single control variate.
As the odds ratios are determinstic transformations of the outcome mean function $m$,
we can estimate all of these control variates
via deterministic transformations of an outcome mean function estimator $\hat{m}$.
We use the same model for $\hat{m}$ as the cross-fit outcome mean function estimates $\hat{m}^{(-k)}$ used for computing the baseline estimator $\hat{\tau}_{\ba}$
(i.e. the +1/+2 estimator in the discrete simulation scenario,
MARS in the continuous simulation scenario
)
except that we refit the model to the entire dataset
(i.e. avoiding cross-fitting)
as cross-fitting is not known to improve estimation of mean functions at a single point.

We can write
\[
\hat{\tau}_{\cv} = \hat{\tau}_{\ba} - \hat{\Gamma}\hat{\lambda}
\]
where $\hat{\tau}_{\ba}$ is the initial baseline estimator
(AIPW or AIPSW in all simulations and the data example),
$\hat{\lambda}$ is the estimate of the control variate(s) viewed as a vector,
and $\hat{\Gamma}$ is an estimate of the optimal adjustment factor $\cov(\hat{\lambda},\hat{\tau}_{\ba}) \cov(\hat{\lambda})^{-1} $.
The same adjustment $\hat{\Gamma}$ is used for all Monte Carlo simulations in each simulation scenario.
This adjustment is computed by calculating $\hat{\tau}_{\ba}$ and $\hat{\lambda}$ in each of $B=1000$ Monte Carlo replicates of the data-generating process of the scenario,
independent of the Monte Carlo replicates from which we report results,
and then extracting the appropriate components of the sample covariance matrix.
Similarly, for the Spambase data example,
we compute $\hat{\Gamma}$ using the sample covariance matrix of estimates $\hat{\tau}_{\ba}$ and $\hat{\lambda}$ computed on  $B=1000$ bootstrapped replications of the dataset,
independent of the bootstrap replications on which we report results.
This is exactly what is done by~\citet{guo2022multi}.
We remark that in practice,
an investigator will not have access to additional, 
independent data on which to compute a regression adjustment independently of the data,
and even if they did,
they would want to use it for causal estimation to gain full efficiency.
Thus,
we believe a more accurate estimate of the performance of $\hat{\tau}_{\cv}$ would be reflected by computing a different adjustment factor $\hat{\Gamma}$ each time $\hat{\tau}_{\cv}$ is computed.
We presume this is not done in~\citet{guo2022multi}
due to the computation time
required
(each computation of $\hat{\tau}_{\ba}$ requires estimating nuisance functions,
so estimating separate adjustments $\hat{\Gamma}$ would require fitting $B$ times as many nuisance function estimates).

\bibliography{combine_obs_rct}

\begin{thebibliography}{42}
\providecommand{\natexlab}[1]{#1}
\providecommand{\url}[1]{\texttt{#1}}
\expandafter\ifx\csname urlstyle\endcsname\relax
  \providecommand{\doi}[1]{doi: #1}\else
  \providecommand{\doi}{doi: \begingroup \urlstyle{rm}\Url}\fi

\bibitem[Athey et~al.(2023)Athey, Bickel, Chen, Imbens, and Pollmann]{athey_2023_semiparametric}
S.~Athey, P.~J. Bickel, A.~Chen, G.~W. Imbens, and M.~Pollmann.
\newblock Semi-parametric estimation of treatment effects in randomised experiments.
\newblock \emph{Journal of the Royal Statistical Society Series B: Statistical Methodology}, 85\penalty0 (5):\penalty0 1615--1638, 2023.

\bibitem[Bareinboim and Pearl(2016)]{bareinboim2016causal}
E.~Bareinboim and J.~Pearl.
\newblock Causal inference and the data-fusion problem.
\newblock \emph{Proceedings of the National Academy of Sciences}, 113\penalty0 (27):\penalty0 7345--7352, 2016.

\bibitem[Benkeser and van~der Laan(2016)]{benkeser2016highly}
D.~Benkeser and M.~van~der Laan.
\newblock The highly adaptive lasso estimator.
\newblock In \emph{2016 IEEE international conference on data science and advanced analytics (DSAA)}, pages 689--696. IEEE, 2016.

\bibitem[Bickel et~al.(1993)Bickel, Klaassen, Ritov, and Wellner]{bickel1993efficient}
P.~J. Bickel, C.~A. Klaassen, Y.~Ritov, and J.~A. Wellner.
\newblock \emph{Efficient and {Adaptive} {Estimation} for {Semiparametric} {Models}}, volume~4.
\newblock Springer, 1993.

\bibitem[Chen et~al.(2021)Chen, Zhang, and Ye]{chen_minimax_2021}
S.~Chen, B.~Zhang, and T.~Ye.
\newblock Minimax rates and adaptivity in combining experimental and observational data, Sept. 2021.
\newblock URL \url{http://arxiv.org/abs/2109.10522}.
\newblock arXiv:2109.10522 [math, stat].

\bibitem[Cheng and Cai(2021)]{cheng_adaptive_2021}
D.~Cheng and T.~Cai.
\newblock Adaptive combination of randomized and observational data, Nov. 2021.
\newblock URL \url{http://arxiv.org/abs/2111.15012}.
\newblock arXiv:2111.15012 [stat].

\bibitem[Chernozhukov et~al.(2018)Chernozhukov, Chetverikov, Demirer, Duflo, Hansen, Newey, and Robins]{chernozhukov2018double}
V.~Chernozhukov, D.~Chetverikov, M.~Demirer, E.~Duflo, C.~Hansen, W.~Newey, and J.~Robins.
\newblock Double/debiased machine learning for treatment and structural parameters.
\newblock \emph{The Econometrics Journal}, 21\penalty0 (1):\penalty0 C1--C68, 2018.

\bibitem[Chernozhukov et~al.(2022)Chernozhukov, Newey, and Singh]{chernozhukov2022automatic}
V.~Chernozhukov, W.~K. Newey, and R.~Singh.
\newblock Automatic debiased machine learning of causal and structural effects.
\newblock \emph{Econometrica}, 90\penalty0 (3):\penalty0 967--1027, 2022.

\bibitem[Colnet et~al.(2024)Colnet, Mayer, Chen, Dieng, Li, Varoquaux, Vert, Josse, and Yang]{colnet2024causal}
B.~Colnet, I.~Mayer, G.~Chen, A.~Dieng, R.~Li, G.~Varoquaux, J.-P. Vert, J.~Josse, and S.~Yang.
\newblock Causal inference methods for combining randomized trials and observational studies: a review.
\newblock \emph{Statistical Science}, 39\penalty0 (1):\penalty0 165--191, 2024.

\bibitem[Dahabreh et~al.(2019)Dahabreh, Robertson, Tchetgen, Stuart, and Hern{\'a}n]{dahabreh_generalizing_2019}
I.~J. Dahabreh, S.~E. Robertson, E.~J. Tchetgen, E.~A. Stuart, and M.~A. Hern{\'a}n.
\newblock Generalizing causal inferences from individuals in randomized trials to all trial-eligible individuals.
\newblock \emph{Biometrics}, 75\penalty0 (2):\penalty0 685--694, 2019.

\bibitem[Dahabreh et~al.(2020)Dahabreh, Robertson, Steingrimsson, Stuart, and Hern{\'a}n]{dahabreh_extending_2020}
I.~J. Dahabreh, S.~E. Robertson, J.~A. Steingrimsson, E.~A. Stuart, and M.~A. Hern{\'a}n.
\newblock Extending inferences from a randomized trial to a new target population.
\newblock \emph{Statistics in Medicine}, 39\penalty0 (14):\penalty0 1999--2014, 2020.

\bibitem[Dang et~al.(2023)Dang, Tarp, Abrahamsen, Kvist, Buse, Petersen, and van~der Laan]{dang_cross-validated_2023}
L.~E. Dang, J.~M. Tarp, T.~J. Abrahamsen, K.~Kvist, J.~B. Buse, M.~Petersen, and M.~van~der Laan.
\newblock A cross-validated targeted maximum likelihood estimator for data-adaptive experiment selection applied to the augmentation of {RCT} control arms with external data, Feb. 2023.
\newblock URL \url{http://arxiv.org/abs/2210.05802}.
\newblock arXiv:2210.05802 [stat].

\bibitem[Friedman(1991)]{friedman1991multivariate}
J.~H. Friedman.
\newblock Multivariate adaptive regression splines.
\newblock \emph{The Annals of Statistics}, 19\penalty0 (1):\penalty0 1--67, 1991.

\bibitem[Gagnon-Bartsch et~al.(2023)Gagnon-Bartsch, Sales, Wu, Botelho, Erickson, Miratrix, and Heffernan]{gagnon-bartsch_precise_2023}
J.~A. Gagnon-Bartsch, A.~C. Sales, E.~Wu, A.~F. Botelho, J.~A. Erickson, L.~W. Miratrix, and N.~T. Heffernan.
\newblock Precise unbiased estimation in randomized experiments using auxiliary observational data.
\newblock \emph{Journal of Causal Inference}, 11\penalty0 (1), Jan. 2023.
\newblock Publisher: De Gruyter.

\bibitem[Guo et~al.(2022)Guo, Wang, Ding, Wang, and Jordan]{guo2022multi}
W.~Guo, S.~L. Wang, P.~Ding, Y.~Wang, and M.~Jordan.
\newblock Multi-source causal inference using control variates under outcome selection bias.
\newblock \emph{Transactions on Machine Learning Research}, 2022.

\bibitem[Hahn(1998)]{hahn1998role}
J.~Hahn.
\newblock On the role of the propensity score in efficient semiparametric estimation of average treatment effects.
\newblock \emph{Econometrica}, pages 315--331, 1998.

\bibitem[Hines et~al.(2022)Hines, Dukes, Diaz-Ordaz, and Vansteelandt]{hines2022demystifying}
O.~Hines, O.~Dukes, K.~Diaz-Ordaz, and S.~Vansteelandt.
\newblock Demystifying statistical learning based on efficient influence functions.
\newblock \emph{The American Statistician}, 76\penalty0 (3):\penalty0 292--304, 2022.

\bibitem[Huang et~al.(2023)Huang, Egami, Hartman, and Miratrix]{huang_leveraging_2023}
M.~Huang, N.~Egami, E.~Hartman, and L.~Miratrix.
\newblock Leveraging population outcomes to improve the generalization of experimental results: {Application} to the {JTPA} study.
\newblock \emph{The Annals of Applied Statistics}, 17\penalty0 (3), 2023.

\bibitem[Kallus et~al.(2018)Kallus, Puli, and Shalit]{kallus_removing_2018}
N.~Kallus, A.~M. Puli, and U.~Shalit.
\newblock Removing hidden confounding by experimental grounding.
\newblock In \emph{Advances in {Neural} {Information} {Processing} {Systems}}, volume~31, 2018.

\bibitem[Kennedy(2024)]{kennedy2024semiparametric}
E.~H. Kennedy.
\newblock Semiparametric doubly robust targeted double machine learning: a review.
\newblock \emph{Handbook of Statistical Methods for Precision Medicine}, pages 207--236, 2024.

\bibitem[Klosin(2021)]{klosin2021automatic}
S.~Klosin.
\newblock Automatic double machine learning for continuous treatment effects.
\newblock \emph{arXiv preprint arXiv:2104.10334}, 2021.

\bibitem[Lee et~al.(2023)Lee, Yang, Dong, Wang, Zeng, and Cai]{lee_improving_2023}
D.~Lee, S.~Yang, L.~Dong, X.~Wang, D.~Zeng, and J.~Cai.
\newblock Improving trial generalizability using observational studies.
\newblock \emph{Biometrics}, 79\penalty0 (2):\penalty0 1213--1225, 2023.

\bibitem[Lee and Schuler(2025)]{lee2025rieszboost}
K.~J. Lee and A.~Schuler.
\newblock Riesz{B}oost: Gradient boosting for {R}iesz regression.
\newblock \emph{arXiv preprint arXiv:2501.04871}, 2025.
\newblock URL \url{https://arxiv.org/abs/2501.04871}.
\newblock arxiv:2501.04871[stat].

\bibitem[Li et~al.(2023{\natexlab{a}})Li, Hong, and Stuart]{li_note_2023}
F.~Li, H.~Hong, and E.~A. Stuart.
\newblock A note on semiparametric efficient generalization of causal effects from randomized trials to target populations.
\newblock \emph{Communications in Statistics - Theory and Methods}, 52\penalty0 (16):\penalty0 5767--5798, 2023{\natexlab{a}}.

\bibitem[Li and Luedtke(2023)]{li2023efficient}
S.~Li and A.~Luedtke.
\newblock Efficient estimation under data fusion.
\newblock \emph{Biometrika}, 110\penalty0 (4):\penalty0 1041--1054, 2023.

\bibitem[Li et~al.(2025)Li, Gilbert, Duan, and Luedtke]{li2025data}
S.~Li, P.~B. Gilbert, R.~Duan, and A.~Luedtke.
\newblock Data fusion using weakly aligned sources.
\newblock \emph{Journal of the American Statistical Association}, pages 1--28, 2025.

\bibitem[Li et~al.(2023{\natexlab{b}})Li, Miao, Lu, and Zhou]{li_improving_2023}
X.~Li, W.~Miao, F.~Lu, and X.-H. Zhou.
\newblock Improving efficiency of inference in clinical trials with external control data.
\newblock \emph{Biometrics}, 79\penalty0 (1):\penalty0 394--403, 2023{\natexlab{b}}.

\bibitem[Liao et~al.(2023)Liao, H{\o}jbjerre-Frandsen, Hubbard, and Schuler]{liao_transfer_2023}
L.~D. Liao, E.~H{\o}jbjerre-Frandsen, A.~E. Hubbard, and A.~Schuler.
\newblock Transfer learning with efficient estimators to optimally leverage historical data in analysis of randomized trials, Nov. 2023.
\newblock URL \url{http://arxiv.org/abs/2305.19180}.
\newblock arXiv:2305.19180 [stat].

\bibitem[Lin et~al.(2025)Lin, Tarp, and Evans]{lin_many_2023}
X.~Lin, J.~M. Tarp, and R.~J. Evans.
\newblock Combining experimental and observational data through a power likelihood.
\newblock \emph{Biometrics}, 81\penalty0 (1), 2025.

\bibitem[Milborrow(2011)]{earthpackage}
S.~Milborrow.
\newblock \emph{earth: Multivariate Adaptive Regression Splines}, 2011.
\newblock URL \url{http://CRAN.R-project.org/package=earth}.
\newblock R package.

\bibitem[Newey(1990)]{newey1990semiparametric}
W.~K. Newey.
\newblock Semiparametric efficiency bounds.
\newblock \emph{Journal of Applied Econometrics}, 5\penalty0 (2):\penalty0 99--135, 1990.

\bibitem[Oberst et~al.(2023)Oberst, D'Amour, Chen, Wang, Sontag, and Yadlowsky]{oberst_understanding_2023}
M.~Oberst, A.~D'Amour, M.~Chen, Y.~Wang, D.~Sontag, and S.~Yadlowsky.
\newblock Understanding the risks and rewards of combining unbiased and possibly biased estimators, with applications to causal inference, 2023.
\newblock URL \url{http://arxiv.org/abs/2205.10467}.
\newblock arXiv:2205.10467 [stat].

\bibitem[Robins et~al.(1994)Robins, Rotnitzky, and Zhao]{robins_1994_estimation}
J.~M. Robins, A.~Rotnitzky, and L.~P. Zhao.
\newblock Estimation of regression coefficients when some regressors are not always observed.
\newblock \emph{Journal of the American Statistical Association}, 89\penalty0 (427):\penalty0 846--866, 1994.

\bibitem[Rosenman et~al.(2023)Rosenman, Basse, Owen, and Baiocchi]{rosenman2023combining}
E.~T. Rosenman, G.~Basse, A.~B. Owen, and M.~Baiocchi.
\newblock Combining observational and experimental datasets using shrinkage estimators.
\newblock \emph{Biometrics}, 79\penalty0 (4):\penalty0 2961--2973, 2023.

\bibitem[Rothe(2016)]{rothe_value_2016}
C.~Rothe.
\newblock The value of knowing the propensity score for estimating average treatment effects.
\newblock \emph{SSRN Electronic Journal}, 2016.
\newblock URL \url{https://www.ssrn.com/abstract=2797560}.

\bibitem[Schuler et~al.(2024)Schuler, Hagemeister, and van~der Laan]{schuler2024highly}
A.~Schuler, A.~Hagemeister, and M.~van~der Laan.
\newblock Highly adaptive ridge.
\newblock \emph{arXiv preprint arXiv:2410.02680}, 2024.
\newblock URL \url{https://arxiv.org/abs/2410.02680}.
\newblock arXiv:2410.02680 [stat].

\bibitem[Tan(2007)]{tan_comment_2007}
Z.~Tan.
\newblock Comment: {Understanding} {OR}, {PS} and {DR}.
\newblock \emph{Statistical Science}, 22\penalty0 (4), Nov. 2007.

\bibitem[Tsiatis(2006)]{tsiatis2006semiparametric}
A.~A. Tsiatis.
\newblock \emph{Semiparametric {Theory} and {Missing} {Data}}.
\newblock Springer, 2006.

\bibitem[Van~der Vaart(2000)]{vandervaart2000asymptotic}
A.~W. Van~der Vaart.
\newblock \emph{Asymptotic statistics}, volume~3.
\newblock Cambridge {University} {Press}, 2000.

\bibitem[Yang et~al.(2023)Yang, Gao, Zeng, and Wang]{yang2023elastic}
S.~Yang, C.~Gao, D.~Zeng, and X.~Wang.
\newblock Elastic integrative analysis of randomised trial and real-world data for treatment heterogeneity estimation.
\newblock \emph{Journal of the Royal Statistical Society Series B: Statistical Methodology}, 85\penalty0 (3):\penalty0 575--596, 2023.

\bibitem[Yang et~al.(2024)Yang, Liu, Zeng, and Wang]{yang2024datafusion}
S.~Yang, S.~Liu, D.~Zeng, and X.~Wang.
\newblock Data fusion methods for the heterogeneity of treatment effect and confounding function, 2024.
\newblock URL \url{https://arxiv.org/abs/2007.12922}.
\newblock arXiv:2007.12922 [stat].

\bibitem[Zheng and van~der Laan(2011)]{van_der_laan_cross-validated_2011}
W.~Zheng and M.~J. van~der Laan.
\newblock Cross-validated targeted minimum-loss-based estimation.
\newblock In \emph{Targeted {Learning}}, pages 459--474. Springer New York, New York, NY, 2011.

\end{thebibliography}
\bibliographystyle{abbrvnat}

\end{document}